\documentclass{article}





    \usepackage[nonatbib, final]{neurips_2022}

\usepackage[utf8]{inputenc} 
\usepackage[T1]{fontenc}    
\usepackage{hyperref}       
\usepackage{url}            
\usepackage{booktabs}       
\usepackage{amsfonts}       
\usepackage{nicefrac}       
\usepackage{microtype}      
\usepackage{xcolor}         

\usepackage{enumitem}
\usepackage{amssymb}
\usepackage{amsmath}
\usepackage{amsthm}
\usepackage{graphicx}
\usepackage{caption}
\usepackage{subcaption}
\usepackage{bm}
\usepackage{algpseudocode}
\usepackage{varwidth}
\usepackage{placeins}
\usepackage{mathtools} 
\usepackage{verbatim}
\newcount\Comments
\Comments=0
\usepackage{color}
\definecolor{darkgreen}{rgb}{0,0.5,0}
\newcommand{\kibitz}[2]{\ifnum\Comments=1{\color{#1}{#2}}\fi}

\newcommand{\ignore}[1]{}
\newtheorem{theorem}{Theorem}
\newtheorem*{theorem*}{Theorem} 
\newtheorem{lemma}{Lemma}

\newtheorem{proposition}{Proposition}
\newtheorem{observation}{Observation}

\newtheorem{definition}{Definition}

\algtext*{EndFor}
\algtext*{EndIf}
\DeclareMathOperator*{\argmax}{argmax} 


\title{How and Why to Manipulate Your Own Agent:\\ On the Incentives of Users of Learning Agents}

%

\author{%
  Yoav Kolumbus\\
  The Hebrew University of Jerusalem\\
  \texttt{yoav.kolumbus@mail.huji.ac.il} \\
	\And
	Noam Nisan\\
  The Hebrew University of Jerusalem\\
  \texttt{noam@cs.huji.ac.il} \\
}

\begin{document}

\maketitle

\begin{abstract}
The usage of automated learning agents is becoming increasingly prevalent in many online economic applications such as online auctions and automated trading. Motivated by such applications, this paper is dedicated to fundamental modeling and analysis of the strategic situations that the \emph{users} of automated learning agents are facing. We consider strategic settings where several users engage in a repeated online interaction, assisted by regret-minimizing learning agents that repeatedly play a ``game'' on their behalf. We propose to view the outcomes of the agents' dynamics as inducing a ``meta-game'' between the users. Our main focus is on whether users can benefit in this meta-game from ``manipulating'' their own agents by misreporting their parameters to them. We define a general framework to model and analyze these strategic interactions between users of learning agents for general games and analyze the equilibria induced between the users in three classes of games. We show that, generally, users have incentives to misreport their parameters to their own agents, and that such strategic user behavior can lead to very different outcomes than those anticipated by standard analysis. 
\end{abstract}


\bibliographystyle{splncs03}


\section{Introduction}

This paper deals with the following common type of scenario: several users engage in some strategic online interaction, where each of them is assisted by a learning agent. A typical example is advertisers that compete for advertising slots over some platform. Typically, each of these advertisers enters his key parameters into some advertiser-facing website, and then this website's ``agent'' participates on the advertiser's behalf in a sequence of auctions for ad slots.  Often, the platform designer provides this agent as its advertiser-facing user interface.  In cases where the platform's agent does not optimize sufficiently well for the advertiser (but rather, say, for the auctioneer), one would expect some other company to provide a better (for the advertiser) agent.

The basic model for such scenarios is as follows. There is an underlying $n$-person game that $n$ ``software learning agents'' will play repeatedly. Some of the parameters of this game, typically the players' utilities, are private to the $n$ ``human users,'' and each of these users enters his key parameters into his own software agent. From this point on, these software agents repeatedly play the game on behalf of their users, where each agent aims to optimize the utility of its owner according to the parameters reported to it by using some learning algorithm. A typical learning algorithm of this type will have at its core some regret-minimization algorithm \cite{blum2007learning,hart2000simple}, such as ``multiplicative weights'' \cite{arora2012multiplicative}, ``online gradient descent'' \cite{zinkevich2003online}, or some variant of fictitious play \cite{brown1951iterative,robinson1951iterative}, such as ``follow the perturbed leader'' \cite{hannan1957lapproximation,kalai2005efficient}.

On an intuitive level, these agents, playing between themselves, each aiming to maximize its owner's utility, will reach some kind of a game-theoretic equilibrium of the underlying game, and the average utilities of the agents over time will converge to the utilities of this equilibrium. Specifically, while it is known that the dynamics of regret-minimization algorithms may fail to converge to any equilibrium \cite{bailey2021stochastic,bailey2018multiplicative,hart2013simple,papadimitriou2018nash}, it is also known that the empirical play statistics of no-regret dynamics do approach what is sometimes called a {\em coarse correlated equilibrium} \cite{blum2007learning,hart2000simple,young2004strategic} (CCE). The notion of coarse correlated equilibrium is a generalized weaker notion than the Nash equilibrium,\footnote{For games in which there are at most two actions for each player this notion is equivalent to Aumann's correlated equilibrium \cite{aumann1974subjectivity,aumann1987correlated}, but for general games it is a weaker and more general notion.} and is essentially a formalization of the no-regret property for each agent. For formal definitions and further discussion and analysis of regret-minimization dynamics, see Appendix \ref{sec:convergence}.

While the dynamics of learning agents in repeated games and their convergence properties have been the focus of extensive studies since the early days of game theory \cite{blackwell1956analog,blackwell1954controlled,brown1951iterative,hannan1957lapproximation,robinson1951iterative,shapley1964some} and in later works, e.g., \cite{blum2007learning,daskalakis2021near,foster1997calibrated,freund1999adaptive,fudenberg1999conditional,hart2000simple,syrgkanis2015fast}, our focus is on a different question of analyzing the \emph{incentives of the users} of such agents. In our scenario where (human) users report some of their parameters to their (software) agents, the learning process of the agents depends on the parameters that the users report to them. This motivates the following question: \textbf{Given a game and a set of learning agents (one for each user), what parameters should the users report to their own learning agents?} 

Since in the reached equilibrium every agent is best-replying to the others' behavior, it would seem obvious that each user would maximize his utility by indeed {\em reporting his true parameters to his own agent}, thus indeed allowing the agent to optimize for the true utilities. However, as we show, on closer inspection, this is not necessarily true: while a user is guaranteed that his agent best-replies to the others' empirical play (in the sense of having low regret), this empirical play itself of the other agents and the outcome of the joint dynamics are dependent on the behavior of our user's own agent. Thus, the interactions between the agents induce a ``meta-game'' between the users, in which the users' actions are the parameters that they report to their own agents, and the users' utilities are determined by the long-term empirical average of the agents' joint dynamics. 

This is somewhat similar to the situation in classic repeated games: since agents respond to the previous actions of the other players, the ``dynamics'' do not necessarily give an equilibrium of the underlying single-shot game.  In fact, the folk theorem characterizes a wide set of equilibria of the repeated game that can be reached using the correct combination of punishments and rewards that each player uses to affect the others' behavior (see, e.g., \cite{benoit1984finitely,kalai2010commitment,osborne1994course,segal2007tit}). In some sense, the critical aspect here is a ``theory of mind'' that players have about each other, 
which enables them to understand how punishments and rewards affect the other players' play in the equilibrium of the repeated game.  

By contrast, in our case, regret-minimizing agents have no ``theory of mind'' since they are each blindly following their own regret-minimization strategy.  
In particular, a user's agent never aims to punish or reward another agent.  A user that aims to influence the outcome of the repeated play between regret-minimizing agents must in some sense have a ``theory of mind'' of regret-minimizing agents, both of his own agent as well
as of the other agents.  

In this paper we establish the first steps in the theoretical modeling and analysis of the incentives of users of such automated learning agents in online strategic  systems. We analyze three classes of games: dominance-solvable games, Cournot competition games, and opposing-interests games, and demonstrate how in all these different settings, users generally have incentives to manipulate their own learning agents by misreporting their parameters to them, and how equilibria of the users' game can have different properties and outcomes than those of the original underlying game. 
   
Our results underline the importance of considering and analyzing \emph{user incentives}: when a strategic system is accessible to its users through learning agents, its original properties are not necessarily preserved, and even strong notions like strict domination may no longer hold, and so a meta-game analysis is required in order to understand or anticipate the actual outcomes. Our results focus on demonstrating the types of phenomena that can happen due to strategic behavior of users of learning algorithms and on showcasing them in games that are as transparent for analysis and as simple as possible. The framework that we propose is general for any learning dynamics and can be studied in any game -- either analytically or by using simulations. In a companion paper \cite{kolumbus2021auctions} we use the model that we propose here to analyze repeated auctions with regret-minimizing agents, and show that these phenomena indeed occur also in the auction setting. That is, counterintuitively, in the (non-truthful) first-price auction users prefer to submit truthful reports to their agents, while in the (dominant-strategy truthful) second-price auction users have incentives to manipulate their own agents by submitting non-truthful reports of their valuations.

\section{The Meta-Game Model} 

We define the ``meta-game'' between ``users'' (``players'') in terms of the outcomes of the repeated games that their agents play on their behalf.  
In our definition, all that we formally need from these agents is that they repeatedly interact with each other, where at each point
in time each agent's algorithm has observed the past play of all agents in all previous time steps,
and then needs to determine its own next action based on this information (and the parameters given by its user).  Intuitively,
however, we are thinking of agents that aim to maximize some utility for their user, specifically, regret-minimizing agents.  

\begin{definition} 
A {\em user--agent meta-game} has the following ingredients. 
\vspace{-2pt}
\begin{itemize}[leftmargin=14pt]
\setlength\itemsep{2pt}
	\item {\bf Users and Agents}: We have $n$ ``human'' users, each with his fixed ``software'' learning agent. 
	\item {\bf Agent Strategy Spaces} $A_1,...,A_n$: The agents ``play'' against each other for a period of $T$ time steps, where in each step $t=1,...,T$ each agent 
	$i$ plays an action
	$a_i^t \in A_i$ and then gets as feedback the actual play of all the agents $\textbf{a}^t=(a_1^t,...,a_n^t)$.
	\item {\bf User Parameter Spaces} $P_1,...,P_n$: Each user $i$ inputs a ``declaration'' $p_i$$\in$$P_i$ into his agent.
	\item {\bf Agent Utility Functions} $u_1,...,u_n$: The utility function of agent $i$ is 
	$u_i:P_i \times A_1 \times \cdots \times A_n \rightarrow \mathcal{R}$.  At each time step $t$, each agent $i$ gets a utility of $u_i(p_i, \textbf{a}^t)$,
	where $p_i$ are the parameters given to it by its user.  Our agents aim to maximize this utility.
	\item {\bf User True Types} $s_1,...,s_n$: Each user has a ``true'' type $s_i \in P_i$ that describes the true utility of the user. The
	utility of user $i$ when his true type is $s_i$, and when each user $j$ declares $p_j \in P_j$ to his agent, is
	$U_i(s_i,p_1,...,p_n)=(${\small$\sum_{t=1}^T$}$u_i(s_i,\textbf{a}^t))/T$, where for each $t=1,...,T$, $\textbf{a}^t$ is the vector of actions played by the 
	agents at time step $t$, as defined above. 
	If the agents are randomized, then this expression is actually a random variable, and we 
	define $U_i$ to be its expectation. 
	\item {\bf The User Meta-Game}: Fixing the true parameters $(s_1,...,s_n)$, we get an $n$-person game between the users, 
	where user $i$'s strategy space is $P_i$ and his utility is $U_i(s_i, \cdot)$.  
\end{itemize}
\end{definition}
 
This model is formally defined for any game, fixed tuple of learning algorithms, and fixed ``horizon'' $T$, and for any such the ensuing meta-game can be directly studied by simulations. In order to proceed and theoretically study this meta-game as the horizon $T$ goes to infinity, we would require a setting where it is possible to theoretically analyze the utilities of the resulting dynamics both for the true types (parameters) $s_i$ and for possible deviations $p_i \ne s_i$.  The theoretical analysis in the following sections concerns all regret-minimization algorithms, in the ``limit'' $T \rightarrow \infty$, as we will study cases where the agents' game converges to a known CCE. For further discussion on the convergence of regret-minimizing agents, see Appendix \ref{sec:convergence}. 
We note that while our focus is on regret-minimization dynamics, all our results hold also under an alternative interpretation of our model in which the agents reach a CCE of the game with the declared parameters via an arbitrary black-box device.     

For the games studied, we will start with basic questions like what is the best reply of a player to another player in the meta-game, and then proceed to more advanced questions, specifically, in which cases is the truth $p_i=s_i$ a best reply in the meta-game, and when this is not the case, what is the Nash equilibrium of the meta-game. All proofs in this paper are deferred to the appendix.

\begin{definition}\label{def:manipulation-free}
A user--agent meta-game is called \emph{manipulation-free} if the truth-telling declaration profile, i.e., all users declaring $p_i=s_i$, is a Nash equilibrium of the meta-game.\footnote{This definition is for a fixed game that is defined by the true parameters $s_1,...,s_n$ (but in the context of given parameter spaces $P_1,...,P_n$). One may also naturally look at the family of games for all possible $s_1,...,s_n$ and discuss truthfulness in the sense used in mechanism design \cite{nisan2007introduction}, but we leave this for further follow-up work.}
\end{definition}

\section{Dominance-Solvable Games}\label{sec:DS-games}
The first class of games that we consider are dominance-solvable games. A main interest in the literature in studying these types of games has been as design objectives due to their stable  strategic structure, e.g., for voting mechanisms \cite{dhillon2004plurality,moulin1979dominance} and contract design \cite{babaioff2022optimal,halac2020raising,segal1999contracting}. Formally, games of this class have a unique pure Nash equilibrium that is also the single CCE, and thus we know that the empirical play statistics of any regret-minimization dynamics will provide, in the limit, the utilities of this equilibrium.
For completeness, we give here the definition of a dominance-solvable game. 
\begin{definition} \label{dominance-solvable game}
A game is called \emph{dominance solvable} if there exists an order of iterated elimination of strictly dominated strategies that leads to a single strategy profile (the unique Nash equilibrium). 
\end{definition} 

We show that even in these strategically simple games, users of learning agents face non-trivial strategic considerations, and, specifically, even users who have a dominant strategy in the game may obtain further gains (beyond their dominant-strategy outcome) by manipulating their own agents.     

We begin by demonstrating our agenda on a simple $2$$\times$$2$ two-person game where one of the players has a dominant strategy. The other player then has a strict best reply, and so the game is dominance-solvable. It is not difficult to see that any dynamics of regret-minimizing learning agents playing such a game will converge to the pure equilibrium, since the agent with the dominant strategy will learn to play only its dominating strategy, regardless of the actions of the second agent, and then the second agent will learn to best-reply to that. Formally, the time average of regret-minimization dynamics must converge to the (unique) Nash equilibrium as it is also the unique CCE. For further details, see Appendix \ref{sec:appendix-DS-games}.

This simple analysis also shows that the other player has no profitable manipulation: since the agent of the player with the dominant strategy can indeed learn to play it whatever the other player does, the other player can do no better than to best-reply to the dominant strategy.    

It may also seem intuitive that the player with the dominant strategy can have no profitable manipulation either, but this turns out to be false.  Consider the game depicted in Figure \ref{fig:game-matrices} (left), in which the row player has a dominant strategy to play the bottom row. The unique pure Nash equilibrium of this game gives the row player utility $u_1=2$. We now get to our point where the utility of each player is private to him (at least partially). In our example, suppose that the constants $1$ and $3$ in the game description are private to the two users, respectively, and that each learning agent gets the value of the parameter ($c$ and $d$, respectively) from their users.  After each agent gets its own parameter, the two agents then engage in repeated play. In this play, each agent minimizes regret for its owner {\em in the game with the parameters that were given to it} (rather than the true parameters that are known only to the user), as shown in Figure \ref{fig:game-matrices}  (right). If the row player declares instead of the true $c=1$ a value of, say, $c=5$, then the (declared) game has a unique mixed Nash equilibrium, which is also its unique CCE \cite{calvo2006set,moulin1978strategically}, where the row player plays the top row with probability $p=1/2$, and the column player plays the left column with probability $q=1/4$, giving the row player a utility of $u_1=3$.  
Declaring an even higher value $c \rightarrow \infty$ will decrease $q \rightarrow 0$, leading to a higher utility of $u_1 \rightarrow 3.5$.  

Once the row player has manipulated his input, the column player may also beneficially do so. The meta-game does not literally have a Nash equilibrium (as the strategy spaces of the players are a continuum and no continuity of utility is guaranteed), but it has an $\epsilon$-equilibrium\footnote{In an $\epsilon$-equilibrium no player can gain more than $\epsilon$ by deviating.} for any $\epsilon>0$: as $c \rightarrow \infty$ and $d=4-\delta$ with $\delta>0$ and $\delta \rightarrow 0$. In this case, $p \rightarrow 1$ and $q \rightarrow 0$, leading to an $\epsilon$-equilibrium with utilities $u_1 \rightarrow 3$, $u_2 \rightarrow 4$.

\begin{figure}[!t]
\centering
		\includegraphics[width=0.5\linewidth]{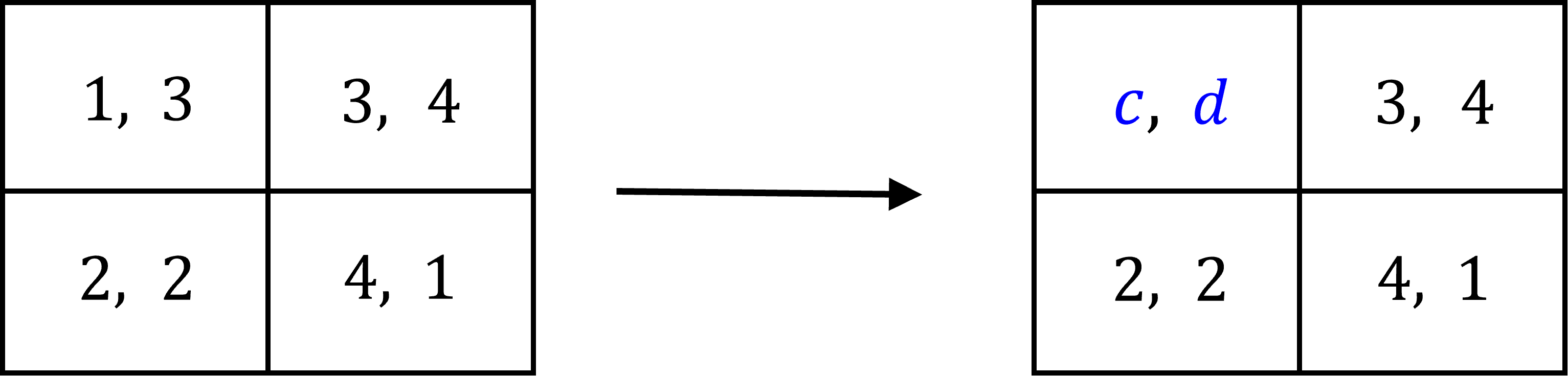}
		\caption{An example of parameter manipulation in a dominance-solvable game. Left: the payoff matrix of the true game. Right: the manipulated payoffs that the players provide to their learning agents. The row player selects the parameter $c$ and the column player selects the parameter $d$. }  
		\label{fig:game-matrices}
\vspace{-4pt}
\end{figure}

Notice that the outcome in this $\epsilon$-equilibrium of the meta-game strictly Pareto-dominates the Nash equilibrium of the original game, and thus we may say that the players managed to reach a cooperative outcome. The logic behind this cooperation is that the player with the dominant strategy is given the opportunity to take the point of view of a Stackelberg game where he goes first and the other player best-replies to his strategy. If the dominated strategy is the preferable strategy in the Stackelberg game, as it is in our example, then a manipulation can approach it. 

This example demonstrates how even in very simple games, users may have incentives to misreport their parameters to their agents, and that even a strong notion like strict domination does not guarantee the stability of truthful declarations. This phenomenon is in fact general for a large class of games:
 
\begin{theorem}\label{thm:DS-game-manipulations}
In any $n \times m$ game where one of the players has a dominant strategy, if the Stackelberg outcome of the game is different from the truthful Nash equilibrium outcome, then, for a sufficiently wide user parameter space, the game is not manipulation-free, and, specifically, the player with the dominant strategy has an incentive to manipulate his declaration.
\end{theorem}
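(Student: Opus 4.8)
The plan is to exhibit a single profitable deviation for the player holding the dominant strategy; this already breaks the truth-telling equilibrium of the meta-game. Assume w.l.o.g.\ that player $1$ (row) has the dominant strategy $r^{\ast}$, and let $c^{\ast}$ be player $2$'s best reply to it (unique, so that the game is dominance-solvable as in this section), so the truthful game has unique Nash/CCE outcome $(r^{\ast},c^{\ast})$ and $v^{N}:=u_1(s_1,r^{\ast},c^{\ast})$. When both users declare truthfully the agents play the game with true payoffs, whose unique CCE is $(r^{\ast},c^{\ast})$, so $U_1(s_1,s_1,s_2)\to v^{N}$ as $T\to\infty$. Hence it suffices to produce a declaration $p_1\in P_1$ with $U_1(s_1,p_1,s_2)$ converging to a value strictly above $v^{N}$ while keeping player $2$ truthful.

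\textbf{The target.} Let $v^{S}$ be player $1$'s Stackelberg value as leader (committing to a possibly mixed $\sigma_1^{\ast}$, with the follower best-replying and breaking ties in the leader's favor) and $c'$ the induced follower best reply. Committing to $r^{\ast}$ is feasible and yields exactly $v^{N}$, so $v^{S}\ge v^{N}$; since by hypothesis the Stackelberg outcome differs from $(r^{\ast},c^{\ast})$, we have $v^{S}>v^{N}$. Fix $0<\epsilon<v^{S}-v^{N}$. A standard tie-breaking/trembling argument lets us replace $\sigma_1^{\ast}$ by a nearby commitment $\sigma_1$ for which $c'$ is player $2$'s \emph{strict} best reply and $\sum_r\sigma_1(r)\,u_1(s_1,r,c')\ge v^{S}-\epsilon>v^{N}$. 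The aim is then to choose $p_1$ so that the no-regret dynamics, with agent~$1$ optimizing $u_1(p_1,\cdot)$, converge to (essentially) the product distribution $\sigma_1\times\delta_{c'}$.

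\textbf{The construction.} If $\sigma_1$ is pure, say $\sigma_1=\delta_{r'}$ (necessarily $r'\ne r^{\ast}$, else the Stackelberg and Nash outcomes coincide), player $1$ declares a $p_1$ that makes row $r'$ strictly dominant in $u_1(p_1,\cdot)$ --- possible for a sufficiently wide $P_1$. The declared game is then again dominance-solvable: agent~$1$ learns $r'$, agent~$2$ learns its unique best reply $c'$, and by convergence to the unique CCE we get $U_1(s_1,p_1,s_2)\to u_1(s_1,r',c')=v^{S}>v^{N}$. If $\sigma_1$ is genuinely mixed (and there are dominance-solvable games in which no pure commitment is gainful, so this case really arises), player $1$ instead declares a utility that makes every row outside $\mathrm{supp}(\sigma_1)$ strictly dominated and tunes the declared payoffs on the remaining rows --- using the freedom in $P_1$ to pick the payoff magnitudes, in the spirit of the single large entry in the Figure~\ref{fig:game-matrices} example --- so that the best-reply cycle created between player~$2$'s genuine incentives and player~$1$'s declared ones has a \emph{unique} CCE: a mixed profile with player~$1$'s marginal $\sigma_1$ and player~$2$'s marginal concentrated near $c'$. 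Uniqueness of this CCE is what guarantees that \emph{every} regret-minimizing dynamics converges to it, and one checks that player~$1$'s true payoff under it is within $O(\epsilon)$ of $\sum_r\sigma_1(r)u_1(s_1,r,c')$, hence still above $v^{N}$.

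\textbf{Where the difficulty lies.} The reduction and the Stackelberg-$\ge$-Nash/tie-breaking facts are routine; the real work is the mixed case --- engineering $u_1(p_1,\cdot)$ so that the declared game has a \emph{unique} coarse correlated equilibrium (so the conclusion holds for all regret-minimizing agents, not merely some) sitting close to the Stackelberg outcome. I would attack this either by reducing, via strict domination, to a subgame with two actions per player, where uniqueness of the CCE of the small game can be read off as in the $2\times2$ example, or by a limiting argument over a family of declarations whose unique CCEs approach $\sigma_1\times\delta_{c'}$; in either case only the limiting value must exceed $v^{N}$, which suffices for the strict improvement over truth-telling and hence for the claimed failure of manipulation-freeness.
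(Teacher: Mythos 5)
Your core argument is the paper's argument: the player with the dominant strategy declares payoffs under which his (pure) Stackelberg action is strictly dominant; with the opponent truthful the declared game is again dominance-solvable, its unique CCE is the Stackelberg outcome, so the manipulator's limiting meta-game utility is his Stackelberg value, which exceeds his truthful Nash value. The paper stops there, because it defines the Stackelberg outcome in this theorem as the \emph{pure-strategy} Stackelberg equilibrium (leader commits to a pure action), so the mixed-commitment case on which you spend most of your effort is not needed for the claim as intended. That extra part is also the one place where your write-up is not a proof: ``engineering'' a declared payoff function whose game has a \emph{unique} CCE with marginal $\sigma_1$ for the manipulator and mass near $c'$ for the opponent is a genuinely nontrivial construction (uniqueness of the CCE is exactly what fails in generic mixed games with more than two actions), and you only describe strategies for attacking it rather than carrying one out. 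Restricted to pure commitments, your proof is complete and coincides with the paper's; the added generality would require real additional work that neither you nor the paper supplies.
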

For the special case of $2$$\times$$2$ games this result holds even when users can manipulate only a single one of their parameters, and for specific subclasses of games it is possible to characterize the equilibria of the meta-game, as in our example above. Additionally, Theorem \ref{thm:DS-game-manipulations} is in fact even more general and applies not only to games with a dominant strategy, but to any dominance-solvable $n \times m$ game where some player has a Stackelberg value that is higher than his utility in the truthful Nash equilibrium.

\section{Cournot Competition Games}

The second class of games that we consider are the classic Cournot competition games \cite{cournot1838recherches,even2009convergence,mas1995microeconomic} with linear demand functions and linear production costs. These games are contained in a class of games called ``socially concave'' that were identified by \cite{even2009convergence}, who showed that for games of this class the time-average distribution of any regret-minimization dynamics converges to the unique Nash equilibrium of the game, and so we can confidently analyze the utilities obtained in the meta-game.

We consider a game between two firms that are competing for buyers by controlling the quantity that each of them produces. There is a demand function that specifies the market price for any given total quantity produced. In our case we assume that the demand function is linear; i.e., if the two firms produce quantities $q_1$ and $q_2$ respectively then the market price will be $a-b\cdot (q_1+q_2)$, where $a$ and $b$ are commonly known positive constants. The private parameter that each firm will have is its production costs, which we also assume are linear; i.e., firm $i$'s cost to produce quantity $q_i$ is exactly $c_i \cdot q_i$, where $0 \le c_i \le a$ is privately known to firm $i$. The utility of firm $1$ is given by $u_1(q_1,q_2)= q_1 \cdot (a-b\cdot (q_1+q_2) - c_1)$; similarly, $u_2(q_1,q_2)= q_2 \cdot (a-b\cdot (q_1+q_2) - c_2)$.

The Nash equilibrium of the game depends on the parameters as follows. 
If $a + c_2 - 2c_1 > 0$ and $a + c_1 - 2c_2>0$, then the Nash equilibrium is  
{\small
$q_1 = \frac{1}{3b}(a + c_2 - 2c_1)$
}
{\small  
and   
$q_2 = \frac{1}{3b}(a + c_1 - 2c_2)$. 
}
If $a + c_1 - 2c_2>0$ and $c_1 < a$, the Nash equilibrium is 
{\small
$q_1 = \frac{a-c_1}{2b}$
}
{\small  
and
$q_2 = 0$,   
}
and symmetrically, if $a + c_2 - 2c_1>0$ and $c_2 < a$, the equilibrium is $q_1 = 0$ and $q_2 = \frac{a-c_2}{2b}$. 
Otherwise, in the Nash equilibrium both players produce zero.

Thus, there are four parameter regions of interest associated with the four possible types of unique Nash equilibria of the game, as illustrated in Figure \ref{fig:cournot_regions}. The parameter region $A=\{c_1,c_2|a + c_2 - 2c_1 > 0, \ a + c_1 - 2c_2 > 0, \ c_1>0, \ c_2>0\}$ is the region where both agents produce positive quantities (the shaded areas in region $A$ in the figure relate to equilibria of the meta-game, as explained below). The parameter regions $B=\{c_1,c_2|a + c_1 - 2c_2 > 0, \  0 < c_1 < a, \ c_2>0\}$ and $C=\{c_1,c_2|a + c_2 - 2c_1 > 0, \  0 < c_2 < a, \ c_1>0\}$ are regions where only one player produces a positive quantity. In the remaining region, region $D=\{c_1,c_2|c_1, c_2 \geq a\}$, both agents produce zero. 

\begin{figure}[!t]
	\centering
		\includegraphics[width=0.4\textwidth]{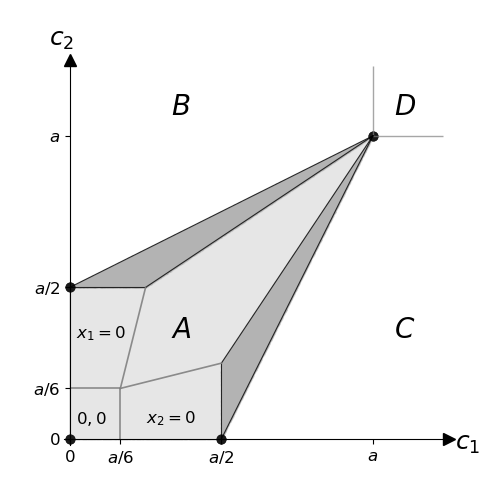}
	\caption{Parameter declaration regions in Cournot competition games.}
	\label{fig:cournot_regions}
\vspace{-6pt}
\end{figure}

As a running example, we will consider the case where $a=b=1$ and $c_1=c_2=1/2$, for which the standard analysis yields that in equilibrium each player produces the quantity $q_i=1/6$, the price is thus $2/3$, and the utility of each player is $1/36$.

We now turn to look at the meta-game in which each player reports his production cost to his own agent, and then the agents repeatedly play the declared game.  
That is, each player $i$ reports a declared cost $0 \le x_i \le a$ in the parameter space and then the agents reach the equilibria
with the {\em declared} costs. It turns out that in our example, firm $1$'s best reply to firm $2$'s true cost is $x_1=3/8$, rather than the
truth $c_1=1/2$, which increases its utility (calculated, of course, according to the true costs) to $1/32 > 1/36$.  
That is, the firm under-represents its production costs to its own agent, causing the agent to over-produce.  While this over-production by itself hurts our user (the firm), the benefit is that our user -- as opposed to its automated learning agent -- understands that this aggressive declaration will
lead to a reduction in the production of the other firm's agent, making up, and more, for the revenue loss from its own over-production.

Analyzing the equilibrium of the meta-game between the users, one obtains that the equilibrium 
is the declaration profile $x_1=x_2=2/5$, rather than the truthful declarations $c_1=c_2=1/2$. That is, the two players under-represent their production costs to their agents, causing them to over-produce, where each of them produces a quantity $q_i=1/5$, which is strictly higher than the production of $1/6$ obtained in the equilibrium of the truthful game. In the equilibrium of the meta-game the utility of each of the two players drops to $1/50$, which is significantly less than the original utility of $1/36$.  We see that the players are locked here in a sort of a prisoners' dilemma where each of them benefits from a unilateral deviation from the truth, but when they both deviate, they both suffer losses. 

Our analysis shows that these comparative statics generalize as long as both players keep producing a non-zero quantity in the meta-game equilibrium.  In some cases, which we explicitly describe below, the player with lower costs can under-represent his costs in a sufficiently extreme way so as to drive the other player completely out of the market. In such cases the quantity produced by the player who remains in the market is still larger than the total quantity produced by both players when they play the truth in the meta-game, but his utility increases. We also completely characterize the rather limited set of cases where the meta-game is manipulation-free. 

\begin{theorem}\label{thm:cournot-meta-game-NE}
(1) In any two-player linear Cournot competition with linear production costs, 
the total quantity produced in the equilibrium of the meta-game is greater than or equal to the total  
quantity produced when the players play the truth in the meta-game, and the price is thus lower. 
(2) If both players continue to produce a non-zero quantity in the meta-game equilibrium, 
then their utilities are less than or equal to their utilities when both play the truth in the meta-game.
If, on the other hand, one of the players produces zero in the meta-game equilibrium, then 
the utility of the producing player is greater than or equal to his utility when both play the truth in the meta-game. 
\end{theorem}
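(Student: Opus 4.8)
The plan is to convert the meta-game into an explicit finite-dimensional game, solve it in closed form, and then obtain both parts by comparing equilibrium quantities and utilities to the truthful ones through direct algebra. First I would use that the linear Cournot game is socially concave \cite{even2009convergence}, so that for every declaration profile $(x_1,x_2)$ the time-averaged play of any regret-minimizing dynamics converges, as $T\to\infty$, to the unique Nash equilibrium of the game whose production costs are $(x_1,x_2)$. Thus in the limit the meta-game is the static two-player game in which player $i$ picks $x_i\in[0,a]$ and gets
\[
U_i(x_1,x_2)=q_i^{\mathrm{NE}}(x_1,x_2)\bigl(a-b\,(q_1^{\mathrm{NE}}+q_2^{\mathrm{NE}})(x_1,x_2)-c_i\bigr),
\]
where $q^{\mathrm{NE}}(\cdot)$ is the piecewise-linear Nash quantity map of Figure~\ref{fig:cournot_regions}. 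On region $A$ this is $U_i=\tfrac{1}{9b}(a+x_{-i}-2x_i)(a+x_1+x_2-3c_i)$, and on a monopoly region it is $\tfrac{1}{4b}(a-x_i)(a+x_i-2c_i)$ for the producing firm and $0$ for the other.

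I would then solve this game. For fixed $x_{-i}$ the map $U_i(\cdot,x_{-i})$ is continuous and, on each of its at most three relevant pieces, a strictly concave quadratic, so it is maximized over $[0,a]$ either at the interior stationary point $x_i=\tfrac14(6c_i-a-x_{-i})$, at an endpoint of $[0,a]$, or at the threshold past which the opponent is forced out. Intersecting the two best-reply correspondences should produce three possible outcomes, which partition the parameter plane into the sub-regions indicated in Figure~\ref{fig:cournot_regions}: (i) an interior profile $x_i^{*}=\tfrac15(8c_i-2c_{-i}-a)$, valid exactly when it lies in $[0,a]$ and yields $q_i^{*}=\tfrac{2}{5b}(a+2c_{-i}-3c_i)>0$; (ii) a corner profile in which the low-cost firm, say firm~$1$, under-declares all the way to $x_1^{*}=2c_2-a$ -- the largest declaration that still keeps firm~$2$ out of the market -- and becomes a monopolist; (iii) a degenerate regime, including the cases where a firm is already priced out at the true costs, in which the equilibrium coincides with the truthful profile. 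A by-product I would keep from this step is the monotone comparative static $x_i^{*}\le c_i$ in every case: \emph{every player weakly under-declares in equilibrium} (in case (i) this inequality is equivalent to $q_i^{*}\ge 0$).

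For part (1), total output equals $Q=\tfrac{1}{3b}(2a-x_1-x_2)$ on region $A$ and $Q=\tfrac{1}{2b}(a-x_i)$ on a monopoly region, and in either form it is weakly decreasing in the declarations; together with $x_i^{*}\le c_i$ this already gives $Q^{*}\ge Q^{\mathrm{truth}}$ whenever the truthful profile and the meta-equilibrium sit in the same regime, and in case (i) one has $x_1^{*}+x_2^{*}=\tfrac15\bigl(6(c_1+c_2)-2a\bigr)\le c_1+c_2$ since $c_1+c_2\le 2a$. The only remaining configuration is a truthful profile on region $A$ facing a monopolistic meta-equilibrium, where $Q^{*}=\tfrac{a-c_2}{b}\ge\tfrac{1}{3b}(2a-c_1-c_2)=Q^{\mathrm{truth}}$ reduces to $c_2\le\tfrac12(a+c_1)$, which is exactly the inequality defining this sub-region; the price statement then follows from $p=a-bQ$. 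For part (2), plugging $x^{*}$ into $U_i$ in case (i) gives meta-equilibrium utility $\tfrac{2}{25b}(a+2c_{-i}-3c_i)^2$ against truthful utility $\tfrac{1}{9b}(a+c_{-i}-2c_i)^2$ (the latter well defined because, using $c_{-i}\le a$, the condition $q_i^{*}>0$ forces the truthful profile into region~$A$), so the claim reduces to a ratio inequality in $r_i=\tfrac{a+2c_{-i}-3c_i}{a+c_{-i}-2c_i}$ that I would verify against the inequalities cutting out this sub-region; in the monopolistic case the producing firm earns $\tfrac1b(a-c_2)(c_2-c_1)$, and $U_1^{*}\ge U_1^{\mathrm{truth}}$ would be deduced from the defining condition of that sub-region (firm~$1$ preferring to force firm~$2$ out over playing its interior best reply) together with $x_1^{*}\le c_1$, while ``the remaining firm produces more than the truthful total'' is the $Q$-comparison above restricted to this regime.

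The conceptual content is modest; the real work -- and the main obstacle -- is in the second step: determining precisely which parameter sub-region produces which type of meta-equilibrium, in particular handling the parameter-space corners $x_i^{*}=0$ and, inside each best-reply computation, correctly weighing the region-$A$ stationary value against the ``drive the opponent out'' monopoly value; and then checking that the quadratic inequalities arising in parts (1) and (2) really do hold on exactly those sub-regions. I expect the bookkeeping over the sub-regions of Figure~\ref{fig:cournot_regions} to be the crux, with each individual inequality being a short computation.
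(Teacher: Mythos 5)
Your overall route is the same as the paper's: use social concavity to collapse the meta-game to the static game $U_i(x_1,x_2)$ built from the piecewise Nash-quantity map, derive the interior equilibrium $x_i^{*}=\tfrac15(8c_i-2c_{-i}-a)$ from the first-order conditions $6c_i-a-4x_i-x_{-i}=0$ (this is exactly the paper's Lemma~\ref{thm:meta-game-both-produce-NE}), identify the monopoly/corner regimes, and finish by algebraic comparison. Your part~(1) argument (declarations weakly decrease, total output is decreasing in declarations, plus the cross-regime check) and your treatment of the monopoly clause of part~(2) are sound and match the paper's.

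The genuine gap is the one step you explicitly defer in part~(2), interior case: the ``ratio inequality'' $\tfrac{2}{25b}(a+2c_{-i}-3c_i)^2\le\tfrac{1}{9b}(a+c_{-i}-2c_i)^2$, i.e.\ $r_i\le\tfrac{5}{3\sqrt2}\approx1.179$. This is not a routine verification --- it \emph{fails} on a nonempty part of the region where both players produce positive quantities in the meta-equilibrium. Take $a=b=1$, $c_1=1/4$, $c_2=0.49$. Then $x_1^{*}=0.004$, $x_2^{*}=0.484$ is the (interior, mutually best-reply) meta-equilibrium; the declared game gives $q_1=0.492$, $q_2=0.012>0$, so both firms produce; yet
\[
U_1^{*}=\tfrac{2}{25}(1.23)^2\approx 0.121 \;>\; \tfrac{1}{9}(0.99)^2\approx 0.109=U_1^{\mathrm{truth}},
\]
since $r_1=1.23/0.99\approx1.24>5/(3\sqrt2)$. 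Intuitively, near the boundary where firm~$2$ is almost driven out, the low-cost firm already captures most of the monopoly gain while its rival still produces a positive sliver, so the ``both produce $\Rightarrow$ both lose'' dichotomy breaks down in a neighborhood of that boundary. Your plan of ``verifying the ratio inequality against the inequalities cutting out this sub-region'' therefore cannot be completed as stated; you would need to either shrink the region on which the first clause of part~(2) is claimed (e.g.\ bound $r_i$ away from $5/(3\sqrt2)$) or restate the conclusion. You should be aware that the paper's own proof asserts this same step with ``as can be verified by substituting'' and does not carry out the substitution, so the difficulty is inherited from the source rather than introduced by your decomposition --- but a complete write-up along your lines would surface it, and you should flag it rather than wave it through.
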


\begin{theorem}\label{thm:cournot-manipulation-free}
In a two-player linear Cournot competition with linear production costs, the meta-game is manipulation-free if and only if either (at least) one of the players produces zero when both play the truth in the meta-game, or both players have zero production costs, $c_i=0$.
\end{theorem}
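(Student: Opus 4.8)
The plan is to reduce manipulation-freeness to a best-reply computation. By the convergence theorem of \cite{even2009convergence} cited above, for any declared cost profile $(x_1,x_2)$ the agents' empirical play converges to the unique Nash equilibrium $(q_1(x_1,x_2),q_2(x_1,x_2))$ of the \emph{declared} Cournot game, so player~$1$'s meta-game payoff is $U_1(x_1,x_2)=q_1\big(a-b(q_1+q_2)-c_1\big)$ evaluated at those quantities, and symmetrically for player~$2$; the truthful profile $(c_1,c_2)$ is a meta-game equilibrium iff $c_i\in\argmax_{x_i\in[0,a]}U_i(x_i,c_j)$ for $i=1,2$. Substituting the closed forms recalled above, on the parameter region where both declared quantities are positive one obtains the quadratic $U_1(x_1,x_2)=\tfrac{1}{9b}(a+x_2-2x_1)(a+x_1+x_2-3c_1)$, and on the region where player~$1$ is a monopolist (player~$2$ is squeezed out) one obtains $U_1(x_1,\cdot)=\tfrac{1}{4b}(a-x_1)(a+x_1-2c_1)$; both are strictly concave in $x_1$, they agree on the shared boundary, and one checks directly that the analogous pieces for $U_2$ also match up, so each $U_i(\cdot,c_j)$ is continuous and piecewise strictly concave on $[0,a]$.

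The first key computation is that, on the interior region, $\partial U_1/\partial x_1$ at $(c_1,c_2)$ equals $-\tfrac{1}{9b}(a+c_2-2c_1)$. Consequently, whenever both firms produce positive quantities at the truthful equilibrium (so $a+c_2-2c_1>0$ and the interior region is an open neighborhood of $(c_1,c_2)$), the map $x_1\mapsto U_1(x_1,c_2)$ is strictly decreasing at $x_1=c_1$; if moreover $c_1>0$, then declaring $x_1=c_1-\epsilon$ for small $\epsilon>0$ is a strict improvement. This gives the ``only if'' direction at once: if both firms produce positively at the truth and $(c_1,c_2)\neq(0,0)$, some player with $c_i>0$ has a profitable manipulation.

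For the ``if'' direction it suffices to establish the per-player statement: truth is a \emph{global} best reply for player~$i$ against $c_j$ iff some firm produces zero in the truthful equilibrium or $c_i=0$ (manipulation-freeness is then the conjunction of these over $i=1,2$, and $(Z\vee c_1{=}0)\wedge(Z\vee c_2{=}0)\Leftrightarrow Z\vee(c_1{=}0\wedge c_2{=}0)$, with $Z$ abbreviating ``some firm produces zero at the truth'', which is exactly the claim). The ``only if'' part of the per-player statement is the derivative above. For the ``if'' part I would argue case by case. If the truthful equilibrium is interior and $c_i=0$: strict concavity of $U_i(\cdot,c_j)$ on the interior region and the negative derivative at $x_i=0$ make $x_i=0$ the maximizer there, while any larger $x_i$ moves to a regime in which some firm is out and $U_i$ equals $0$ or a monopoly profit that is itself maximized at $x_i=0$, so (using continuity across the boundaries) $x_i=0=c_i$ is globally optimal. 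If player~$1$ is the monopolist at the truth (so $a+c_1-2c_2\le0$ and $c_1<a$): for player~$2$, any $x_2\ge c_2$ keeps it out with payoff $0$, and any $x_2<c_2$ with player~$1$ still in gives $U_2=\tfrac{1}{9b}(a+c_1-2x_2)(a+c_1+x_2-3c_2)\le0$ because $a+c_1-2c_2\le0$ forces $3c_2-(a+c_1)\ge\tfrac{a+c_1}{2}$, so the second factor is negative whenever the first is positive, and the deviation in which player~$2$ instead drives player~$1$ out is also loss-making; for player~$1$, declarations $x_1\le c_1$ keep player~$2$ out and the monopoly payoff peaks at $x_1=c_1$, declarations large enough to re-admit player~$2$ satisfy $\partial U_1/\partial x_1<0$ throughout the interior region (again via $a+c_1-2c_2\le0$, $c_1<a$), and still larger ones give $U_1=0$, so by continuity $x_1=c_1$ is globally optimal. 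Finally, if both firms are out ($c_1,c_2\ge a$), any induced positive production faces price at most $a\le c_i$, hence nonpositive margin, so truth is trivially best for both.

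The step I expect to be the main obstacle is the case where player~$1$ is the truthful monopolist: ruling out that this player profitably re-admits its rival by over-declaring its own cost requires the interior-region derivative inequality together with careful use of $a+c_1-2c_2\le0$ (which pins $c_1$ fairly large), plus checking continuity at the monopoly/interior boundary and disposing of extreme declarations $x_i\ge a$. The sign bookkeeping showing $U_2\le0$ for the squeezed-out firm's entry deviations is the second most delicate point; the remaining cases are routine optimization of concave quadratics on an interval.
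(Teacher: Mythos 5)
Your proposal is correct, and it reaches the characterization by a genuinely different route than the paper. The paper first pins down the meta-game equilibrium in closed form (via a lemma computing the mutual-best-reply declarations $x_1^*=\tfrac{1}{5}(8c_1-2c_2-a)$, $x_2^*=\tfrac{1}{5}(8c_2-2c_1-a)$ and their corner variants) and then asks when those equilibrium declarations happen to coincide with $(c_1,c_2)$; the monopoly and no-production regions are handled separately, much as you do. You instead test directly whether truth is a best reply, and your key identity $\partial U_1/\partial x_1\big|_{(c_1,c_2)}=-\tfrac{1}{9b}(a+c_2-2c_1)$ makes the ``only if'' direction immediate: the first-order condition at the truthful profile fails exactly when firm $1$ produces a positive quantity and $c_1>0$. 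This is arguably cleaner and more faithful to Definition~\ref{def:manipulation-free}, which asks only that truth-telling be a Nash equilibrium of the meta-game, not that it be the unique meta-game equilibrium --- your argument never needs uniqueness, whereas the paper's does implicitly. The price you pay is the global best-reply verification in the ``if'' direction (piecewise concavity, boundary matching between the interior and monopoly formulas, and the sign bookkeeping showing the squeezed-out firm's entry deviations are loss-making); the paper's route gets the ``if'' direction almost for free from the closed-form equilibrium but needs the extra lemma up front. Your sign computations in the monopoly case (using $c_2\ge\tfrac{a+c_1}{2}$ to force $a+c_1+x_2-3c_2<0$ whenever firm $2$ would enter, and the negative interior derivative at $x_1=2c_2-a$ for firm $1$'s over-declarations) check out, as does the Boolean reduction of manipulation-freeness to the conjunction of the two per-player statements.
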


Figure \ref{fig:cournot_regions} illustrates the parameter ranges in which the different types of equilibria of the meta-game exist. The dark-shaded areas show the parameter ranges where in the equilibrium of the meta-game the player with the low production cost drives the other competitor out of the market, and the light-shaded area shows the parameter range where both players declare costs lower than their true costs and produce positive quantities in the equilibrium of the meta-game. The regions denoted in the figure by $x_1=0$ and $x_2=0$ show where players $1$ and $2$, respectively, declare a cost of zero to their agents in  equilibrium, and the region denoted by $0,0$ is the range where both players declare zero in equilibrium. In the remaining kite-shaped region in region $A$, both players declare positive costs that are less than their true costs. Finally, in regions $B,C$, and $D$, there is no competition and the equilibrium declarations are truthful. For further details, see Appendix \ref{sec:appendix-Cournot}.

\section{Opposing-Interests Games}\label{c2d3}

The next class of games that we analyze are games in which there is a single mixed Nash equilibrium and no pure equilibrium.  
The prototypical example is matching pennies. In these games, often called fully mixed games, the unique Nash equilibrium is also the unique coarse correlated equilibrium 
\cite{calvo2006set,moulin1978strategically}, and thus the dynamics of regret-minimizing agents will approach this equilibrium. We focus on a subclass called {\em opposing-interests} games, where the first player gets higher utilities along the main diagonal than he gets along the other diagonal, and the opposite is true for the second player. This subclass of fully mixed games includes many games that are similar to matching pennies, and specifically includes all constant-sum games that are fully mixed.

We begin with an example of the following variant of matching pennies, where one of the utilities for each player is changed from the standard value of 1
to another value, as shown in Figure \ref{fig:fully_mixed_manipulation_game_matrix} (left).

A standard analysis shows that the single mixed Nash equilibrium of this game is where the row player plays the top row with probability $p=2/3$ 
(and plays the bottom row with probability $1-p=1/3$), and the column player plays the left column with probability $q=2/5$ (and the right column with
probability $1-q=3/5$).  Calculating the utilities of the two players in this equilibrium gives us $u_1=1/5$ for the row player, and $u_2=1/3$ for the column player. Running a (typical) simulation of multiplicative-weights learning agents that repeatedly play this game against each other, we observe the dynamics shown in Figure \ref{fig:fully_mixed_p_q_dynamics}. 
As is well known \cite{bailey2018multiplicative,hart2013simple,papadimitriou2018nash}, and as we can clearly see, there is no convergence in the behavior of the agents. However, if we write down the empirical probabilities of play of each of the four combinations of the players' strategies (as shown in Figure \ref{fig:2x2_distributions_table}) we get (close to) the Nash equilibrium probabilities, as theoretically expected \cite{calvo2006set}.\footnote{There are two reasons for not reaching exactly the Nash equilibrium.  First, as our simulations are only for a finite number of steps and with a finite step size, the multiplicative-weights algorithm does not fully minimize regret but only nearly so, and thus leads only to a near-equilibrium.  Second, as the  algorithm is randomized there is an expected stochastic error. These error terms are theoretically of an order of magnitude of $O(1/${\small$\sqrt{T}$}$)$, where $T$ is the number of rounds, which fits the observed deviations 
in our simulation for $T=50$$,$$000$ rounds.}

\begin{figure}[t]
\centering
  \includegraphics[width=0.5\linewidth]{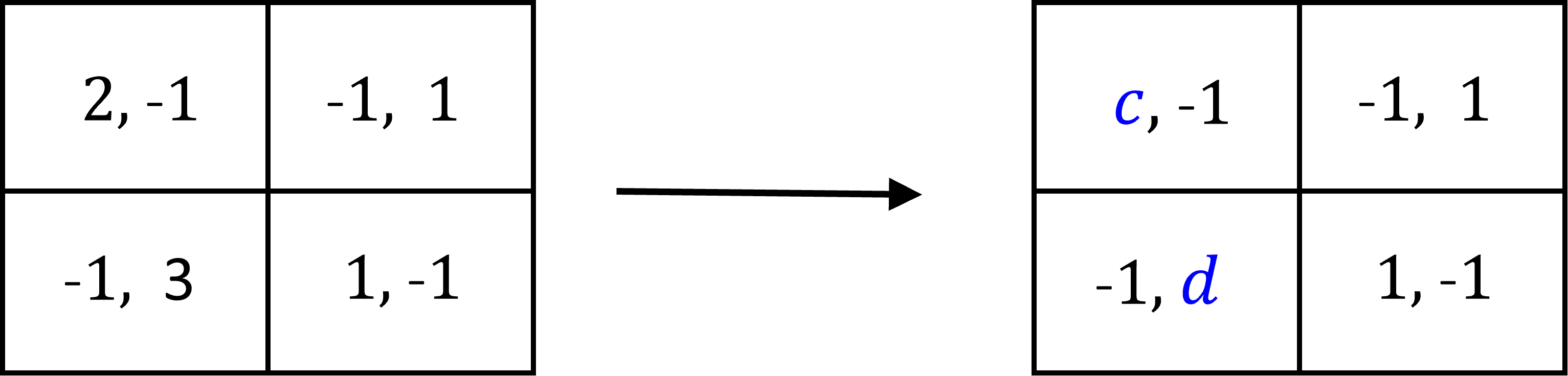}
  \caption{An example of parameter manipulation in an opposing-interests game. Left: the payoff matrix of the true game. Right: the manipulated payoffs that the players provide to their learning agents. The row player selects the parameter $c$ and the column player selects the parameter $d$.
	}
  \label{fig:fully_mixed_manipulation_game_matrix}
\vspace{-2pt}
\end{figure}
 
In our example, suppose that the constants $2$ and $3$ in the game description are parameters that the two users, respectively, declare to their agents, as shown in Figure \ref{fig:fully_mixed_manipulation_game_matrix} (right). Now we ask ourselves, what should the players do in order to maximize their utility?  
What parameter should, say, the row player report to his agent so as to maximize his expected utility over the whole run of the 
learning agents? It would seem natural to assume that entering the true value, in our case $2$, should be the best possible: after all, the agent is optimizing for the value entered into it, and so the row player should give his agent the correct value to optimize for. However, as we have seen also in other types of games, this intuition is again false. Suppose that the row player reports, e.g., $c=1$ as his parameter to his learning agent (and suppose that the column player sticks to the truth, $d=3$).  When the two agents now repeatedly play against each other they reach (in the limit empirical distribution sense) the Nash equilibrium of the game with these values of the parameters, which, as one may calculate, is $p=2/3$, $q=1/2$. The utility of the row player, whose true value is $c=2$, in this resulting distribution is $u_1=1/3$, which is greater than his ``truth-telling'' utility of $1/5$, as we have seen above.\footnote{  
This improvement is even more puzzling when we take a closer look: in the Nash equilibrium of the original game, the row player has mixed between his two pure strategies, implying that they both give him the same utility. Thus, the only way that our row agent can improve his utility is by first causing the column player's agent to change its distribution of play, and only then it could take advantage of this change. The column player's agent's behavior, however, does not depend directly on the row player's utilities. The solution is that there is an indirect dependence that goes through the actual play of the row agent. Taking advantage of this indirect dependence obviously requires understanding the ``theory of mind'' of the column agent's algorithm. In our case, as the row user understands that the agents' dynamics will reach the Nash equilibrium of the {\em declared game}, he can manipulate the parameter that he gives to his agent so as to indirectly cause an increase in $q$, the probability that the column agent plays the first column, an action that the row player finds favorable.
}

Just as the row player can gain by misreporting his utility to his agent, so can the column player. Had the column player declared $d=1$ (now with the row player declaring the truth $c=2$), his utility would have increased to $u_2=2/5$.  Had they both manipulated their ``bids'' and declared $c=1, d=1$ to their respective agents, then both would have benefited relative to telling the truth, getting utilities of $u_1=1/4$ and $u_2=1/2$. 

Next, let us look for an equilibrium of the users' game. We need to determine for which declarations $c,d$ will neither of the two players wishes to unilaterally change his declaration (where the true utilities are still set to $c=2, d=3$ in our example). Our analysis shows that the equilibrium declarations in the meta-game, are $c=3$, $d=1/3$. The equilibrium of the declared game that is played between the agents in this case is $p=2/5$ and $q=1/3$.  
The utilities of the two players (according to the true game parameters) in this distribution are $u_1=1/5$ and $u_2=1/3$ (for further details, see Appendix \ref{sec:appendix-opposing-interests-games}). Surprisingly, these are the same utilities that our players obtained when telling the truth. In fact, this result is not a coincidence. The following theorem characterizes the equilibrium of the meta-game and the utilities obtained in it.
While in our example we fixed a simple parameter space where $c$ was the single parameter of the first player and $d$ the single parameter of the second player, the following results apply to a wide range of parameter spaces where any one of the utilities of each player in the game is a parameter that the player can manipulate. As mentioned, this analysis is for the ``limit'' meta-game and holds for every pair of regret-minimizing agents.

\begin{figure}[!t]
\centering
\vspace{-8pt}
	\begin{subfigure}{0.45\linewidth}
	\vspace{6pt}
		\includegraphics[width=0.96\linewidth]{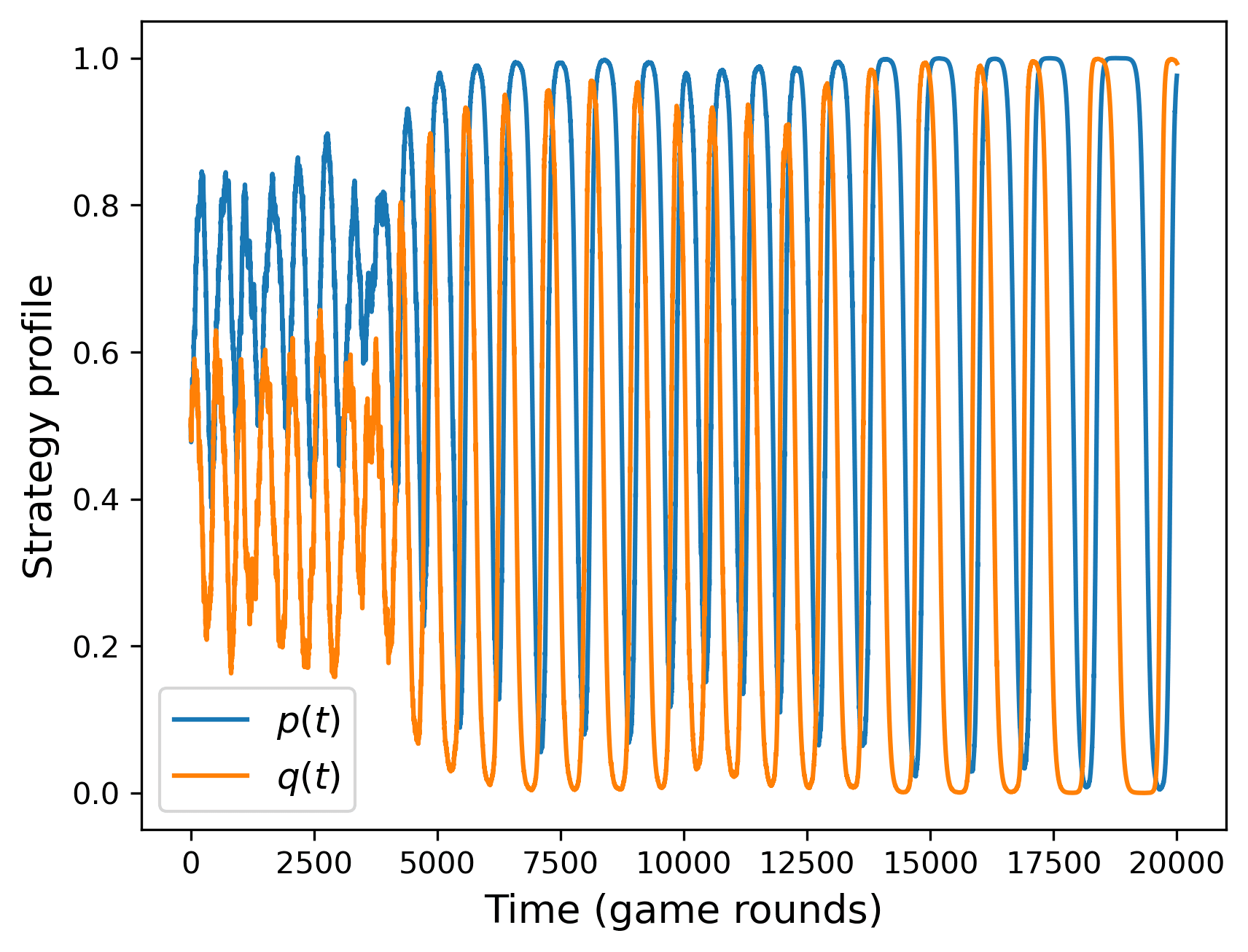}
		\vspace{-5pt}
		\caption{Mixed strategy dynamics}  
		\label{fig:fully_mixed_p_q_dynamics}
	\end{subfigure}
	\hspace{7pt}
	\begin{subfigure}{.52\linewidth} 
	\small
	\begin{center}
									\hspace{-30pt}Nash equilibrium distribution 
	\vspace{-3pt}								
	\end{center}
	\begin{tabular}{c|c|c|}

		\toprule
									& \ \ \ $q = 0.4\quad$\ \ \ \ & \ $1-q = 0.6\quad$  \\
		\midrule
		$p = 0.667$   &   $0.267$  &     $0.133$ \\ 
		\hline
		$1-p = 0.333$ &   $0.4\quad$   &   $0.2\quad$ \\
		\bottomrule
		\vspace{0.1pt}
	\end{tabular}

	\begin{center}
									\hspace{-30pt}Empirical distribution 
	\vspace{-3pt}
	\end{center}
	\begin{tabular}{c|c|c|}

		\toprule
									& \ \ \ $q = 0.405$ \ \ \ & \ $1-q = 0.595$ \\
		\midrule
			$p = 0.665$ &      $0.270$ &       $0.135$ \\
					\hline
		$1-p = 0.335$ &     $0.395$ &         $0.200$ \\
		\bottomrule
	\end{tabular}
	\vspace{9.4pt}
	\caption{Empirical and Nash equilibrium action distributions}
	\label{fig:2x2_distributions_table}
	\end{subfigure}
\caption{Dynamics and payoffs in the game shown in Figure \ref{fig:fully_mixed_manipulation_game_matrix}. Left: dynamics of multiplicative-weights agents in the game with the true parameters $c=2$, $d=3$. Right: the Nash equilibrium distribution and the empirical action distribution from simulations of multiplicative-weights agents.}
\label{fig:fully_mixed_dynamics_and_payoffs_MW}
\end{figure}

\begin{theorem}\label{thm:opposing-interests-NE}
In opposing-interests $2$$\times$$2$ games, the utilities of the two players in an equilibrium of the meta-game are the same as the utilities obtained when both play the truth in the meta-game. 
\end{theorem}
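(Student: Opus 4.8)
\emph{Proof plan.} The argument rests on three features of $2\times 2$ games. (i) As recalled above, a $2\times 2$ game has a \emph{unique} coarse correlated equilibrium, namely its Nash equilibrium; so whenever the \emph{declared} game is fully mixed, the agents' empirical play converges to the \emph{product} distribution $(\hat p,\hat q)$ given by that game's mixed Nash equilibrium. (ii) There is a decoupling: in a $2\times 2$ game the column player's equilibrium probability $\hat q$ is the unique value making the \emph{row} player indifferent between his rows, hence is a function of the row player's payoff matrix only; symmetrically, $\hat p$ is a function of the column player's matrix only. Thus in the meta-game player~$1$'s declaration controls $\hat q$ and player~$2$'s controls $\hat p$, with no interference. (iii) Each player's \emph{true} utility on a product distribution is \emph{bilinear}, $U_i(\hat p,\hat q)$.

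I would first treat an equilibrium that is \emph{interior} --- meaning both declarations lie in the interior of their (wide) parameter spaces and the declared game is fully mixed. Player~$2$'s declaration fixes $\hat p=\hat p^{\mathrm{eq}}$, and by bilinearity $U_1(\hat p^{\mathrm{eq}},\cdot)$ is \emph{affine} in $\hat q$. As player~$1$'s manipulated parameter varies over a neighborhood of its true value, the induced $\hat q$ sweeps a subinterval of $(0,1)$ that contains the true-game value $q_0$ (attained exactly when the manipulated entry is reported truthfully). Were $U_1(\hat p^{\mathrm{eq}},\cdot)$ non-constant, player~$1$ would have a strictly improving feasible deviation; hence at an interior equilibrium it is constant, so $U_1^{\mathrm{eq}}=U_1(\hat p^{\mathrm{eq}},q_0)$. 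But $q_0$ is by definition the column probability that makes the row player indifferent between his rows, so $U_1(\cdot,q_0)$ is constant in its first argument; therefore $U_1(\hat p^{\mathrm{eq}},q_0)=U_1(p_0,q_0)=u_1^{\ast}$, player~$1$'s payoff when both play the truth. The symmetric argument gives $U_2^{\mathrm{eq}}=u_2^{\ast}$.

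The remaining, and I expect main, step is to rule out all \emph{non-interior} equilibria: those in which some declaration makes the declared game not fully mixed (the dynamics then converge to a pure profile $(i,j)$, so $U_1=a_{ij}$, $U_2=b_{ij}$), as well as those sitting at an extreme of a parameter space. The opposing-interests inequalities force $u_1^{\ast}$ to lie \emph{strictly} between $\max(a_{12},a_{21})$ and $\min(a_{11},a_{22})$, and likewise for player~$2$, so a pure outcome can never match the truthful utilities; thus such profiles must be shown not to be equilibria at all. I would do this by a direct case analysis showing that on these profiles the meta-game inherits the cyclic best-response pattern of matching pennies: a player whose outcome is pinned to the off-diagonal strictly prefers to push his control ($\hat q$ or $\hat p$) toward the opposite extreme, and the opposing-interests structure makes the two players' preferred directions opposite, so no rest point exists; since users declare deterministically, none of these profiles is an equilibrium. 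Combined with the previous paragraph, every equilibrium is interior and yields $(u_1^{\ast},u_2^{\ast})$.

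Routine points folded into the above: that ``sufficiently wide'' is exactly what keeps $q_0$ (resp.\ $p_0$) inside the inducible range and the deviations of the boundary analysis feasible; that driving a manipulated entry to an extreme cannot create a dominant strategy --- precisely because, by opposing interests, the competing off-diagonal entry is smaller --- so the inducible range of $\hat q$ is a large subinterval of $(0,1)$ no matter which of a player's utilities is the manipulable one; and that any further degenerate declared games (e.g.\ with a pure Nash equilibrium but not dominance-solvable) do not arise at equilibrium, for the same reasons.
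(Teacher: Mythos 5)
Your proposal is correct and follows essentially the same route as the paper's proof: reduce to the fully mixed case by ruling out declarations that create dominant strategies for the agents, use the decoupling by which the row player's declaration controls only $\hat q$ and the column player's only $\hat p$, and impose the resulting indifference (first-order) conditions of the meta-game. Your final step --- deducing $U_1(\hat p^{\mathrm{eq}},\hat q^{\mathrm{eq}})=U_1(\hat p^{\mathrm{eq}},q_0)=U_1(p_0,q_0)$ from the two indifference conditions rather than explicitly solving for the equilibrium $(p,q)$ and substituting, as the paper does --- is a pleasant computation-free shortcut, but the underlying argument is the same.
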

Thus, in these types of games the players do have incentives to unilaterally manipulate their own agents, but in equilibrium they can neither cooperate in the sense of improving their utilities nor do they suffer losses. This is in contrast to what we saw in dominance-solvable games and in Cournot competition games. Additionally, we characterize the cases where these games are manipulation-free: 

\begin{theorem}\label{thm:opposing-interests-manipulation-free}
An opposing interests $2$$\times$$2$ game is manipulation-free if and only if its Nash equilibrium is symmetric under player permutations (i.e., the same equilibrium distribution is obtained when player indices are switched). 
\end{theorem}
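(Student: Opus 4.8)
The plan is to reduce the meta-game to a two-parameter game whose best-reply structure is transparent. Write the bimatrix as $A=(a_{ij})$, $B=(b_{ij})$, $i,j\in\{1,2\}$, with player~$1$ (rows) favoring the main diagonal and player~$2$ (columns) the anti-diagonal; by the opposing-interests inequalities the unique Nash equilibrium is fully mixed, say player~$1$ plays row~$1$ with probability $p^*$ and player~$2$ plays column~$1$ with probability $q^*$. I would start from the standard decoupling for $2\times2$ games: $q^*$ is pinned down by player~$1$'s indifference (hence by $A$ only) and $p^*$ by player~$2$'s indifference (hence by $B$ only). Therefore, when the agents play a \emph{declared} game $(\hat A,\hat B)$ that is still opposing-interests, its unique CCE (by the characterization cited in the excerpt) is the product distribution $(\hat p,\hat q)$ with $\hat q$ a function of player~$1$'s report and $\hat p$ a function of player~$2$'s report; since regret-minimizing dynamics converge to this product, player~$1$'s realized meta-game utility is $U_1(\hat p,\hat q)=\sum_{ij}\hat p_i\hat q_j\,a_{ij}$ with the \emph{true} $A$, and symmetrically for player~$2$.

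The crucial structural fact I would extract is that $U_1(\hat p,\cdot)$ is \emph{affine} in $\hat q$ with slope $\sigma_1(\hat p):=\hat p(a_{11}-a_{12})+(1-\hat p)(a_{21}-a_{22})$, and $U_2(\cdot,\hat q)$ is affine in $\hat p$ with slope $\sigma_2(\hat q):=\hat q(b_{11}-b_{21})+(1-\hat q)(b_{12}-b_{22})$. Since a sufficiently wide parameter space lets player~$1$ realize values of $\hat q$ on both sides of $q^*$ (and likewise player~$2$ for $\hat p$), truth-telling is player~$1$'s best reply to the truthful $\hat p=p^*$ if and only if $\sigma_1(p^*)=0$ --- otherwise he strictly gains by moving $\hat q$ in the direction of the sign of $\sigma_1(p^*)$ --- and truth-telling is player~$2$'s best reply if and only if $\sigma_2(q^*)=0$. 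Hence the meta-game is manipulation-free exactly when $\sigma_1(p^*)=0$ and $\sigma_2(q^*)=0$. (If one wishes to allow reports that leave the opposing-interests class, one extra check is needed: the only pure profiles that can then arise give player~$1$ the payoff $a_{12}$ or $a_{21}$, and a one-line computation gives $U_1^{\mathrm{truth}}=\det A/(a_{11}{+}a_{22}{-}a_{12}{-}a_{21})$, which strictly exceeds both by the opposing-interests inequalities --- so such deviations never help.)

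Finally I would translate the two slope equations into the claimed symmetry. Each $\sigma_i$ is affine and, by the opposing-interests inequalities, non-constant, so $\sigma_1$ has a unique zero $\tilde p$ and $\sigma_2$ a unique zero $\tilde q$; moreover $\sigma_1(\tilde p)=0$ is precisely the indifference condition of a player holding matrix $A$ in the \emph{column} seat, so $(\tilde p,\tilde q)$ is exactly the Nash equilibrium (row-mixing, column-mixing) of the game obtained by \emph{exchanging the two players' payoff roles}. Thus ``$\sigma_1(p^*)=0$ and $\sigma_2(q^*)=0$'' is equivalent to $(p^*,q^*)=(\tilde p,\tilde q)$, i.e.\ to the original game and the player-exchanged game inducing the same equilibrium distribution --- the stated permutation symmetry. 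As sanity checks this gives that every matching-pennies-type symmetric game is manipulation-free, while in the running example $p^*=2/3,\ q^*=2/5$ but $\tilde p=2/5,\ \tilde q=1/3$, so the game is not symmetric and, consistently with the earlier example, not manipulation-free.

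The part I expect to be delicate is the quantification over parameter spaces: I would need to state precisely what ``sufficiently wide'' buys --- namely that the attainable set of $\hat q$ (resp.\ $\hat p$) is an interval containing $q^*$ (resp.\ $p^*$) in its interior, so that the affine argument cuts both ways --- and either exclude reports outside the opposing-interests class or dispatch them via the $\det A$ computation above. Everything else is bilinear bookkeeping plus the identification of the two zeros $\tilde p,\tilde q$ with the equilibrium of the role-exchanged game.
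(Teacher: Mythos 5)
Your proposal is correct and follows essentially the same route as the paper: reduce the meta-game to a bilinear game in $(\hat p,\hat q)$ where each player's report effectively controls the opponent's mixing probability, impose the first-order conditions $\partial U_1/\partial \hat q=\partial U_2/\partial \hat p=0$ at the truthful profile, and identify the resulting zeros $(\tilde p,\tilde q)$ with the equilibrium of the player-permuted game (the paper derives $(\tilde p,\tilde q)$ as the essentially unique meta-game equilibrium via its Theorem 4 and then equates it with $(p^*,q^*)$, which yields the same two equations). Your explicit dispatch of dominant-strategy declarations via the $\det A$ value formula and the precise statement of what the parameter space must supply are worthwhile refinements of details the paper delegates to its ``natural parameter space'' definition and the proof of Theorem 4.
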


\section{Further Related Work}

Regret minimization in repeated strategic interactions and in online decision problems has been extensively studied in the literatures of game theory, machine learning, and optimization. Early regret-minimization algorithms were related to the notion of fictitious play \cite{brown1951iterative,robinson1951iterative} (a.k.a. ``follow the leader''), which in its basic form does not guarantee low regret, but its smoothed variants, such as ``follow the perturbed leader'' (FTPL) \cite{grigoriadis1995sublinear,hannan1957lapproximation,kalai2002geometric,kalai2005efficient} and ``follow the regularized leader'' (FTRL) \cite{shalev2011online}, are known to guarantee an adversarial regret of $O(${\small$\sqrt{T}$}$)$ in $T$ decision periods; for more recent advances along these lines see, e.g., \cite{daskalakis2018last,mertikopoulos2018cycles,syrgkanis2015fast}. Other common approaches to regret-based learning are the ``multiplicative-weights'' algorithm, which has been developed and studied in many variants (see \cite{arora2012multiplicative} and references therein, and see \cite{chen2020hedging,daskalakis2021near,piliouras2021optimal} for results on more advanced variants of this algorithm), and the family of algorithms that are based on the regret-matching approach that has been well studied in several settings (see \cite{hart2013simple} and references therein). For a broad discussion and for further references on regret-minimization dynamics, see \cite{hart2013simple} and \cite{cesa2006prediction}. 

Our work formalizes the meta-game faced by users of regret-minimizing learning algorithms, and asks whether and when users can benefit from manipulating their own agents. To our knowledge, there is no prior analysis or modeling of these strategic interactions between users of learning agents (which are induced by the dynamics of their agents). In a companion paper \cite{kolumbus2021auctions}, we study an application of our meta-game model in auctions, where we analyze the dynamics and outcomes of repeated auctions played between regret-minimizing agents of a class that includes many natural algorithms such as multiplicative weights. We show in that paper that in the meta-game induced between the users of such auto-bidding agents, the second-price auction loses its incentive-compatibility property, while the first-price auction becomes incentive compatible. 

As discussed in the introduction, our work is related to the broad field of equilibria in repeated games \cite{mailath2006repeated,osborne1994course}, but the situation that the users of learning agents in our setting are facing differs from classic repeated games in significant aspects.   
Conceptually closer works are \cite{brafman2002efficient} that study equilibria between policies in the repeated game (i.e., strategies that are conditional on the history of play), and works on program equilibria \cite{lavictoire2014program,oesterheld2019robust,oesterheld2022safe,tennenholtz2004program} in which each agent can read the commitments made by the other agents and condition its actions on these commitments. 
These models, however, are technically very different from our model, and in particular, the notions of equilibrium are different from a meta-game equilibrium. 

In a broader perspective, our work is related to a research area that can be called ``strategic considerations in machine-learning systems,'' with a growing body of work at the intersection of machine learning, algorithmic game theory, and artificial intelligence that addresses this topic from different perspectives, 
including learning from strategic data \cite{cai2015optimum,chen2019learning,dong2018strategic,ghalme2021strategic,haghtalab2020maximizing,hardt2016strategic,levanon2021strategic}, Stackelberg games \cite{birmpas2020optimally,Jiarui2019imitativeFollower}, security games \cite{gan2019manipulating,haghtalab2016three,nguyen2019imitative,nguyen2019deception,shi2020learning}, and recommendation systems \cite{ben2018game,tennenholtz2019rethinking}. More closely related works are \cite{braverman2018selling,deng2019strategizing,mansour2022strategizing}, which deal with optimization against regret-minimizing agents. The possibility of obtaining increased gains when playing against a no-regret algorithm that is studied in these works is conceptually close to our work. A basic difference, however, is that these works consider a single optimizer facing a no-regret algorithm -- a setting that induces optimization problems, rather than games -- whereas we study games between the users that are induced by their learning agents' dynamics, and in which all users can act strategically and reason about the strategies of their peers. Additionally, in the meta-game a user does not need to select the actual step-by-step actions in the repeated game, but instead chooses the declaration to input into the automated agent, whereas the direct optimization over the action space of the underlying game is performed only by the agents.  

Finally, the idea that inputting a ``wrong'' reward function into a learning algorithm can in some cases improve actual outcomes has been long known in the context of reinforcement learning \cite{barto2013intrinsic,kulkarni2016hierarchical,csimcsek2006intrinsic,singh2004intrinsically,singh2010intrinsically}. Specifically, and closer to our context, in Markov games \cite{littman1994markov} it has been shown that the dynamics of reinforcement learners with certain intrinsic reward functions can lead to improved actual utility to all the agents playing the game \cite{austerweil2016other,eccles2019learning,jaques2019social,leibo2017multi}. This literature, however, does not consider  interactions between users of such agents and their incentives when entering their parameters into their learning agents. We view the analysis of our model for Markov games with reinforcement-learning agents as a natural and interesting extension, but we leave this for future work, and focus in the current paper on repeated games with regret-minimizing agents.

\section{Conclusion}

The present study deals with the modeling and analysis of scenarios in which human players use autonomous learning agents to perform strategic interactions with other players on their behalf. The usage of automated learning agents is becoming increasingly common and prominent in many real-world economic systems and online interactions such as online auctions \cite{balseiro2021robust,cesa2014regret,daskalakis2016learning,deng2021autoBidding,kolumbus2021auctions,nekipelov2015econometrics,mehta2022auction,noti2021bid}, financial markets \cite{lei2020time,li2019deep,algotrade2013,ZHOU2019186}, and other systems \cite{rahwan2019machine}. Understanding the impact of this transition to automated agents on strategic systems and on their outcomes is a challenge in itself, and a prerequisite for studying how to better design such automated-interaction systems.

The framework that we propose for analyzing the meta-games between the users, and the results of our analysis, highlight the fact that the strategic nature of interactions does not disappear and does not remain the same when direct human play is replaced by automated agents, but rather the introduction of these learning agents changes in specific ways the rules of the game that the human users (who are the actual stakeholders in the system) are facing. Every automated agent that operates in a system on behalf of its user needs to receive some input from the user (such as preferences, goals, or constraints); we show that at this interaction point between a user and his own learning agent, the user faces a non-trivial strategic decision in which he needs to consider also the decisions of the other users. The goal of our model is to describe these interactions and formalize them. 

Our contributions in this paper include the basic definitions related to these phenomena, demonstrations of how users of learning agents can profitably 
manipulate their own agents in several settings, as well as identification of cases where this cannot be done. These results have implications both for users of 
learning agents in strategic settings and for platform designers who need to take into account these types of manipulations. We believe that the present paper, along with its companion paper \cite{kolumbus2021auctions} where we study these phenomena in online auctions, only scratch the surface of these types of questions, and much work remains to be done.


\section*{Acknowledgments}
This project has received funding from the European Research Council (ERC) under the European Union's Horizon 2020 Research and Innovation Programme (grant agreement no. 740282).


\bibliography{meta_games_arxiv_v2}


\section*{Checklist}


\begin{enumerate}

\item For all authors...
\begin{enumerate}
  \item Do the main claims made in the abstract and introduction accurately reflect the paper's contributions and scope?
    \answerYes{}
  \item Did you describe the limitations of your work?
    \answerYes{}
  \item Did you discuss any potential negative societal impacts of your work?
    \answerYes{As in every game theoretic study, outcomes that may be useful to one player may reduce the utility to another, depending on the nature of the underlying game. We provide fundamental theoretical analysis for understanding the incentives that users of learning algorithms may have to manipulate their own agents and the potential outcomes of such manipulations.}  
  \item Have you read the ethics review guidelines and ensured that your paper conforms to them?
    \answerYes{}
\end{enumerate}

\item If you are including theoretical results...
\begin{enumerate}
  \item Did you state the full set of assumptions of all theoretical results?
    \answerYes{}
        \item Did you include complete proofs of all theoretical results?
    \answerYes{Due to the space limitation, all proofs are deferred to the supplementary materials.}
\end{enumerate}

\item If you ran experiments...
\begin{enumerate}
  \item Did you include the code, data, and instructions needed to reproduce the main experimental results (either in the supplemental material or as a URL)?
    \answerYes{The code and requirements for our simulations are provided as part of the supplementary materials.}
  \item Did you specify all the training details (e.g., data splits, hyperparameters, how they were chosen)?
    \answerNA{}
        \item Did you report error bars (e.g., with respect to the random seed after running experiments multiple times)?
    \answerNA{The number $T$ of game repetitions in our simulations is purposely chosen such that the variance of the time average of player payoffs between
 independent simulations is small and error bars were not informative.}
        \item Did you include the total amount of compute and the type of resources used (e.g., type of GPUs, internal cluster, or cloud provider)?
    \answerNA{Our simulations do not require any special resources and can be run on a CPU in a standard PC.} 
\end{enumerate}

\item If you are using existing assets (e.g., code, data, models) or curating/releasing new assets...
\begin{enumerate}
  \item If your work uses existing assets, did you cite the creators?
    \answerNA{}
  \item Did you mention the license of the assets?
    \answerNA{}
  \item Did you include any new assets either in the supplemental material or as a URL?
    \answerNA{}
  \item Did you discuss whether and how consent was obtained from people whose data you're using/curating?
    \answerNA{}
  \item Did you discuss whether the data you are using/curating contains personally identifiable information or offensive content?
    \answerNA{}
\end{enumerate}

\item If you used crowdsourcing or conducted research with human subjects...
\begin{enumerate}
  \item Did you include the full text of instructions given to participants and screenshots, if applicable?
    \answerNA{}
  \item Did you describe any potential participant risks, with links to Institutional Review Board (IRB) approvals, if applicable?
    \answerNA{}
  \item Did you include the estimated hourly wage paid to participants and the total amount spent on participant compensation?
    \answerNA{}
\end{enumerate}

\end{enumerate}

\clearpage

\appendix

\section{Convergence of Regret-Minimizing Agents}\label{sec:convergence}

We consider repeated-game settings, in which the same finite group of automated agents repeatedly play a fixed game with bounded utilities on behalf of their users, with one agent per user. We focus on learning agents that are implemented as regret-minimization algorithms. The (external) regret of player $i$ at time $T$, given a history of play $(\textbf{a}^1, ... ,\textbf{a}^T)$, is defined as the difference between the optimal utility from using a fixed action in hindsight and the actual utility: $R_i^T = \max_a \sum_{t=1}^T u_i(a, \textbf{a}_{-i}^t) - u_i(a_i^t, \textbf{a}_{-i}^t)$, where $a_i^t$ is the action of player $i$ at time $t$ and $\textbf{a}_{-i}^t$ denotes the action profile of the other players at time $t$. As usual, regret-minimization algorithms are stochastic, and whenever we talk about the limit behavior, we consider a sequence of algorithms with $T \rightarrow \infty$ and with probability approaching $1$. An agent $i$ is said to be ``regret minimizing'' if $R_i^T/T \rightarrow 0$ almost surely as $T \rightarrow \infty$. 
A joint distribution over the players' actions is said to be a coarse correlated equilibrium if under this distribution all players have on expectation at most zero regret. 

We use the following notation to describe the empirical distributions of the agents' dynamics. We denote by $\Delta$ the space of probability distributions over action profiles, by $\textbf{p}^T_t\in\Delta$ the empirical distribution of actions after $t$ rounds in a sequence of $T\geq t$ repetitions of the game, and by   
$p_t^T(\textbf{a})$ the empirical frequency of an action profile $\textbf{a}$ at the end of round $t$ of the repeated game. 

It is well known that the regret-minimization property ensures that the inequalities that define the CCE condition are all satisfied for the empirical time-average utilities over $T$ steps to within a diminishing error term.  This implies that, in $\Delta$ (the space of probability distributions over the agents' joint actions), the empirical joint-action distribution, $\textbf{p}^T_t$, must get arbitrarily close to the \emph{polytope of CCE distributions}.\footnote{This follows directly from the compactness of the space of distributions, assuming that utilities are bounded.} One may hope that the empirical distribution converges to some specific CCE, $p^*$, and even be encouraged by the many known examples (starting with the matching pennies game) in which, even though the mixed strategies of the players do not converge, the time average of the empirical-play dynamics does converge to a specific CCE \cite{bailey2021stochastic,bailey2018multiplicative,benaim2009learning}.

However, not only is this hope not always justified \cite{shapley1964some}, but our following observation shows that whenever there is more than a single CCE no convergence is guaranteed. Not only may regret-minimization dynamics not converge at all, but even the time averages of the action distribution and utilities may keep changing over time. 

\begin{proposition}\label{thm:convergence}
For every finite game in which the set of CCEs is not a singleton and for every pair of distinct CCE distributions in that set,    
there exist regret-minimizing algorithms for the players whose empirical time-average joint dynamics do not 
converge at all and oscillate between getting arbitrarily close to each of these two CCEs.
\end{proposition}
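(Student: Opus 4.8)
The plan is to exhibit one deterministic schedule of joint action profiles and then let player $i$'s algorithm be the rule that plays the $i$-th coordinate of the scheduled profile at each time step, ignoring the observed history. Fix the two distinct CCE distributions $p^{(1)}\ne p^{(2)}$ over the (finite) set of joint action profiles and write $M$ for their number. Using history-ignoring algorithms is legitimate here: the proposition lets us choose all $n$ algorithms jointly, and ``regret-minimizing'' constrains only the realized play that these particular algorithms produce against one another. I would organize the schedule into consecutive phases $k=1,2,\dots$; throughout phase $k$ we repeatedly play a fixed finite block of profiles whose own empirical distribution is within $\epsilon_k$ of $p^{(1)}$ when $k$ is odd and within $\epsilon_k$ of $p^{(2)}$ when $k$ is even, with $\epsilon_k\to 0$. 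Such a block always exists --- for a target $p^{(r)}$ and block length $m$, list each profile $\textbf{a}$ about $m\,p^{(r)}(\textbf{a})$ times, with empirical error at most $M/m$ --- so this also handles CCEs with irrational probabilities.

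Next I would choose the phase lengths to grow very rapidly: having fixed the block of phase $k$ (which pins down $\epsilon_k$ and the block length), pick the phase's final step $T_k$ large enough compared with $T_{k-1}$ and with the block length that the time-average over the first $T_k$ rounds is dominated by phase $k$ alone --- e.g.\ $T_{k-1}/T_k<1/k$ suffices. A direct estimate then shows that $\textbf{p}_{T_k}$, the empirical time-average joint distribution after $T_k$ rounds, lies within $\epsilon_k+o(1)$ (as $k\to\infty$) of $p^{(1)}$ for $k$ odd and of $p^{(2)}$ for $k$ even. Hence the sequence of empirical time-average distributions has both $p^{(1)}$ and $p^{(2)}$ as limit points; since $p^{(1)}\ne p^{(2)}$ it does not converge and oscillates between getting arbitrarily close to each, as the proposition claims.

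It remains to verify the regret-minimization property. Here I would use the identity $R_i^T/T=g_i(\textbf{p}_T)$, immediate from the definition of regret, where $g_i$ is the fixed Lipschitz function sending a joint distribution $p$ to player $i$'s best-fixed-deviation utility against the marginal $p_{-i}$ minus player $i$'s utility under $p$; the CCE polytope $C$ equals $\{p:g_i(p)\le0\ \text{for all }i\}$, a convex set containing $p^{(1)}$ and $p^{(2)}$. So it suffices to show $\mathrm{dist}(\textbf{p}_T,C)\to 0$ for \emph{every} growing $T$, not merely along the phase endpoints. For $T$ inside phase $k$, decompose $\textbf{p}_T=\tfrac{T_{k-1}}{T}\,\textbf{p}_{T_{k-1}}+\tfrac{T-T_{k-1}}{T}\,\bar q$, a convex combination of the time-average at the last phase boundary (near a CCE, by the previous step) and the within-phase empirical distribution $\bar q$ (within $\epsilon_k+O(1/(T-T_{k-1}))$ of $p^{(r_k)}\in C$). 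By convexity of $\mathrm{dist}(\cdot,C)$, and since the rounding term weighted by $\tfrac{T-T_{k-1}}{T}$ contributes only $O(1/T)$, we obtain $\mathrm{dist}(\textbf{p}_T,C)\to 0$, hence $\limsup_T R_i^T/T\le0$ for each $i$, which is the no-regret condition.

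The step I expect to be the main obstacle is precisely this last one: maintaining the no-regret property \emph{throughout} the run --- in particular during the long stretches where the running average is slowly re-averaging from one target CCE toward the other, not only at the instants a phase has just completed --- since a careless schedule could let the running average drift off the CCE polytope meanwhile. The geometric blow-up of phase lengths, with each phase dwarfing all preceding history, is what keeps the running average in a vanishing neighborhood of the segment $[p^{(1)},p^{(2)}]\subseteq C$ at all times, and supplies the quantitative heart of the argument; the remaining point, bounding the rounding errors from approximating the (possibly irrational) CCE probabilities by finite blocks, is routine.
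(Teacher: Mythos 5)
The one genuine gap is your opening claim that history-ignoring schedules are legitimate because ``regret-minimizing'' constrains only the realized play of the chosen algorithms against one another. A fixed deterministic schedule that ignores the observed history is not a regret-minimizing \emph{algorithm} in the standard sense used here: against an opponent who departs from her half of the schedule it can accumulate linear regret (a fixed sequence in matching pennies, say, is fully exploitable). The paper's proof is built precisely to avoid this: each agent plays its scheduled action only as long as every other agent has followed the schedule so far, and upon observing any deviation it switches to unconditional regret matching for the remainder of the horizon. This makes each algorithm no-regret against arbitrary opponent behavior --- sublinear regret up to the deviation time because the empirical play until then is near the convex hull of $p^{(1)}$ and $p^{(2)}$, which lies in the CCE polytope $C$, plus $O(\sqrt{T})$ regret from regret matching afterwards --- while leaving the realized joint dynamics, when all agents comply, identical to your schedule. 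You need to add this deviation-triggered fallback; without it the objects you construct do not deserve the name ``regret-minimizing algorithms,'' and the proposition's point (that even bona fide no-regret learners can fail to converge) is not established.

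Everything else matches the paper's argument: the alternating phases targeting $p^{(1)}$ and $p^{(2)}$, the phase lengths growing fast enough that each phase dwarfs all preceding history, and the use of convexity of $C$ to keep the running average (and hence, via the Lipschitz regret functional $g_i$, the average regret) in a vanishing neighborhood of $C$ at all intermediate times, not just at phase boundaries. Your handling of irrational CCE probabilities by finite approximating blocks with error $M/m$ is in fact slightly more complete than the paper's, which assumes rational probabilities for simplicity.
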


Thus, when allowing for \emph{general} regret-minimizing agents, the only types of games in which the question of convergence is resolved are games in which the CCE is unique; in these games, regret-minimization dynamics must converge to the single CCE. For games with multiple CCEs, it is required to have additional information on the types of regret-minimization algorithms that are applied, that will allow to analyze the convergence properties of their dynamics, as we do in the companion paper \cite{kolumbus2021auctions} for a large class of natural regret-minimization algorithms in first-price and second-price auctions. 

Before proceeding to the proof of Proposition \ref{thm:convergence}, we need to formally define what notions of convergence we are looking at. 
We propose the following definitions of convergence as general and concrete notions that are compatible with the standard models of regret-minimization algorithms that focus on algorithms $ALG^T$, each of which is targeted to a fixed time horizon $T$, and looks at a sequence of such algorithms as\footnote{
An alternative formalism would consider a single algorithm with an infinite horizon, as in \cite{bailey2021stochastic,bailey2018multiplicative}, and look at its intermediate results at different times $T$.  We prefer following the fixed-horizon formalism as it is the most commonly used one and since in the infinite-horizon definitions one must have an appropriately decreasing ``update parameter.''} $T \rightarrow \infty$. 
All the following notions of convergence concern the average-iterate (i.e., the empirical distributions of the agents’ dynamics). Note that average-iterate convergence does not imply last-iterate convergence (whereas the converse is true), and so under all the following definitions, a CCE may include dynamical patterns such as cycles or recurrent sets \cite{mertikopoulos2018cycles,papadimitriou2018nash}.

The first notion of convergence that we consider is the one closest to the definition of the regret-minimization property. As mentioned above, the dynamics of regret-minimizing agents approach the set of coarse correlated equilibria in the space of utilities. The following definition deals with the space of distributions over joint action profiles.  

\begin{definition} The dynamics \emph{approach a set} $S\subseteq \Delta$ of distributions if for every $\epsilon>0$ there exists $T_0(\epsilon)$ such that for every $T>T_0$ with probability at least $1-\epsilon$ it holds that $\underset{\textbf{p} \in S}{\inf} |\textbf{p}_T^T - \textbf{p}| < \epsilon$.
\end{definition}

The next definition describes a different property of the dynamics. Basically, this property means that after a sufficiently long time, the distribution of the empirical play stabilizes and does not change much. Notice that this property allows for having different outcomes for different values of $T$, or even for different instances of the dynamics with the same algorithms and the same $T$.  
\begin{definition} The dynamics are \emph{self-convergent} if for every $\epsilon>0$ there exists $T_0(\epsilon)$ such that for every $T>T_0$ with probability at least $1-\epsilon$ it holds that for every $\epsilon T < t \leq T$, $|\textbf{p}_t^T - \textbf{p}_T^T| < \epsilon$.
\end{definition}

Finally, the following definition describes convergence to a single distribution. This definition will be useful in our analysis of meta-games of games with a single CCE, in which the dynamics must converge to that CCE for any set of regret-minimization algorithms.\footnote{This definition would correspond to
convergence to $\textbf{p}$ almost surely in the infinite-horizon model.}
 
\begin{definition}\label{def:converge-to-p} 
The dynamics \emph{converge to a distribution} $\textbf{p}\in \Delta $ if for every $\epsilon>0$ there exists $T_0(\epsilon)$ such that for every $T>T_0$ with probability at least $1-\epsilon$ it holds that for every $\epsilon T < t \leq T$, $|\textbf{p}_t^T - \textbf{p}| < \epsilon$.
\end{definition}

It is not difficult to see that the above definition of convergence to a distribution is equivalent to the combination of the first two definitions shown above, as follows. 
\begin{observation} 
The dynamics converge to the distribution $\textbf{p}$ if and only if the dynamics are 
self-convergent and approach the unit set $\{\textbf{p}\}\subset \Delta$.
\end{observation}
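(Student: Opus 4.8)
The plan is to unwind the three $\epsilon$--$\delta$ definitions and prove each implication directly, using nothing beyond the triangle inequality and a union bound. Since verifying any of these definitions for all sufficiently small $\epsilon$ is equivalent to verifying it for all $\epsilon>0$ (replace $T_0(\epsilon)$ by $T_0(\epsilon_0/2)$ whenever $\epsilon\ge\epsilon_0$), I will freely restrict attention to $\epsilon\in(0,1)$, so that the time window $\epsilon T < t \le T$ is nonempty and contains its right endpoint $t=T$.

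For the ``only if'' direction, assume the dynamics converge to $\textbf{p}$. That they approach $\{\textbf{p}\}$ is immediate: for a given $\epsilon$ take the horizon $T_0(\epsilon)$ from Definition~\ref{def:converge-to-p} and specialize the guaranteed bound $|\textbf{p}_t^T-\textbf{p}|<\epsilon$ to $t=T$. For self-convergence I would instead invoke Definition~\ref{def:converge-to-p} with parameter $\epsilon/2$; on the resulting probability-$(1-\epsilon/2)$ event we have both $|\textbf{p}_t^T-\textbf{p}|<\epsilon/2$ for all $\epsilon T < t\le T$ and (taking $t=T$) $|\textbf{p}_T^T-\textbf{p}|<\epsilon/2$, so the triangle inequality $|\textbf{p}_t^T-\textbf{p}_T^T|\le|\textbf{p}_t^T-\textbf{p}|+|\textbf{p}-\textbf{p}_T^T|<\epsilon$ gives self-convergence; no union bound is needed here since both bounds live on the same event.

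For the ``if'' direction, assume self-convergence and approach to $\{\textbf{p}\}$, fix $\epsilon$, and invoke each hypothesis with parameter $\epsilon/2$, obtaining two horizons whose maximum I call $T_0$. For every $T>T_0$, a union bound gives probability at least $1-\epsilon/2-\epsilon/2=1-\epsilon$ that simultaneously $|\textbf{p}_t^T-\textbf{p}_T^T|<\epsilon/2$ for all $(\epsilon/2)T<t\le T$ and $|\textbf{p}_T^T-\textbf{p}|<\epsilon/2$. Since $\{t:\epsilon T<t\le T\}\subseteq\{t:(\epsilon/2)T<t\le T\}$, the triangle inequality then yields $|\textbf{p}_t^T-\textbf{p}|<\epsilon$ on the whole window required by Definition~\ref{def:converge-to-p}, which is exactly convergence to $\textbf{p}$.

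I do not expect a genuine obstacle: the statement amounts to the remark that ``$\textbf{p}_t^T$ stays near $\textbf{p}_T^T$'' together with ``$\textbf{p}_T^T$ is near $\textbf{p}$'' is the same, up to halving constants, as ``$\textbf{p}_t^T$ is near $\textbf{p}$.'' The only points that need care are bookkeeping ones: splitting the failure probability $\epsilon$ into two halves before applying the union bound in the ``if'' direction, and checking that the narrower time window $\epsilon T<t\le T$ is contained in the wider one $(\epsilon/2)T<t\le T$ so that conclusions proved on the latter transfer to the former.
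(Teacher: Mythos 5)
Your proof is correct and is exactly the direct unwinding the paper has in mind: the paper itself offers no proof for this observation, dismissing it as "not difficult to see," and your triangle-inequality-plus-union-bound argument (with the careful handling of the nested time windows and the halved failure probabilities) supplies precisely the missing details.
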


Proposition \ref{thm:convergence}, which considers Definition \ref{def:converge-to-p} of convergence, shows that in any game with more than a single CCE  
there exist regret-minimizing algorithms for the players whose empirical time-average joint dynamics do not converge at all. 
The proof takes as a starting point the fact that for any CCE in any game there exist regret-minimization dynamics that converge to it. 
To establish convergence to a specific CCE, as in \cite{monnot2017limits}, we look at dynamics in which all agents play according to a schedule that yields that CCE as its time average, and in any case of deviation by any subset of the other players, the algorithms divert to playing a standard regret-minimization algorithm in the remaining time. The idea of the proof of Proposition \ref{thm:convergence} is that instead of using an action schedule that converges to a single CCE, we construct dynamics that alternate between two such schedules, and show that if this alternation slows down at a sufficient rate, the time average oscillates between arbitrarily approaching each of the two pre-determined CCEs, while the regret-minimization property of each agent is preserved. 

\begin{proof} (Proposition \ref{thm:convergence}):
We start with the following claim: \emph{for every finite game and every CCE distribution of that game there exist regret-minimizing algorithms for the players whose joint dynamics converge to the given CCE}. This result was previously shown in  \cite{monnot2017limits} in an infinite-horizon setting.  
We technically re-prove this claim here to make the proof compatible with the finite-horizon setting and, specifically, with Definition \ref{def:converge-to-p} of convergence given above. 

Let $\textbf{p}$ be a CCE distribution of a finite $n$-player game, and let $M=|A_1 \times...\times A_n|$ denote the size of the joint action space of the players, where $A_i$ is the action space of agent $i$. We assume for simplicity that every probability in $\textbf{p}$ is a rational number. Let $\textbf{a}^k$, $k\in[M]$, be the action profile that has the $k$'-th highest probability in $\textbf{p}$ (ties are broken in favor of the action tuple with the higher index in $\textbf{p}$) and denote by $a_i^k$ the action of player $i$ in the action tuple $\textbf{a}^k$. Let $T_k = 1/\Pr({\textbf{a}^k})$ if $\Pr({\textbf{a}^k}) > 0$ or $T_k = 0$ otherwise and let $T_0 = 0$. Define the mapping $k(t) = k \text{ s.t. } \sum_{s=0}^{k-1} T_s < mod(t, \sum_{s=1}^M T_s) \leq \sum_{s=0}^{k} T_s$. 

Consider the following ``$\textbf{p}$-schedule'' algorithm for agent $i$: at every time $t = 1,...,T$, the algorithm plays action $a^{k(t)}_i$. 
After every action, the algorithm observes the actions of the other agents. If at least one of the other agents deviated from its schedule, i.e., if there exists a player $j$ that played an action different from $a^{k(t)}_j$, then the algorithm stops playing according to $a^{k(t)}_i$ and switches to playing the ``unconditional regret-matching'' algorithm \cite{hart2000simple,hart2013simple} in the remaining time. 

This algorithm is clearly regret-minimizing: on the one hand, if any agent deviates, then all agents play the unconditional regret-matching algorithm, which is regret-minimizing.\footnote{This algorithm is especially simple here since it is does not require any parameter to be fitted to the remaining time after the step at which a deviation occurred. Other algorithms can also be used with proper adjustments.} On the other hand, if all agents are playing according to this algorithm then in every period of length $\tau = \sum_{s=1}^M T_s$ the empirical action distribution exactly equals the CCE distribution $\textbf{p}$. Since all the utilities are assumed to be bounded, this implies that the regret is vanishing in the limit $T\rightarrow \infty$ (since the regret accumulated in any ``partial cycle'' of the schedule, of length less than $\tau$, is bounded by a constant and thus vanishes in the time average). 

It also follows from a similar argument that in the dynamics in which all agents play according to the algorithm described above, the empirical distribution of play converges to $\textbf{p}$ (we will call such dynamics in which all agents play the $\textbf{p}$-schedule algorithm ``$\textbf{p}$-schedule dynamics''). Formally, at the end of every period $\tau$ the empirical action distribution equals $\textbf{p}$, and thus the empirical distribution at time $t > \tau$ can be written as $\textbf{p}_t^T = \frac{1}{t}\left( \tau \cdot {\left\lfloor t/\tau \right\rfloor} \cdot \textbf{p} + mod(t,\tau)\cdot \textbf{x} \right)$, where $\textbf{x}$ can be any arbitrary distribution obtained in a partial cycle of the schedule, or it can be equal to $\textbf{p}$ if $t$ completes a full cycle. To show that Definition \ref{def:converge-to-p} holds, let $\epsilon > 0$ such that $ T_0(\epsilon) > 2 \tau / \epsilon$. To bound the distance of the empirical action distribution from $\textbf{p}$, consider the vector $\textbf{x}=-\textbf{p}$ with weight $\tau$. Then, for every $T > T_0$ with probability $1$ (as there in no stochasticity in these dynamics) for every $t$ s.t. $\epsilon T < t \leq T$  it holds that $|\textbf{p}_t^T - \textbf{p}| \leq | \frac{1}{t}\big((t-\tau)\textbf{p} - \tau \textbf{p}\big) - \textbf{p}| \leq \frac{2\tau}{t} < \epsilon$. Thus, we have that every CCE has regret-minimization dynamics that converge to it.

Next, consider any finite game that has more than a single CCE and let $\textbf{p}_1$ and $\textbf{p}_2$ be two distinct distributions in the set of its CCE distributions. Notice that since the set of CCEs in every game is a convex set, also every weighted average of $\textbf{p}_1$ and $\textbf{p}_2$ is a CCE itself. Consider the dynamics of agents that all switch between $\textbf{p}_1$-schedule dynamics with cycle time $\tau_1$ and $\textbf{p}_2$-schedule dynamics with cycle time $\tau_2$. Let $\epsilon > 0$ such that $\epsilon < 1/\max(\tau_1, \tau_2)$ and let $\alpha = \left\lceil 1/ \epsilon^2 \right\rceil$. We will define the series $\Gamma_c = (2 \alpha)^c,$ $c=1,2,...$ to be the number of full cycles in which each dynamic is played before switching to the other dynamic. That is, the lengths of the periods in which each dynamic is played are $\tau_1\Gamma_c$ if $c$ is odd or $\tau_2\Gamma _c$ if $c$ is even, i.e., $(2 \alpha\tau_1, 4 \alpha^2 \tau_2, 8 \alpha^3 \tau_1, 16 \alpha^4 \tau_2,...)$. By the above proof for the convergence of $\textbf{p}$-schedule dynamics we have that by the end of the first period of length $2 \alpha\tau_1$ the empirical distribution of play reaches a distance of less than $\epsilon$ from the CCE distribution $\textbf{p}_1$. Next, by the end the second period, i.e., at time $t = 2 \alpha\tau_1 + 4 \alpha^2 \tau_2$, the distance of the empirical distribution from the CCE distribution $\textbf{p}_2$ is $|\textbf{p}_t^T - \textbf{p}_2| \leq \left|\frac{1}{4 \alpha^2 \tau_2 + 2 \alpha\tau_1}\Big( (4 \alpha^2 \tau_2)\textbf{p}_2 + (2 \alpha\tau_1)\textbf{p}_1 \Big) - \textbf{p}_2\right| \leq \left|\frac{1}{4 \alpha^2 \tau_2 + 2 \alpha\tau_1}\Big( (4 \alpha^2 \tau_2)\textbf{p}_2 - (2 \alpha\tau_1)\textbf{p}_2 \Big) - \textbf{p}_2\right| = \left|\frac{4 \alpha^2 \tau_2 - 2 \alpha\tau_1}{4 \alpha^2 \tau_2 + 2 \alpha\tau_1} - 1\right| = \frac{4 \alpha\tau_1}{4 \alpha^2 \tau_2 + 2 \alpha\tau_1} < \epsilon$. 

The same argument holds at the end of the subsequent periods as well. Thus, given a sufficiently long time $T$, the average empirical distribution $\textbf{p}_t^T $ oscillates between getting arbitrarily close to each of the two CCE distributions $\textbf{p}_1$ and $\textbf{p}_2$, with an oscillation period that slows down exponentially, and hence there is no convergence of the average empirical play. 
\end{proof}

\section{Dominance-Solvable Games}\label{sec:appendix-DS-games}

The following lemma is a formalization of a well-known result that dominance-solvable games have a single CCE. The direct implication is that these games are stable in the dynamical sense, as the empirical time average of regret-minimization dynamics must converge to the unique Nash equilibrium outcome. 
For completeness, we provide here the formal statement and a simple proof. Some relevant references in this context include \cite{calvo2006set,moulin1984dominance,nachbar1990evolutionary}.

\begin{lemma}\label{thm:unique-CCE-in-DS-games}
In any dominance-solvable game the only CCE is the unique pure Nash equilibrium. 
\end{lemma}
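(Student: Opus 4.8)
The plan is to induct on the number of rounds in an iterated‑elimination order $e_1,\dots,e_K$ that solves the game (serialising so that one strategy is removed per round), carrying the invariant that an \emph{arbitrary} CCE $\mu$ assigns probability zero to every strategy eliminated so far. Equivalently, after each round $\mu$ is supported on the product $\prod_i S_i$ of the currently surviving strategy sets; this equivalence is just a statement about marginals, since if $\mu(\mathbf a)>0$ for some $\mathbf a$ with $a_j\notin S_j$ then the $j$‑th marginal of $\mu$ puts positive mass on $a_j$. Because the game is dominance solvable, after round $K$ the product of surviving sets is the single profile $a^*$, so $\mu$ is forced to be the point mass on $a^*$; since $\mu$ was an arbitrary CCE and $a^*$ is the unique Nash equilibrium, the lemma follows.

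The heart of the argument is one inductive step: assuming $\mu$ is a CCE supported on $\prod_i S_i$ and that strategy $e$ of player $j$ is strictly dominated, inside this reduced game, by a (possibly mixed) strategy $\sigma$ of player $j$, conclude that $\Pr_\mu[\,j\text{ plays }e\,]=0$. The deviation to use is the \emph{unconditional} one allowed in a CCE: ``player $j$ always plays $\sigma$''. The CCE inequality gives $\E_\mu[u_j(\mathbf a)]\ge\E_{\mathbf a\sim\mu}[u_j(\sigma,\mathbf a_{-j})]$, and on every profile of $\mathrm{supp}(\mu)$ with $a_j=e$ we have $u_j(\sigma,\mathbf a_{-j})>u_j(e,\mathbf a_{-j})$ (legitimately, since $\mathbf a_{-j}\in\prod_{i\ne j}S_i$ there); a positive probability on $e$ should then make the CCE inequality fail strictly. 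Feeding the resulting support restriction back in handles the earlier rounds, and the last round is automatically fine: there all players but one are already pinned to $a^*_{-j}$, the eliminated strategy is dominated simply by $a^*_j$, and ``always play $a^*_j$'' dominates player $j$'s only remaining alternative, so the deviation argument closes.

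The step I expect to be the real obstacle is precisely this sub‑claim, where the gap between correlated and \emph{coarse} correlated equilibrium matters. For a correlated equilibrium the analogue is immediate via the swap deviation ``when recommended $e$, play $\sigma$ instead'', but that conditional deviation is unavailable for a CCE. The unconditional deviation ``always play $\sigma$'' yields a contradiction only when $\sigma$ is at least as good as \emph{every} strategy of player $j$ appearing in $\mathrm{supp}(\mu)$, not merely strictly better than the single eliminated strategy $e$: otherwise a different surviving strategy of player $j$ in $\mathrm{supp}(\mu)$ can ``cover'' for $e$ through the correlation with the opponents, keeping $\E_\mu[u_j]$ above $\E_\mu[u_j(\sigma,\mathbf a_{-j})]$ while $e$ still carries positive weight. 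This is automatic for the $2\times2$ dominance‑solvable games that are the paper's main concern --- there the player whose strategy is removed has a strictly dominant strategy, so ``always play it'' dominates the whole support, pinning that player down after the first round and leaving a $1\times2$ game in which the argument repeats --- and more generally whenever the elimination order can be chosen so that each eliminated strategy is dominated by a strategy weakly dominating all of that player's surviving strategies. For a fully general $n\times m$ dominance‑solvable game one would either have to prove that such a ``locally dominant'' order always exists, or replace the per‑round argument by a more global one that also invokes the \emph{other} players' CCE constraints (which is where iterated solvability re‑enters); I would treat closing this case as the crux, and I would want to double‑check whether the clean statement needs to be read with the $2\times2$ (or an analogous local‑dominance) hypothesis in mind.
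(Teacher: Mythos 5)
Your suspicion about the crux is exactly right, and unfortunately it cannot be repaired: the lemma as stated is false for general $n\times m$ dominance-solvable games (under the paper's Definition~\ref{dominance-solvable game}), and the inductive step you flag is precisely where it breaks. Concretely, consider the $3\times 2$ game with row strategies $b,c,e$, column strategies $L,R$, and payoffs
\[
\begin{array}{c|cc}
 & L & R\\ \hline
b & (5,1) & (5,0)\\
c & (10,1) & (0,0)\\
e & (4,0) & (4,3)
\end{array}
\]
It is dominance solvable by pure-strategy eliminations ($b$ strictly dominates $e$; then $L$ strictly dominates $R$; then $c$ strictly dominates $b$), with unique Nash equilibrium $(c,L)$. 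Yet the distribution $\mu$ that puts probability $\tfrac12$ on each of $(c,L)$ and $(e,R)$ is a CCE: the row player gets $7$ while the fixed deviations to $b$, $c$, $e$ yield only $5$, $5$, $4$; the column player gets $2$ while $L$, $R$ yield $0.5$, $1.5$. So a CCE can place probability $\tfrac12$ on the strictly dominated strategy $e$ --- exactly the ``covering'' effect you describe, with $c$ (correlated with $L$) compensating for the losses that $e$ incurs. This also shows that the paper's own proof, which follows the same induction and asserts that a player who plays $s_1$ with positive frequency ``must accumulate linear regret'' because $s_1$ is dominated, makes precisely the unjustified inference you isolated: external regret compares the \emph{total} payoff to the best fixed action, and the rounds in which other surviving actions are played can more than cover for the dominated one.

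What survives is exactly what you proved. For $2\times 2$ games the first eliminated strategy is dominated by that player's only other strategy, which therefore trivially weakly dominates his entire surviving set, so your unconditional-deviation argument pins that player down, and the remaining $1\times 2$ game is closed by your last-round argument. The same reasoning works whenever the elimination order can be chosen so that each eliminated strategy is dominated by a strategy that weakly dominates all of that player's currently surviving strategies (in particular, whenever each elimination is by a strictly dominant strategy of the reduced game); such a hypothesis should be added to the statement. This restricted version covers the $2\times 2$ examples of Section~\ref{sec:DS-games}, and the proof of Theorem~\ref{thm:DS-game-manipulations} only ever needs the special case of a globally dominant declared strategy, which your argument handles cleanly.
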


\begin{proof} 
Consider any dominance-solvable game. Denote the Nash equilibrium joint action distribution by $\sigma_{NE}$. In this distribution there is probability one for the pure Nash equilibrium joint action profile and probability zero for all other action profiles. First, note that dominance-solvable games have a unique Nash equilibrium, and that every Nash equilibrium is also a CCE, and so $\sigma_{NE}$ is a CCE. Next, assume by way of contradiction that there is another distribution $\sigma \neq \sigma_{NE}$ that is also a CCE. Fix any order $\textbf{s}$ of iterated elimination of strictly dominated strategies in the game, such that $s_i$ is the $i$'th eliminated action. Since $\sigma$ is different from the pure Nash equilibrium distribution, there exists an action that has the smallest index $i$ such that $s_i$ is played with positive probability according to $\sigma$, but played with zero probability according to $\sigma_{NE}$. If $i=1$ then $s_i$ is a strictly dominated strategy. Therefore, the player that plays action $s_1$ with a finite frequency (with high probability) must accumulate linear regret (since action $i=1$ is specifically dominated also by the best fixed strategy in hindsight which yields, by definition, zero regret). Since the CCE condition is equivalent to the requirement that all players have, with high probability, a sub-linear regret over time, $\sigma$ cannot be a CCE, a contradiction. Hence $i>1$. However, if $i=2$, since action $1$ is played with zero probability, a player that plays $s_2$ with a finite frequency must also accumulate linear regret, which again violated the CCE condition, and thus $i>2$. The same argument holds until reaching the actions in $\textbf{s}$ which comprise together the pure Nash equilibrium profile (a single action for each player). Thus, we reach the contradiction $\sigma = \sigma_{NE}$ and so the unique pure Nash equilibrium of a dominance-solvable game is also its unique CCE.  
\end{proof}

\begin{proof} (Theorem 1):   
Consider any dominance-solvable $n\times m$ game in which there is a player who's Stackelberg value (i.e., his utility in a pure-strategy Stackelberg equilibrium of the game where he plays the first action) is higher than his utility in the unique Nash equilibrium of the game. Assume for convenience and without loss of generality that this is player $1$. For a sufficiently unrestricted parameter space, this player can declare his Stackelberg strategy as a dominant strategy to his agent. In this case, regret-minimization dynamics will quickly reach distributions of play where only this dominating strategy is played, regardless of the actions of the other agent. Specifically, if the opponent provides a truthful declaration to his agent, the declared game has a unique Nash equilibrium which is exactly the Stackelberg outcome. Thus a unilateral manipulation of player $1$ can give him his Stackelberg value and so improve his utility. 
\end{proof}

\section{Cournot Competition Games}\label{sec:appendix-Cournot}

As described in the main text, we consider linear Cournot competition games \cite{cournot1838recherches,even2009convergence,mas1995microeconomic}, where player $1$ produces quantity $q_1 \in \mathcal{R}^+$ with a per-unit production cost of $c_1$ (such that his total production cost is $c_1\cdot q_1$) and player $2$ produces quantity $q_2 \in \mathcal{R}^+$ with a per-unit production cost of $c_2$. The utilities of the players are
{\small
$
u_1 = q_1(a - b(q_1 + q_2) - c_1)
$, 
and 
$  
u_2 = q_2(a - b(q_1 + q_2) - c_2), 
$
}
where $a$ and $b$ are commonly known positive constants. The Nash equilibrium of the game depends on the parameters as follows. 
If $a + c_2 - 2c_1 > 0$ and $a + c_1 - 2c_2>0$, then the Nash equilibrium is  
{\small
$q_1^{true} = \frac{1}{3b}(a + c_2 - 2c_1)$  
and   
$q_2^{true} = \frac{1}{3b}(a + c_1 - 2c_2)$. 
}
If $a + c_1 - 2c_2>0$ and $c_1 < a$, the equilibrium is 
{\small
$q_1^{true} = \frac{a-c_1}{2b}$  
and
$q_2^{true} = 0$,   
}
and symmetrically, if $a + c_2 - 2c_1>0$ and $c_2 < a$, the equilibrium is $q_1 = 0$ and $q_2^{true} = \frac{a-c_2}{2b}$. 
Otherwise, in the Nash equilibrium both players produce zero.

Thus, there are four parameter regions of interest according to the four possible types of unique Nash equilibria of the game, as illustrated in Figure \ref{fig:cournot_regions}. The parameter region $A=\{c_1,c_2|a + c_2 - 2c_1 > 0, \ a + c_1 - 2c_2 > 0, \ c_1>0, \ c_2>0\}$ shown in the figure is the region where both agents produce positive quantities (the shaded areas in region $A$ in the figure relate to equilibria of the meta-game; see below). The parameter regions $B=\{c_1,c_2|a + c_1 - 2c_2 > 0, \  0 < c_1 < a, \ c_2>0\}$ and $C=\{c_1,c_2|a + c_2 - 2c_1 > 0, \  0 < c_2 < a, \ c_1>0\}$ are regions where only one player produces a positive quantity. In the remaining region, region $D=\{c_1,c_2|c_1, c_2 \geq a\}$, both agents produce zero. 

Consider the meta-game defined for this game where the true types of the players are their per-unit production costs $c_1,c_2$. Player $1$ declares to his agent a value\footnote{Notice that any declaration $x > a$ leads to zero production, and so it is equivalent to declaring $x=a$. Thus, $x_i \leq a$ suffices for a full description of the meta-game, where region $D$ is described by the point $x_1=x_2=a$.} $x_1 \in \mathcal{R}^+$ and player $2$ declares to his agent a value $x_2 \in \mathcal{R}^+$. The agents then interact repeatedly. Since this game is known to be ``socially concave'' \cite{even2009convergence}, the time-average dynamics of regret-minimizing agents must converge to the Nash equilibrium of the game that is defined by the parameters $x_1,x_2$ provided by the users. 

Given the declarations $x_1, x_2$, we can identify the relevant parameter region in which the declared game lies, substitute the appropriate equilibrium production levels of the agents in the utility functions for the players, and obtain the utilities of the players. The utilities in the four regions as functions of the declarations $x_1,x_2$, are 
{\small
$
u_1 = \frac{1}{9b}(a + x_2 - 2x_1)(a + x_1 + x_2 - 3c_1)$,  
$
u_2 = \frac{1}{9b}(a + x_1 - 2x_2)(a + x_1 + x_2 - 3c_2)$ 
} 
in region $A$,  
$u_1 = \frac{1}{4b}(a - x_1)(a + x_1 - 2c_1)$,    
$u_2 = 0$ 
in region $B$, and 
$u_1 = 0$, 
$u_2 = \frac{1}{4b}(a - x_2)(a + x_2 - 2c_2)$ 
in region $C$. In region $D$ there is no production and thus zero utility.

The following lemma specifies the equilibria of the meta-game for those cases where in these equilibria both agents produce positive quantities (i.e., where the equilibrium declarations are in region $A$). 

\begin{lemma}\label{thm:meta-game-both-produce-NE} 
Let $x_1^* = \frac{1}{5}(8c_1 - 2c_2 - a)$ and $x_2^* = \frac{1}{5}(8c_2 - 2c_1 - a)$.
If in the equilibrium of the meta-game both players produce positive quantities, then the equilibrium declarations are $x_1^*, x_2^*$ if $x_1^*, x_2^* \geq 0$; $(x_1 = \frac{1}{4}(6c_1-a)^+$, $x_2 = 0)$ if $x_2^* < 0, x_1^*\geq 0$;  $(x_1 = 0$, $x_2 = \frac{1}{4}(6c_2-a)^+)$ if $x_1^* < 0, x_2^* \geq 0$; and $(x_1=0,x_2=0)$ otherwise.
\end{lemma}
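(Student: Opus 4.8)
The plan is to reduce the whole statement to a best-reply computation carried out inside parameter region $A$, where the meta-game payoffs are the explicit polynomials $u_1 = \tfrac{1}{9b}(a+x_2-2x_1)(a+x_1+x_2-3c_1)$ and $u_2 = \tfrac{1}{9b}(a+x_1-2x_2)(a+x_1+x_2-3c_2)$ recorded above. Because the hypothesis of the lemma places the equilibrium inside region $A$, each player's declaration is in particular optimal among declarations that keep the declared game in region $A$, and in a neighborhood of such a declaration $u_i$ equals the smooth formula above. I would differentiate: $\partial u_1/\partial x_1 = \tfrac{1}{9b}(6c_1 - a - 4x_1 - x_2)$ and $\partial^2 u_1/\partial x_1^2 = -\tfrac{4}{9b}<0$, so $u_1$ is strictly concave in $x_1$; hence, together with the feasibility constraint $x_1\ge 0$, player $1$'s best reply to a fixed $x_2$ is $\tfrac14(6c_1-a-x_2)$ when this quantity is nonnegative and $0$ otherwise, and symmetrically for player $2$.

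Next I would solve the simultaneous \emph{unconstrained} best-reply equations $4x_1+x_2 = 6c_1-a$ and $x_1+4x_2 = 6c_2-a$; this $2\times 2$ linear system has the unique solution $x_1^* = \tfrac15(8c_1-2c_2-a)$, $x_2^* = \tfrac15(8c_2-2c_1-a)$, exactly as in the statement. If $x_1^*,x_2^*\ge 0$, neither nonnegativity constraint is active, $(x_1^*,x_2^*)$ is a mutual best reply, and this gives the first case.

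The remaining cases correspond to binding constraints. Take $x_2^* < 0 \le x_1^*$. I would first argue that any region-$A$ equilibrium must have $x_2 = 0$: if $x_2>0$, player $2$'s first-order condition holds, and then either $x_1>0$ as well (forcing $(x_1,x_2)=(x_1^*,x_2^*)$, which contradicts $x_2^*<0$) or $x_1=0$, in which case optimality of $x_1=0$ against $x_2=\tfrac14(6c_2-a)$ forces $x_1^*\le 0$, contradicting $x_1^*\ge 0$ outside the degenerate equality. With $x_2=0$, player $1$'s best reply is $\tfrac14(6c_1-a)^+$, matching the claimed profile; it then remains to verify that player $2$'s best reply to this declaration is $0$, i.e. $\tfrac14(6c_2-a-x_1)\le 0$. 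I would check this by a short sign split: when $6c_1\ge a$ it reduces to $8c_2\le 2c_1+a$, which is precisely $x_2^*<0$; when $6c_1<a$ we have $x_1=0$ and need $c_2\le a/6$, which follows because $x_2^*<0$ gives $8c_2<2c_1+a<\tfrac43 a$. The case $x_1^*<0\le x_2^*$ is the mirror image. Finally, if $x_1^*,x_2^*<0$, combining $4$ times the inequality $8c_1-2c_2-a<0$ with $8c_2-2c_1-a<0$ yields $c_1<a/6$, and symmetrically $c_2<a/6$; hence $\tfrac14(6c_i-a)^+=0$ for both $i$, so $(0,0)$ is a mutual best reply.

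The step I expect to be the main obstacle is the boundary bookkeeping in the last paragraph: one must verify in each sub-case that the claimed profile really is a mutual best reply rather than merely a fixed point of the unconstrained reply map, and that the strict sign conditions on $x_1^*,x_2^*$ are exactly strong enough to certify the needed inequalities. One must also keep in mind the standing assumption that both players produce positive quantities — that the derived profiles lie in region $A$ — since for some parameter values the ``if'' clause of the lemma is vacuous (the region-$A$ equilibrium does not exist and the true equilibrium lies in region $B$ or $C$), those cases being handled by the companion characterization rather than by this lemma.
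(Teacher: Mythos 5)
Your proposal is correct and follows essentially the same route as the paper: differentiate the region-$A$ utilities, solve the resulting linear first-order system to obtain $(x_1^*,x_2^*)$, and handle the corner cases by clamping at the nonnegativity constraint. In fact your verification of the boundary cases (showing that $x_2=0$ is forced when $x_2^*<0\le x_1^*$ and explicitly checking that the clamped profile is a mutual best reply) is somewhat more careful than the paper's, which asserts a bit loosely that $x_i^*<0$ makes $x_i=0$ a best reply to \emph{any} declaration of the opponent.
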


\begin{proof} 
If in the equilibrium of the meta-game both players produce positive quantities, then the equilibrium can be found using the derivatives of the utilities as follows.  
\small
\begin{equation*}
\frac{\partial u_1}{\partial x_1} = \frac{1}{9b}(6c_1-a-4x_1-x_2) = 0, \ \quad \
\frac{\partial u_2}{\partial x_2} = \frac{1}{9b}(6c_2-a-x_1-4x_2) = 0.
\end{equation*}
\normalsize
$x_1^* = \frac{1}{5}(8c_1 - 2c_2 - a)$ and $x_2^* = \frac{1}{5}(8c_2 - 2c_1 - a)$ are the unique solution to these equations. If $x_1, x_2$ are positive then they are the Nash equilibrium of the meta-game. If $x_1 = \frac{1}{5}(8c_1 - 2c_2 - a) < 0$ then the utility of player $1$ is decreasing in his declaration, and thus declaring $x_1 = 0$ is his best-reply to any declaration of the other player. The best-reply of player $2$ is then obtained from the above derivatives by substituting $x_1=0$ in the second equation, yielding $x_2 = \frac{1}{4}(6c_2 - a)$. If this expression is non-negative then it is the best-reply of player $2$. If this expression is negative then the utility of player $2$ is decreasing in $x_2$ and his best-reply is $x_2=0$. The same argument holds for player $2$; if $x_2 = \frac{1}{5}(8c_2 - 2c_1 - a)  < 0$ then declaring $x_2=0$ is the best-reply of player $2$ to any declaration of player $1$ and the best-reply of player $1$ is then $x_1 = \frac{1}{4}(6c_1 - a)^+$. 
\end{proof}

Theorem \ref{thm:cournot-meta-game-NE} characterizes the types of equilibria of the meta-game when in the equilibrium of the game with the true parameters both players produce positive quantities. To prove the theorem, we consider its following technical restatement. 

\begin{theorem*} (Restatement of Theorem \ref{thm:cournot-meta-game-NE}):  
Consider a two-player linear Cournot competition with positive linear costs where both players produce positive quantities in the Nash equilibrium. 
\begin{enumerate}[leftmargin=*]

\item If the production costs $c_1, c_2$ of the two players are sufficiently low such that $c_1,c_2 < a/2$ or sufficiently close such that $\frac{1}{2}(3c_2-a)    < c_1 <\frac{1}{3}(2c_2+a)$, then in the equilibrium of the meta-game both players declare to their agents values that are lower than their true production costs, produce larger quantities than those they produce in the Nash equilibrium of the game with the truthful reports (and thus the price is lower), and have lower utilities.

\item If the production cost of one of the players is at least $a/2$ and the cost of the other player is sufficiently low, namely, $c_2 \geq a/2$ and $c_2 \geq \frac{1}{3}(2c_1+a)$, or $c_1 \geq a/2$ and $c_1 \geq \frac{1}{3}(2c_2+a)$, then in the equilibrium of the meta-game the player with the low production cost declares a value that is lower than his true cost and produces alone. The quantity produced by this player alone is larger than the total quantity produced by both players in the Nash equilibrium of the game with the truthful reports (and thus the price is lower), and the utility for this player is higher than his utility in the Nash equilibrium of the game with the truthful reports. 
\end{enumerate}
\end{theorem*}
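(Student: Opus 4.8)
The plan is to treat the two parts separately. Part~1, where both firms keep producing in the meta-game equilibrium, I would handle by applying Lemma~\ref{thm:meta-game-both-produce-NE} to pin down the equilibrium declarations and then substituting them into the quantity and utility formulas from Section~\ref{sec:appendix-Cournot}; Part~2, where the low-cost firm drives the other out, I would handle by a direct best-response analysis in region~$B$ of the declared game, to which Lemma~\ref{thm:meta-game-both-produce-NE} does not apply. In both parts, after the right substitution, each claim collapses to a polynomial inequality in $c_1,c_2,a$ to be checked against the standing assumption $(c_1,c_2)\in A$ and the stated hypotheses; and one notes at the outset that the hypotheses of Part~1 and Part~2 are complementary and jointly exhaust~$A$, so exactly one of the two regimes is in force.

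For Part~1, after checking (case by case, in the interior ``kite'' case $x_i^\ast=\tfrac15(8c_i-2c_j-a)\ge 0$ and in the corner cases of Lemma~\ref{thm:meta-game-both-produce-NE}) that no ``one firm produces zero'' equilibrium coexists and that the ``both produce'' quantities are positive, the cleanest device is the substitution $d_i:=c_i-x_i^\ast=\tfrac15(a+2c_j-3c_i)$. Under it one verifies $q_i^{\mathrm{meta}}=\tfrac{2d_i}{b}$, the declared-game price equals $a-2(d_1+d_2)$, and the true-cost markup satisfies $a-b(q_1^{\mathrm{meta}}+q_2^{\mathrm{meta}})-c_i=d_i$, whence $u_i^{\mathrm{meta}}=\tfrac{2d_i^2}{b}$; since $u_i^{\mathrm{true}}=\tfrac{(a+c_j-2c_i)^2}{9b}=\tfrac{(4d_i+d_j)^2}{9b}$, the utility comparison reduces to $18d_i^2\le(4d_i+d_j)^2$, while the total-quantity and price comparisons reduce to $x_1^\ast+x_2^\ast=\tfrac25(3c_1+3c_2-a)<c_1+c_2$, i.e. to $c_1+c_2<2a$, which holds throughout $A$. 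The corner profiles $(x_1=\tfrac14(6c_1-a)^+,x_2=0)$ and $(0,0)$ are disposed of by the same substitutions, and in every case one reads off that the declarations lie (weakly) below the true costs. I expect the most tedious piece of Part~1 to be precisely this case bookkeeping and the verification of the positivity of quantities in the relevant subregions.

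For Part~2 assume without loss of generality $c_2\ge a/2$ and $c_2\ge\tfrac13(2c_1+a)$; this already forces $c_1<c_2<a$ inside $A$. I would exhibit the equilibrium explicitly as firm~$1$ declaring $x_1=2c_2-a$ (non-negative because $c_2\ge a/2$, and below $c_1$ because $a+c_1-2c_2>0$ in $A$) and firm~$2$ declaring $x_2=c_2$. The two facts to verify are: (i) given $x_1$, firm~$2$'s agent produces zero iff $x_1\le 2c_2-a$, and whenever firm~$2$ declares below $c_2$ to try to enter, the declared-game price falls below $c_2$ so its true profit is negative --- hence producing zero (payoff $0$) is firm~$2$'s best reply; and (ii) on $\{x_1\le 2c_2-a\}$ firm~$1$'s true-cost payoff is the monopoly expression $\tfrac1{4b}(a-x_1)(a+x_1-2c_1)$, which is increasing there because its maximizer $x_1=c_1$ exceeds $2c_2-a$, while for $x_1>2c_2-a$ the hypothesis $c_2\ge\tfrac13(2c_1+a)$ is exactly what makes firm~$1$'s region-$A$ payoff non-increasing past the boundary $x_1=2c_2-a$ --- so $x_1=2c_2-a$ is firm~$1$'s best reply. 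Finally I would substitute to get $q_1^{\mathrm{meta}}=\tfrac{a-c_2}{b}$ and compare with $q_1^{\mathrm{true}}+q_2^{\mathrm{true}}=\tfrac1{3b}(2a-c_1-c_2)$ (the inequality reduces to $a+c_1-2c_2>0$, from $A$), and compare $U_1^{\mathrm{meta}}=\tfrac{(a-c_2)(c_2-c_1)}{b}$ with $u_1^{\mathrm{true}}=\tfrac{(a+c_2-2c_1)^2}{9b}$; writing $u=a-c_2$, $v=c_2-c_1$ the latter becomes $(u-v)(u-4v)\le 0$, i.e. $v\le u\le 4v$, where $u\ge v$ is again from $A$ and $u\le 4v$ follows from $c_2\ge\tfrac13(2c_1+a)\ge\tfrac15(4c_1+a)$. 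The main obstacle I anticipate is step~(ii): establishing that $x_1=2c_2-a$ beats every region-$A$ deviation requires the monotonicity argument that explains why the precise threshold $\tfrac13(2c_1+a)$ appears in the hypothesis --- everything else is substitution and the elementary inequalities above.
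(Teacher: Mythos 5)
Your overall architecture coincides with the paper's: Part~1 via Lemma~\ref{thm:meta-game-both-produce-NE} plus substitution, Part~2 via a direct best-response analysis at the region boundary using the monopoly payoff. Your Part~2 is complete and in fact cleaner than the paper's: the paper rules out joint production by contradiction and then checks the sign of the quadratic form $a^2+10c_2^2+4c_1^2+5ac_1-7ac_2-13c_1c_2$ by inspection, whereas your substitution $u=a-c_2$, $v=c_2-c_1$ factors it as $(u-v)(u-4v)$ and your monotonicity argument explains where the threshold $\tfrac13(2c_1+a)$ comes from. The change of variables $d_i=c_i-x_i^{*}$ in Part~1 is likewise a genuine simplification.

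However, Part~1 has a real gap: you reduce the utility claim to $18d_i^2\le(4d_i+d_j)^2$ and leave it ``to be checked,'' but this inequality is \emph{not} implied by the stated hypotheses. It is equivalent to $d_j\ge(3\sqrt2-4)\,d_i$, and the closeness condition only guarantees $d_1,d_2>0$, with no bound on their ratio. Concretely, take $a=b=1$, $c_1=0.86$, $c_2=0.9$: this lies in region $A$, satisfies $\tfrac12(3c_2-a)=0.85<c_1<\tfrac13(2c_2+a)\approx0.933$, and Lemma~\ref{thm:meta-game-both-produce-NE} gives the interior (and, as one can check against all boundary deviations, unique) meta-game equilibrium $(x_1,x_2)=(0.816,\,0.896)$, i.e.\ $d_1=0.044$, $d_2=0.004$. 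Then $u_1^{\mathrm{meta}}=2d_1^2=0.003872>0.0036=u_1^{\mathrm{true}}$, and also $q_2^{\mathrm{meta}}=0.008<0.02=q_2^{\mathrm{true}}$ (your per-player quantity comparison reduces to $d_j\le 2d_i$, which likewise fails here; note you only verified the \emph{total} quantity, while the restatement asserts that \emph{both} players produce more). So the step ``check the polynomial inequality against the hypotheses'' cannot be completed: the claim is false on part of the hypothesized region. The paper's own proof conceals this by asserting that the utility comparison ``can be verified by substituting''; your explicit reduction is correct and valuable precisely because it isolates the missing condition ($(3\sqrt2-4)<d_1/d_2<1/(3\sqrt2-4)$ for the utility claims and $1/2<d_1/d_2<2$ for the per-player quantities), but as written your proof of Part~1 does not close, and no proof of Part~1 exactly as stated can.
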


\begin{proof}
We start with the following lemma concerning the best-replies of the players. 
\begin{lemma} 
If $0< c_1 < a/2$, the best-reply of player $1$ to any declaration $x_2 < a$ of player $2$ is strictly less than player $1$'s true cost $c_1$. 
\end{lemma}
\begin{proof}
Assume that player $2$ declares a cost $0 \leq x_2 < a$. The utility of player $1$ is $u_1 = \frac{1}{9b}(a + x_2 - 2x_1)(a + x_1 + x_2 - 3c_1)$. The best-reply, $\argmax_{x_1}(u_1)$, is obtained from $\frac{\partial u_1}{\partial x_1} = \frac{1}{9b}(6c_1-a-4x_1-x2) = 0$, resulting in $x_1 = \frac{1}{4}(6c_1 - a - x_2)$.
If this expression is non-negative, then this is the best-reply; if it is negative, then the utility of player $1$ is decreasing in $x_1$ and the best-reply is $x_1=0$. If the best-reply is zero then it is less than $c_1$ as required. If the best-reply is positive then, since $c_1 < a/2$, it holds that $x_1 = \frac{1}{4}(6c_1 - a - x_2) \leq \frac{1}{4}(6c_1 - a) < \frac{1}{4}(6c_1 - 2c_1) = c_1$. 
\end{proof}
It thus follows that if $c_1, c_2 < a/2$, specifically, also in the Nash equilibrium declaration profile, which is a mutual best-reply profile, both declarations are less than the true costs.

Next, consider the case where the production costs of the two players, $c_1, c_2$, are at least $a/2$ and it holds that $\frac{1}{2}(3c_2-a) < c_1 <\frac{1}{3}(2c_2+a)$. The equilibrium of the meta-game is then $x_1 = \frac{1}{5}(8c_1 - 2c_2 - a)$ and $x_2 = \frac{1}{5}(8c_2 - 2c_1 - a)$ (by Lemma \ref{thm:meta-game-both-produce-NE}). Since $c_2 > \frac{1}{2}(3c_1 - a)$, it holds that $x_1 = \frac{1}{5}(8c_1 - 2c_2 - a) < \frac{1}{5}(8c_1 -3c_1+a-a)=c_1$, and similarly, since $c_1 > \frac{1}{2}(3c_2 - a)$, it holds that $x_2 = \frac{1}{5}(8c_2 - 2c_1 - a) < \frac{1}{5}(8c_2 -3c_2+a-a)=c_2$. Thus, the decelerations of both players are lower than their true costs, and each player produces a larger quantity than in the truthful equilibrium. The utilities obtained are strictly lower than those obtained in the truthful equilibrium, as can be verified by substituting the equilibrium declarations $x_1, x_2$ into the utility functions. 

Next assume w.l.o.g. that player $2$ has a cost $c_2 \geq a/2$ and $c_2 \geq \frac{1}{3}(2c_1 + a)$. 
Assume by way of contradiction that in the equilibrium of the meta-game both players produce positive quantities. 
On the one hand, the sufficient and necessary condition for positive production by both agents under a declaration profile $(x_1, x_2)$ is $a + x_1 - 2 x_2 > 0$ and $a + x_2 - 2 x_1 > 0$. On the other hand, consider the equilibrium given by Lemma \ref{thm:meta-game-both-produce-NE}. If the equilibrium is $x_1 = \frac{1}{5}(8c_1 - 2c_2 - a)$, $x_2 = \frac{1}{5}(8c_2 - 2c_1 - a)$, by substituting these declarations in the positive-production condition we obtain $a + 2c_1 - 3c_2 > 0$ and $a + 2c_2 - 3c_1 > 0$, which is in contradiction to $c_2 \geq a/2$ and $c_2 \geq \frac{1}{3}(2c_1 + a)$. The other alternative is that the equilibrium is 
$x_1 = 0, x_2 = \frac{1}{4}(6c_2-a)$. Substituting these declarations into the positive-production condition we obtain $c_2 < a/2$, which is a contradiction. 

Therefore, it cannot be that both agents produce. Notice that player $1$ who has a low cost will prefer to produce even if player $2$ declares zero. Therefore, in the equilibrium of the meta-game, player $1$ produces alone, and declares the value that maximizes his utility as a monopolist, subject to the constraint that player $2$ still prefers not to produce, which is $x_1 = 2c_2 - a < c_1$ (since in the truthful Nash equilibrium both players produce, which implies $a + c_1 - 2c_2 > 0$). The best-reply declaration of player $2$ is then any declaration $x_2 \geq c_2$. 

The quantity that is produced by player $1$ in the equilibrium of the meta-game where he produces alone is $\frac{a-c_2}{b}$, which is more than the total quantity of $\frac{1}{3b}(a - c_1 - c_2)$ that is produced in the Nash equilibrium of the game with the true parameters (since $\frac{1}{3b}(a - c_1 - c_2) < \frac{1}{3b}(a - (2c_2-a) - c_2) = \frac{1}{3b}(2a-3c_2) < \frac{a-c_2}{b}$), thus yielding a lower price. The utility of player $1$ in this meta-game equilibrium is $u_1 = \frac{1}{b}(a-c-2)(c_2 - c_1)$. The utility of player $1$ in the Nash equilibrium of the game with the true parameters is $u_1^{true} = \frac{1}{9b}(a + c_2 - 2c_1)^2$. The utility difference $u_1^{true} - u_1 = \frac{1}{9b}\left(a^2 + 10c_2^2 + 4c_1^2+5ac_1 - 7ac_2 - 13c_1c_2\right)$ is negative for $c_2 \geq a/2$ and $c_2 \geq \frac{1}{3}(2c_1 + a)$. That is, the utility of player $1$ who has a low production cost is higher in the equilibrium of the meta-game in which he ``drives player $2$ out of the market'' than in the equilibrium of the game with the true parameters. 
\end{proof}

\vspace{5pt}
Theorem \ref{thm:cournot-manipulation-free} describes the limited set of cases where the game is manipulation-free. Basically, to establish conditions on the game parameters such that the game is manipulation-free, we require that in the equilibrium of the meta-game $x_i = c_i$. The proof is then based on the characterization of the equilibria as specified in Theorem 2 and Lemma \ref{thm:meta-game-both-produce-NE}.  

\begin{proof} (Theorem \ref{thm:cournot-manipulation-free}):  
The game is manipulation free if in the equilibrium of the meta-game $x_1 = c_1$ and $x_2 = c_2$. If in the equilibrium of the meta-game both players produce positive quantities, then by Lemma \ref{thm:meta-game-both-produce-NE} it holds that the Nash equilibrium is $(x_1 = \frac{1}{5}(8c_1 - 2c_2 - a), x_2 = \frac{1}{5}(8c_2 - 2c_1 - a))$ if $x_1^*, x_2^* \geq 0$, $(x_1 = \frac{1}{4}(6c_1-a)^+, x_2 = 0)$ if $x_2^* < 0, x_1^*\geq 0$, $(x_1 = 0, x_2 = \frac{1}{4}(6c_2-a)^+)$ if $x_1^* < 0, x_2^* \geq 0$, and $(x_1=0,x_2=0)$ otherwise. In the first case, where $x_1 = \frac{1}{5}(8c_1 - 2c_2 - a), x_2 = \frac{1}{5}(8c_2 - 2c_1 - a)$, requiring $x_1 = c_1$ and $x_2 = c_2$ yields $x_1=x_2 = a$, in which case the players do not produce in the truthful equilibrium. That is, if both players prefer not to produce at all under the true parameters, then they also do not have any profitable manipulation. In the second case, $x_1 = \frac{1}{4}(6c_1-a)^+, x_2 = 0$, requiring $x_1 = c_1$ and $x_2 = c_2$ yields either $c_1 = a/2$ and $c_2 = 0$, in which case player $1$ does not produce, or $x_1=x_2=0$. Similarly, the symmetric case $x_1 = 0, x_2 = \frac{1}{4}(6c_2-a)^+$ yields either $c_1 = 0$ and $c_2 = a/2$ or $x_1=x_2=0$. Finally, in the case where $x_1^*,x_2^* < 0$, the equilibrium is $x_1=x_2=0$ which is truthful if and only if $c_1=c_2=0$. 

The remaining cases are the case where under the true parameters both players produce, but in the equilibrium of the meta-game only one player produces, and the case where under the true parameters only one player produces. In the first case, as shown in Theorem \ref{thm:cournot-meta-game-NE}, it must be that the player with the lower cost shades his declaration, and so this equilibrium is not truthful. In the other case, we assume w.l.o.g. that player $1$ is a monopolist. Player $1$ then maximizes his utility when declaring his true cost $x_1 = c_1$, producing a  quantity $q^*$. Player $2$ then prefers not to produce. It is not difficult to see that given that player $1$ declares the truth, any declaration of player $2$ that will lead him to produce a positive quantity will result in a total production grater than $q^*=\frac{a-c_1}{2b}$, and thus will give him a negative utility. Thus, player $2$ prefers to declare the truth, $x_2 = c_2$, and the truthful declarations form a Nash equilibrium of the meta-game. 
\end{proof}

\section{Opposing-Interests Games}\label{sec:appendix-opposing-interests-games}

\begin{figure}[!t]
\vspace{-8pt}
\centering
		\hspace{-27pt}
		\includegraphics[width=0.25\linewidth]{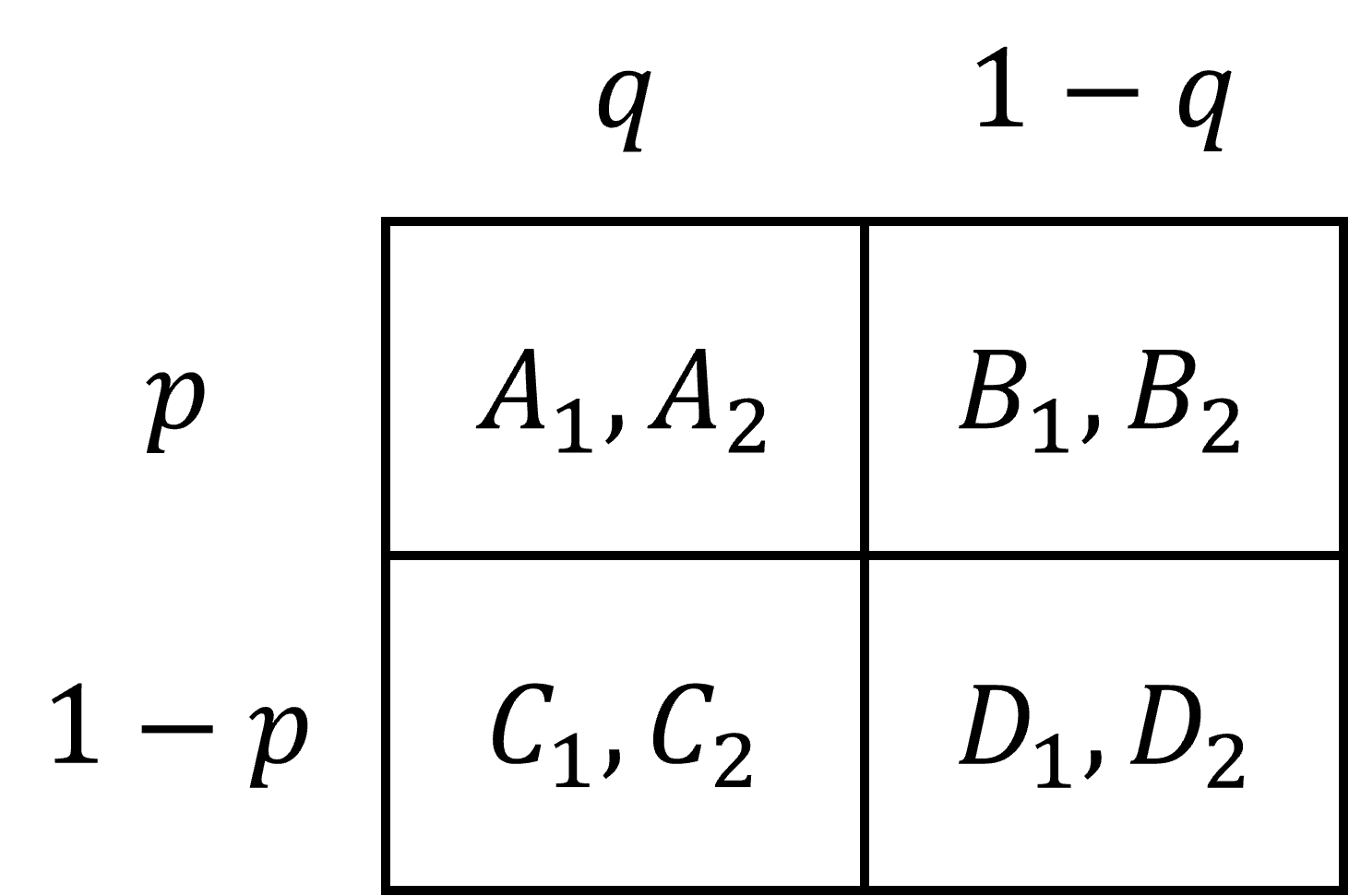}
		\caption{Notation for generic $2$$\times$$2$ games.}  
		\label{fig:2x2-game-notations}
\end{figure}

To set notations, Figure \ref{fig:2x2-game-notations} denotes the utilities of the players in each game outcome where capital letters denote the (pure) game outcomes and the subscripts denote player indices. Thus, in opposing-interests $2$$\times$$2$ games $A_1, D_1 > B_1, C_1$ and $A_2, D_2 < B_2, C_2$, or vice versa, i.e., all inequalities are reversed. Additionally, the parameter $p$ in the figure denotes the probability that the row player plays the top row in a mixed Nash equilibrium and the parameter $q$ denotes the probability that the column player plays the left column.

We start with two technical observations that will be useful for our proofs. These observations are formalizations of standard calculations according to the notation presented in Figure \ref{fig:2x2-game-notations} above. The first observation specifies the utilities of the players in any mixed strategy profile $(p,q)$, and the second observation shows the expressions for $(p,q)$ in a completely mixed Nash equilibrium, when such exists in the game. The proofs of these observations are straightforward.

\begin{observation}\label{mixed_strategy_payoffs} 
Consider a $2$$\times$$2$ game as presented in Figure \ref{fig:2x2-game-notations}. If the players play a mixed strategy profile $(p,q)$, i.e., both players mix their pure strategies such that the row player plays the top row w.p. $p\in[0,1]$ and the column player plays the left column w.p. $q\in[0,1]$, then the expected utilities $u_1$ of the row player and $u_2$ of the column player are given by  
{\small
$
\ u_1 = pq(A_1 - C_1 + D_1 - B_1) + p (B_1 - D_1) + q (C_1 - D_1) + D_1 
$
}
and 
{\small
$
\ u_2 = pq(A_2 - B_2 + D_2 - C_2) + p (B_2 - D_2) + q (C_2 - D_2) + D_2. 
$
}
\end{observation}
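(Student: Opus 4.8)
The plan is to compute the two expected utilities directly from the definition of a mixed strategy profile. Since the row player independently plays the top row with probability $p$ (and the bottom row with probability $1-p$) and the column player independently plays the left column with probability $q$ (and the right column with probability $1-q$), the four pure outcomes of Figure~\ref{fig:2x2-game-notations} are realized with probabilities $\Pr[A]=pq$, $\Pr[B]=p(1-q)$, $\Pr[C]=(1-p)q$, and $\Pr[D]=(1-p)(1-q)$, where I use the convention that $A$ is the top-left cell, $B$ the top-right, $C$ the bottom-left, and $D$ the bottom-right (so that $A$ and $D$ lie on the main diagonal, consistent with the opposing-interests conditions stated in the text).

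First I would write, by linearity of expectation,
\[
u_1 = pq\,A_1 + p(1-q)\,B_1 + (1-p)q\,C_1 + (1-p)(1-q)\,D_1,
\]
and the analogous expression for $u_2$ with the column player's payoffs $A_2,B_2,C_2,D_2$. Then I would expand each product, group terms by the monomials $pq$, $p$, $q$, and the constant $1$, and read off the coefficients: the coefficient of $pq$ is $A_i-B_i-C_i+D_i$, the coefficient of $p$ is $B_i-D_i$, the coefficient of $q$ is $C_i-D_i$, and the constant term is $D_i$. Rewriting $A_1-B_1-C_1+D_1$ as $A_1-C_1+D_1-B_1$ and $A_2-B_2-C_2+D_2$ as $A_2-B_2+D_2-C_2$ gives exactly the two displayed formulas.

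The only ``obstacle'' here is bookkeeping: there is no conceptual content, just the need to be careful with the sign pattern when expanding $(1-p)(1-q)$ and to keep the two players' payoff indices straight. As a sanity check I would evaluate the resulting expressions at the four corners of the unit square — $(p,q)=(1,1)$ should return $A_i$, $(0,0)$ should return $D_i$, $(1,0)$ should return $B_i$, and $(0,1)$ should return $C_i$ — which simultaneously tests all four coefficients and confirms the claimed identities.
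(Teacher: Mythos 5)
Your computation is correct and is exactly the ``standard calculation'' the paper alludes to when it declares the observation's proof straightforward: assign the product probabilities $pq$, $p(1-q)$, $(1-p)q$, $(1-p)(1-q)$ to the cells $A,B,C,D$, apply linearity of expectation, and collect the coefficients of $pq$, $p$, $q$, and the constant term. The corner sanity check at $(p,q)\in\{0,1\}^2$ is a nice touch but the argument matches the paper's intended proof in substance.
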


\begin{observation}\label{thm:mixed_NE_profile}
The mixed Nash equilibrium profile $(p,q)$ of a $2$$\times$$2$ game (as presented in Figure \ref{fig:2x2-game-notations}), if such exists, is given by 
{\small
$p = \frac{D_2 - C_2}{A_2 - B_2 + D_2 - C_2}$,  $q = \frac{D_1 - B_1}{A_1 - C_1 + D_1 - B_1}$.
}
\end{observation}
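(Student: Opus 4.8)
The plan is to derive the formulas directly from the first-order \emph{indifference} conditions that characterize a completely mixed equilibrium, feeding in the payoff expressions already recorded in Observation~\ref{mixed_strategy_payoffs}. If $(p,q)$ is a completely mixed Nash equilibrium then $p\in(0,1)$ and $q\in(0,1)$, so each player is strictly randomizing over both pure strategies; the standard principle then says that, given the opponent's strategy, a player's expected payoff must be constant in his own mixing weight. So the whole proof reduces to isolating $p$ and $q$ in two linear equations.

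Carrying this out for the row player: by Observation~\ref{mixed_strategy_payoffs}, $u_1 = pq(A_1 - C_1 + D_1 - B_1) + p(B_1 - D_1) + q(C_1 - D_1) + D_1$, and independence of $u_1$ from $p$ (for the fixed $q$ played by the column player) is exactly the vanishing of the coefficient of $p$, i.e. $q(A_1 - C_1 + D_1 - B_1) + (B_1 - D_1) = 0$. Solving gives $q = \frac{D_1 - B_1}{A_1 - C_1 + D_1 - B_1}$. Symmetrically, from $u_2 = pq(A_2 - B_2 + D_2 - C_2) + p(B_2 - D_2) + q(C_2 - D_2) + D_2$, independence of $u_2$ from $q$ forces $p(A_2 - B_2 + D_2 - C_2) + (C_2 - D_2) = 0$, hence $p = \frac{D_2 - C_2}{A_2 - B_2 + D_2 - C_2}$. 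Note that the two conditions decouple: each determines one player's weight purely in terms of the \emph{other} player's payoffs, so there is no joint system to solve.

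The only point that needs a word of care is well-definedness of the fractions, i.e. that the denominators $A_1 - C_1 + D_1 - B_1$ and $A_2 - B_2 + D_2 - C_2$ are nonzero; this is where the hypothesis ``if such exists'' does its work. If, say, $A_1 - C_1 + D_1 - B_1 = 0$, then the indifference equation also forces $B_1 = D_1$, making $u_1$ independent of $p$ for \emph{every} $q$, so no equilibrium value of $q$ is pinned down; ruling out this degeneracy (which is automatic in the opposing-interests setting, where the relevant payoff differences are strictly signed) makes the expressions meaningful. Beyond this, there is no real obstacle: the statement is an immediate consequence of Observation~\ref{mixed_strategy_payoffs} combined with the indifference conditions, and the ``straightforward'' computation amounts to nothing more than reading off and zeroing two coefficients.
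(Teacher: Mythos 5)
Your proof is correct and is exactly the standard indifference-condition computation the paper is alluding to when it declares the proofs of Observations \ref{mixed_strategy_payoffs} and \ref{thm:mixed_NE_profile} ``straightforward'' and omits them. Zeroing the coefficient of $p$ in $u_1$ to get $q$ and the coefficient of $q$ in $u_2$ to get $p$, together with your remark that the nonvanishing of the denominators is what the ``if such exists'' clause (and the opposing-interests sign conditions) guarantees, is precisely the intended argument.
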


Theorem \ref{thm:opposing-interests-NE} characterizes the equilibrium of the meta-game and the utilities obtained in it, when the players use any types of regret-minimizing agents and parameter spaces where any one of the four utilities of each player in the game is a parameter that the player can manipulate. The proof of the theorem first shows that both players will strictly prefer to use declarations that lead to a (``manipulated'') game between the agents that is a fully mixed game, and then the proof characterizes the declaration profiles that form equilibria of the meta-game and shows that all such equilibria lead to the same utilities as those obtained in the truthful Nash equilibrium. The theorem technically holds not only for manipulations of any one of the parameters of each player but for a broader range of parameter spaces which we term ``natural parameter spaces'' that have the following properties.

\begin{definition} A parameter space $P \subseteq \mathcal{R}^4$ of the row player in a $2$$\times$$2$ game with (true) row-player parameters $(A_1,B_1,C_1,D_1)$ is called \emph{natural} if it has the following properties. 
\begin{enumerate}[leftmargin=*]
	\item (Sufficient generality) The parameter space induces all possible values on the column's mixed strategy; i.e.,
for every $0 < q < 1$, there exists a declaration profile $(A'_1,B'_1,C'_1,D'_1) \in P$ such that $q \cdot A'_1 + (1-q) \cdot B'_1 = q \cdot C'_1 + (1-q) \cdot D_1'$.
	
	\item (Analyzability)  For every $(A'_1,B'_1,C'_1,D'_1) \in P$ either the best-replies to pure strategies do not change, i.e., $sign(A_1-C_1)=sign(A'_1-C'_1)$, and $sign(B_1-D_1)=sign(B'_1-D'_1)$, or
the player has a dominant strategy, i.e., $sign(A'_1-C'_1)=sign(B'_1-D'_1)\neq 0$.
\end{enumerate}
where $sign(x) = 1$ if $x>0$, $sign(x) = -1$ if $x<0$, and $sign(x) = 0$ if $x=0$.

\vspace{5pt}
\noindent
The parameter space for the column player is called \emph{natural} in a similar manner, and the whole parameter space of the game is called \emph{natural} if it is natural for both players.
\end{definition}

The following lemmas show examples of natural parameter spaces for opposing-interests games. The first example describes user manipulations that include all or any subset of the agent's utilities, as long as the best-reply structure of the game is preserved, and the second example includes arbitrary manipulations of any single parameter by each user. 
\begin{lemma}\label{thm:four-param-natural-parameter-space} 
For every opposing-interests game and every choice of a non-empty subset of the four parameters for each player, the parameter space where the non-chosen parameters are fixed to their true values and the chosen parameters are arbitrary as long as they conserve best-replies to pure strategies (i.e., for the row player  $sign(A_1-C_1)=sign(A'_1-C'_1)$ and $sign(B_1-D_1)=sign(B'_1-D'_1)$, and for the column player $sign(A_2-B_2)=sign(A'_2-B'_2)$ and $sign(C_2-D_2)=sign(C'_2-D'_2)$) is a natural parameter space. 
\end{lemma}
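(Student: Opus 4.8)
The plan is to verify directly the two defining properties of a natural parameter space for the row player, the column player being handled symmetrically. The analyzability property is immediate: by the very definition of the parameter space $P$ in the lemma, every declaration $(A'_1,B'_1,C'_1,D'_1)\in P$ satisfies $sign(A_1-C_1)=sign(A'_1-C'_1)$ and $sign(B_1-D_1)=sign(B'_1-D'_1)$, so the first disjunct in the definition of analyzability always holds. Hence all the work is in establishing the sufficient-generality property.

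For sufficient generality, recall that the condition $q A'_1+(1-q)B'_1 = q C'_1+(1-q)D'_1$ says exactly that, against a column mix $(q,1-q)$, the row player is indifferent between his two pure actions; rearranging (or invoking Observation~\ref{thm:mixed_NE_profile}) this is $\frac{q}{1-q}=\frac{D'_1-B'_1}{A'_1-C'_1}$. Since the game is opposing-interests, in the true game $A_1-C_1$ and $D_1-B_1$ are nonzero and of the same sign, so this ratio is a positive real; and as the ratio ranges over all of $(0,\infty)$ the corresponding $q=\frac{D'_1-B'_1}{(A'_1-C'_1)+(D'_1-B'_1)}$ ranges over all of $(0,1)$. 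Thus it suffices to show that, staying inside $P$, the quantity $\frac{D'_1-B'_1}{A'_1-C'_1}$ can be made to take every value in $(0,\infty)$.

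To do this I would fix once and for all a single chosen parameter $\theta$ in the (nonempty) manipulable subset, hold the other three row-player parameters at their true values, and let $\theta$ sweep the maximal interval on which both sign constraints are preserved (for instance $\theta=A_1$ swept over $(C_1,\infty)$ when $A_1>C_1$ in the true game). One then checks the four cases $\theta\in\{A_1,B_1,C_1,D_1\}$: in each case $\frac{D'_1-B'_1}{A'_1-C'_1}$ is a continuous, strictly monotone function of $\theta$ on that interval, and a short limit computation at the two endpoints shows it tends to $0$ at one end and to $+\infty$ at the other (e.g.\ with $\theta=A_1$ the ratio equals $\frac{D_1-B_1}{A_1-C_1}$, which $\to 0$ as $A_1\to\infty$ and $\to+\infty$ as $A_1\to C_1^+$). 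By the intermediate value theorem the ratio attains every value in $(0,\infty)$, giving sufficient generality for the row player. The ``vice versa'' orientation of opposing interests is identical after reversing the relevant inequalities, and the column-player side is the mirror image, using $\frac{p}{1-p}=\frac{D'_2-C'_2}{A'_2-B'_2}$ together with the column constraints $sign(A_2-B_2)=sign(A'_2-B'_2)$ and $sign(C_2-D_2)=sign(C'_2-D'_2)$. Combining, $P$ satisfies both properties for both players and is therefore natural.

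I do not expect a substantive obstacle here; the only thing requiring care is bookkeeping — confirming that at each endpoint of the sweep the sign constraints on the two differences $A'_1-C'_1$ and $B'_1-D'_1$ (and on the column side, $A'_2-B'_2$ and $C'_2-D'_2$) are genuinely maintained so that the limiting declarations really lie in $P$, and that one records both orientations of opposing interests and both players rather than only the representative case.
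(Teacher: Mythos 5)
Your proposal is correct and follows essentially the same route as the paper: analyzability is immediate from the definition of the space, and sufficient generality reduces to showing that the single ratio $(D'_1-B'_1)/(A'_1-C'_1)$ sweeps all of $(0,\infty)$ within the sign constraints, handled by the same four cases on which parameter is manipulated. The only cosmetic difference is that the paper writes down the explicit closed-form declaration achieving each target $q$ (e.g.\ $x=C_1+(D_1-B_1)\frac{1-q}{q}$ when $A_1$ is the free parameter), whereas you reach the same conclusion via continuity, endpoint limits, and the intermediate value theorem.
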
 
\begin{proof}  
Property (2) of a natural parameter space holds immediately in this case. To show property (1), let $q\in(0,1)$, and we require that for some $(A'_1,B'_1,C'_1,D'_1) \in P_1$ it holds that $q \cdot A'_1 + (1-q) \cdot B'_1 = q \cdot C'_1 + (1-q) \cdot D_1'$, and similarly for the column player let $p\in(0,1)$, and we require that for some $(A'_2,B'_2,C'_2,D'_2) \in P_2$ it holds that $p \cdot A'_2 + (1-p) \cdot C'_1 = p \cdot B'_1 + (1-p) \cdot D_2'$, yielding 
\begin{equation*}
p = \frac{1}{1 + \frac{A_2' - B_2'}{D_2' - C_2'}}, \quad \quad \quad q = \frac{1}{1 + \frac{A_1' - C_1'}{D_1' - B_1'}}.
\end{equation*}
Notice, that under the conditions of the lemma the values of these two expressions lie in the range $(0,1)$ since both terms in the denominators are positive. 
We next solve for {\small  $\frac{A_2' - B_2'}{D_2' - C_2'}$} and for {\small$\frac{A_1' - C_1'}{D_1' - B_1'}$}: 
\small   
\begin{equation*}
\frac{A_1' - C_1'}{D_1' - B_1'} = \frac{1-q}{q}, \quad \quad \quad \frac{A_2' - B_2'}{D_2' - C_2'} = \frac{1-p}{p}.
\end{equation*}
\normalsize

These equations can be easily satisfied by selecting declarations within the parameter space of each player, since only a single degree of freedom is required to do so. To demonstrate this for the row player (equation for $q$), assume w.l.o.g. that $A_1>C_1$ and $D_1>B_1$, denote $y = (1-q)/q >0$, and denote a single free parameter of this player by $x$, and assume that the other parameters are the true parameters of the game. We will look at the four cases where $x$ replaces each one of the parameters of the true game. If $x$ replaces $A_1$, then declaration $x = C_1 + (D_1-B_1)y$, which is in $P_1$ (since $x>C_1$, and thus it preserves best-replies), satisfies the equation for $q$. If $x$ replaces $C_1$, then declaration $x = A_1 - (D_1-B_1)y$, which is in $P_1$ (since $x<A_1$), satisfies the equation for $q$. If $x$ replaces $D_1$, then declaration $x = B_1 + (A_1-C_1)/y$, which is in $P_1$ (since $x>B_1$), satisfies the equation for $q$. If $x$ replaces $B_1$, then declaration $x = D_1 - (A_1-C_1)/y$, which is in $P_1$ (since $x<D_1$), satisfies the equation for $q$. Similar considerations apply to the declarations of the column player and the equation of $p$. Additional free parameters also allow the row player to induce $q$ as a mixed strategy (e.g., in a degenerate way of controlling only one parameter), and allow to the column player to induce $p$. 
\end{proof}
\begin{lemma}\label{thm:single_param_natural_parameter_space} 
For any opposing-interests game and choice of one of the parameters of each player, the parameter space where three parameters are fixed to the true values and the fourth parameter is arbitrary (except for exactly being equal to another parameter) is a natural parameter space.
\end{lemma}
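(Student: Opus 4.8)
The plan is to verify the two defining properties of a natural parameter space for the row player's single‑parameter space (the column player's case being symmetric). Since this space strictly contains the one treated in Lemma~\ref{thm:four-param-natural-parameter-space} --- we drop the requirement that the deviation preserve best‑replies to pure strategies and keep only the restriction that the free parameter not equal one of the three fixed ones --- the only genuinely new work is re‑examining property~(2), \emph{analyzability}, in the regime where the free parameter crosses the relevant threshold; property~(1), \emph{sufficient generality}, is if anything easier than before since we are working in a larger space.

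For property~(2), recall that in an opposing‑interests game the row player has $A_1,D_1>B_1,C_1$, so his two best‑reply differences have opposite signs, $\mathrm{sign}(A_1-C_1)=+1$ and $\mathrm{sign}(B_1-D_1)=-1$, and in particular he has no dominant strategy in the true game. Each of the four single‑parameter deviations changes exactly one of these differences: the deviations in $A_1$ and in $C_1$ affect only $A_1-C_1$, and the deviations in $B_1$ and in $D_1$ affect only $B_1-D_1$. The distinctness restriction guarantees that the affected difference stays nonzero, so for each deviation either its sign is unchanged --- and then both best‑reply signs equal those of the true game, so the first clause of property~(2) holds --- or its sign flips, and then a one‑line check shows the two signs become equal and nonzero: pushing $A_1'<C_1$ or $C_1'>A_1$ yields the sign pair $(-1,-1)$ (``bottom'' becomes dominant), and pushing $B_1'>D_1$ or $D_1'<B_1$ yields $(+1,+1)$ (``top'' becomes dominant), so the dominant‑strategy clause of property~(2) holds. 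The column player, whose opposing‑interests inequalities give the sign pair $(\mathrm{sign}(A_2-B_2),\mathrm{sign}(C_2-D_2))=(-1,+1)$, is handled identically.

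For property~(1), fix a target $q\in(0,1)$: the row‑player indifference condition $q A_1'+(1-q)B_1'=q C_1'+(1-q)D_1'$, with three of the four payoffs set to their true values, is linear in the single free parameter and has a unique solution, which one verifies lies in the admissible range exactly as in the last paragraph of the proof of Lemma~\ref{thm:four-param-natural-parameter-space} (for instance, with $A_1$ free the solution is $A_1'=C_1+(D_1-B_1)\tfrac{1-q}{q}$, which exceeds $C_1$ and hence is admissible, and the solutions in the other three cases are computed and checked the same way). Hence the row player can induce every $q\in(0,1)$, the column player can induce every $p\in(0,1)$ by the symmetric computation, and together with property~(2) the space is natural.

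I expect the main obstacle to be the case analysis for property~(2): making sure that \emph{every} sign flip produced by an unconstrained single‑parameter deviation lands in the dominant‑strategy clause rather than in some configuration the definition does not cover --- this is where the distinctness restriction is used, precisely to exclude the degenerate case in which a best‑reply difference becomes zero. The remaining bookkeeping (the eight combinations of four free‑parameter choices and two players in property~(1)) is routine and largely already carried out in the proof of the previous lemma.
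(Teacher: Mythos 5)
Your proof is correct and follows essentially the same route as the paper's: property (1) is reduced to the computation from Lemma~\ref{thm:four-param-natural-parameter-space}, and property (2) is handled by observing that a single-parameter deviation in an opposing-interests game either preserves both best-reply signs or flips exactly one of the two (initially opposite) signs, which forces a dominant strategy. Your version merely spells out the sign bookkeeping that the paper's proof states as a one-line assertion.
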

\begin{proof}  
Consider any opposing-interests $2$$\times$$2$ game with utilities $(A_1,B_1,C_1,D_1)$ to the row player and $(A_2,B_2,C_2,D_2)$ to the column player (see Figure \ref{fig:2x2-game-notations}), and assume that each player can set the declaration of one of his parameters to any arbitrary value with generic parameters, i.e., without equalities in utilities between actions. 

Property (1) of a natural parameter space can be obtained by the same argument given in the proof of Lemma \ref{thm:four-param-natural-parameter-space}   
above. Regarding property (2), for the case that the player does not change his best-replies to pure strategies in his declaration, property (2) holds directly. 
If a player does declare a single parameter that changes his best-reply to a pure strategy, then since the game is an opposing-interests game 
and only a single parameter has changed, this implies that this  player now has a dominant strategy, and so property (2) holds in this case as well.
\end{proof}

To prove Theorem \ref{thm:opposing-interests-NE} we consider the following restatement of the theorem. 
\begin{theorem*} (Restatement of Theorem \ref{thm:opposing-interests-NE}):  
In any $2$$\times$$2$ game with opposing interests and a natural parameter space, the Nash equilibrium of the meta-game is essentially unique. That is, 
there is a unique strategy profile $(p,q)\in(0,1)^2$ such that every Nash equilibrium of the meta-game induces $(p,q)$ as a unique Nash equilibrium of the agents' game. The utility of each player in that equilibrium is equal to the utility of the player when all players enter their true parameters into their agents.
\end{theorem*}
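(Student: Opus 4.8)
The plan is to reduce the meta-game — on the region of declarations that preserve the best-reply structure — to a transparent bilinear game on $(0,1)^2$, read off its unique equilibrium and the payoffs there, and then separately argue that the declarations leaving this region are never used in equilibrium.

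First I would fix, w.l.o.g., the orientation $A_1,D_1>B_1,C_1$ and $A_2,D_2<B_2,C_2$, and set $\alpha_1=A_1-C_1+D_1-B_1>0$ and $\alpha_2=A_2-B_2+D_2-C_2<0$ for the \emph{true} game. By Observation~\ref{thm:mixed_NE_profile}, whenever the declared game is fully mixed its unique CCE is the mixed profile $(p,q)$ in which $q$ is a function of the row player's declaration only and $p$ a function of the column player's declaration only; property~(1) of a natural parameter space says these functions are onto $(0,1)$ when restricted to declarations that keep best-replies (property~(2)). Substituting $(p,q)$ into Observation~\ref{mixed_strategy_payoffs} gives the true payoffs $u_1=\alpha_1pq+(B_1-D_1)p+(C_1-D_1)q+D_1$ and $u_2=\alpha_2pq+(B_2-D_2)p+(C_2-D_2)q+D_2$. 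So, restricted to best-reply-preserving declarations, the meta-game is exactly the game in which the row player picks $q\in(0,1)$, the column player picks $p\in(0,1)$, with these payoffs — note that through his declaration each player controls the \emph{other} player's equilibrium mixing probability.

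Next I would solve this bilinear game. Here $u_1$ is affine in $q$ (the row player's own lever) with slope $\alpha_1p-(D_1-C_1)$, and $u_2$ is affine in $p$ with slope $\alpha_2q+(B_2-D_2)$. Because the feasible sets are \emph{open} intervals, a player has an attained best reply only when his slope vanishes; otherwise he wants an unattainable endpoint. Hence any equilibrium must satisfy $p=\hat p:=(D_1-C_1)/\alpha_1$ and $q=\hat q:=(B_2-D_2)/(-\alpha_2)$; the orientation inequalities give $\hat p,\hat q\in(0,1)$, so by property~(1) these values are achievable, and there each player is indifferent over his own lever, so $(\hat p,\hat q)$ is the unique equilibrium of the bilinear game. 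Then I would verify the payoff identity: at $p=\hat p$ the $q$-dependent terms of $u_1$ cancel, leaving $u_1=\hat p(B_1-D_1)+D_1=D_1-(D_1-B_1)(D_1-C_1)/\alpha_1$, while the truthful mixed-equilibrium value, read off from the row player's indifference between his two rows against $q^{\mathrm{true}}=(D_1-B_1)/\alpha_1$, is $q^{\mathrm{true}}(C_1-D_1)+D_1$, the same number; symmetrically for $u_2$ at $q=\hat q$.

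The remaining and hardest step is to show that no meta-game equilibrium uses a declaration outside this regime, i.e.\ (by property~(2)) one granting a player a dominant strategy. Suppose the column player declares a dominant action; then the declared game is dominance-solvable, so by Lemma~\ref{thm:unique-CCE-in-DS-games} the agents converge to its pure equilibrium, and since the row player is best-replying the outcome is a pure profile in which the column player obtains either $A_2$ or $D_2$ (and a quick check of the four pure profiles shows that no profile arising from \emph{both} players declaring dominant strategies is an equilibrium). But the column player can deviate to a best-reply-preserving declaration: against a best-reply-preserving row declaration this yields a fully mixed game in which he can approach $\max\{q_0A_2+(1-q_0)B_2,\,q_0C_2+(1-q_0)D_2\}$ for the induced $q_0\in(0,1)$, and against a dominant-strategy row declaration it forces the pure profile in which he plays his best reply; either way he strictly beats both $A_2$ and $D_2$, since $B_2>A_2$, $C_2>D_2$ and $q_0\in(0,1)$ — a contradiction. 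The row-player case is symmetric. Hence every meta-game equilibrium keeps the declared game fully mixed, and by the previous paragraph induces the unique mixed profile $(\hat p,\hat q)$ in the agents' game and gives each player exactly his truthful payoff. I expect the bookkeeping in this last step — exhausting the one-versus-two-dominant-strategy configurations and their best replies, where the inequalities $A_i,D_i$ versus $B_i,C_i$ do the real work — together with the \emph{open interval / attained supremum} point in the bilinear step, to be the only genuine subtleties.
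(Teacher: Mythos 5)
Your proposal is correct and follows essentially the same route as the paper's proof: rule out dominant-strategy declarations at equilibrium, observe that each player's declaration controls only the opponent's equilibrium mixing probability so the meta-game reduces to a bilinear game in $(p,q)$ on $(0,1)^2$, solve via the vanishing-slope/indifference condition to get the unique profile, and verify the payoff identity against the truthful mixed equilibrium. The only differences are cosmetic: you eliminate dominant declarations by having the declaring player himself deviate (treating the both-dominant case by a separate check of the four pure profiles), whereas the paper has the opponent deviate, and you make explicit the open-interval/attained-supremum point that the paper leaves implicit when writing $\partial u_1/\partial q=\partial u_2/\partial p=0$.
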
\label{thm:opposing_interests_game_unique_meta_game_NE}
\begin{proof}
Consider any opposing-interests $2$$\times$$2$  game with a natural parameter space and with true utilities $(A_1,B_1,C_1,D_1)$ of the row player and $(A_2,B_2,C_2,D_2)$ of the column player (see Figure \ref{fig:2x2-game-notations}) and assume w.l.o.g. that $A_1>B_1,C_1$ and $D_1>B_1,C_1$ and $B_2>A_2,D_2$ and $C_2>A_2,D_2$, i.e., that the row player has higher utilities on the main diagonal of the game utility matrix and the column player has higher utilities off the diagonal.  

With a natural parameter space the parameter declarations lead to one of the following cases: (a) no player changes the signs of his best-replies, or (b) one or both players declare parameters such that their agents have dominant strategies in the game with the declared parameters. We argue that (b) cannot be the case in a Nash equilibrium of the meta-game. If, for example, the row player declares a dominant strategy to his agent, say to play the top row, then the best-reply of the column player would be to declare the right column as a dominant strategy to his agent. However, this is not a Nash equilibrium of the meta-game since in this declaration profile the row player has utility $B_1$ and so he would prefer to change strategy to any mixed strategy $0 \leq p < 1$ to obtain utility $p B_1 + (1-p)D_1 > B_1$. A similar argument holds for any other dominant strategy declaration. 

Thus, in a Nash equilibrium of the meta-game the players do not declare dominant strategies for their agents; i.e., the strategy of the row player in a Nash equilibrium of the meta-game is to declare parameters that induce $q\in(0,1)$, and the strategy of the column player in a Nash equilibrium of the meta-game is to declare parameters that induce $p\in(0,1)$. This also implies that in a Nash equilibrium of the meta-game no player reverses any of the directions of his best-replies in the game, and therefore the game with the declared parameters still has a single mixed Nash equilibrium to which the regret-minimizing agents converge, and so the utilities of the players in the meta-game can be analytically derived. The utilities of the two players when the agents converge to such a mixed strategy profile $(p,q)$ are, by Observation \ref{mixed_strategy_payoffs},
\small
\begin{equation*}
u_1 = pq(A_1 - C_1 + D_1 - B_1) + p (B_1 - D_1) + q (C_1 - D_1) + D_1
\end{equation*}
\begin{equation*} 
u_2 = pq(A_2 - B_2 + D_2 - C_2) + p (B_2 - D_2) + q (C_2 - D_2) + D_2. 
\end{equation*}
\normalsize
Since $p$ is a function of only the column player's (player 2's) declared parameters and the true game parameters, and $q$ is  a function of only the row player's (player 1's) parameters and the true game parameters, and in a natural parameter space the row player can choose parameters to induce any $q \in (0,1)$, and the column player can similarly choose parameters to induce any  $p\in(0,1)$, we can think of $q$ of the Nash equilibrium of the agents' game as (essentially) the strategy of the row player, and similarly, of $p$ as (essentially) the strategy of the column player. Thus, the condition for a Nash equilibrium in the meta-game is {\small $\frac{\partial u_1}{\partial q} = \frac{\partial u_2}{\partial p}$ = 0}.
  
Hence, the strategy profile $(p,q)$ in any Nash equilibrium of the meta-game is 
\small
\begin{equation*}
p = \frac{D_1 - C_1}{A_1 - B_1 + D_1 - C_1}, \quad \quad \quad q = \frac{D_2 - B_2}{A_2 - B_2 + D_2 - C_2}.
\end{equation*}
\normalsize

Next, we calculate the utilities in an equilibrium of the meta-game and compare them with the utilities of the Nash equilibrium of the game with the true parameters. By substituting the unique $(p,q)$ Nash equilibrium  profile of the meta-game into the equations of the utilities shown above, we get that the utilities in any Nash equilibrium of the meta-game are 
\small
\begin{equation*}
u_1 = \frac{(D_1 - C_1)(B_1-D_1)}{A_1 - C_1 + D_1 - B_1} + D_1, \quad \quad \quad u_2 = \frac{(D_2 - C_2)(B_2 - D_2)}{A_2 - B_2 + D_2 - C_2} + D_2.
\end{equation*}
\normalsize
The Nash equilibrium strategy profile of the game with the true parameters is (by Observation \ref{thm:mixed_NE_profile})
\small
\begin{equation*}
p^{NE} = \frac{D_2 - C_2}{A_2 - B_2 + D_2 - C_2}, \quad \quad \quad q^{NE}= \frac{D_1 - B_1}{A_1 - C_1 + D_1 - B_1}, 
\end{equation*}
\normalsize 
and the Nash equilibrium utilities of the game with the true parameters are 
\small 
\begin{equation*}
u_1^{NE} = \frac{(D_1 - B_1)(C_1-D_1)}{A_1 - C_1 + D_1 - B_1} + D_1, \quad \quad \quad u_2^{NE} = \frac{(D_2 - C_2)(B_2 - D_2)}{A_2 - B_2 + D_2 - C_2} + D_2.
\end{equation*}
\normalsize
That is, in a Nash equilibrium of the meta-game the players have the same utilities as in the Nash equilibrium of the game with the true declarations. 
\end{proof}

Using Theorem \ref{thm:opposing-interests-NE} we can determine necessary and sufficient conditions for the meta-game to be manipulation-free. Before continuing to the proof of Theorem 5, we consider the following technical restatement of the theorem, where the game utilities are denoted as indicated in Figure \ref{fig:2x2-game-notations}.  

\begin{theorem*}(Restatement of Theorem 5):   
An opposing-interests $2$$\times$$2$ game with a natural parameter space is manipulation-free iff $\frac{D_1 - C_1}{A_1 - B_1 + D_1 - C_1}$$=$$\frac{D_2 - C_2}{A_2 - B_2 + D_2 - C_2}$ and $\frac{D_2 - B_2}{A_2 - B_2 + D_2 - C_2}$$=$$\frac{D_1 - B_1}{A_1 - B_1 + D_1 - C_1}$.
\end{theorem*}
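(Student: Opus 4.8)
The plan is to derive this characterization directly from the structure already exposed in the proof of Theorem~\ref{thm:opposing-interests-NE}. Recall from there that (i)~in any Nash equilibrium of the meta-game neither player declares a dominant strategy for his own agent, so the declared game keeps a single fully mixed equilibrium $(p,q)$; (ii)~via the two indifference conditions, the row player's declaration controls only $q$ and the column player's only $p$; and (iii)~by Observation~\ref{mixed_strategy_payoffs}, once the opponent's induced probability is held fixed, $u_1$ is affine in $q$ and $u_2$ is affine in $p$, with $q$-coefficient $p(A_1-C_1+D_1-B_1)+(C_1-D_1)$ and $p$-coefficient $q(A_2-B_2+D_2-C_2)+(B_2-D_2)$. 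Since ``manipulation-free'' (Definition~\ref{def:manipulation-free}) means exactly that truth-telling is a Nash equilibrium of the meta-game, I would analyze when the truth-telling profile satisfies the equilibrium condition and then simply read off the two stated equalities.

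For the ``only if'' direction: when both players report truthfully, the agents converge to the true game's mixed equilibrium $(p^{NE},q^{NE})$ of Observation~\ref{thm:mixed_NE_profile}, which is interior. By property~(1) of a natural parameter space the row player can, while the column keeps truth-telling (so its induced probability stays at $p^{NE}$), move $q$ to any value in $(0,1)$; because $u_1(p^{NE},\cdot)$ is affine and $q^{NE}$ is interior, he has no profitable such deviation if and only if $u_1(p^{NE},\cdot)$ is constant, i.e.\ the $q$-coefficient vanishes, i.e.\ $p^{NE}=\tfrac{D_1-C_1}{A_1-B_1+D_1-C_1}$. Substituting $p^{NE}=\tfrac{D_2-C_2}{A_2-B_2+D_2-C_2}$ gives the first stated equality; the symmetric argument for the column player gives $q^{NE}=\tfrac{D_2-B_2}{A_2-B_2+D_2-C_2}$, which together with $q^{NE}=\tfrac{D_1-B_1}{A_1-C_1+D_1-B_1}$ and the identity $A_1-C_1+D_1-B_1=A_1-B_1+D_1-C_1$ yields the second equality.

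For the ``if'' direction I would check that these equalities make truth-telling a genuine meta-game equilibrium, i.e.\ that \emph{no} deviation is profitable. By the analyzability property, a unilateral deviation of the row player either preserves his pure best-replies --- inducing an interior outcome $(p^{NE},q')$, on which $u_1$ is constant by the first equality, hence not an improvement --- or declares a dominant strategy for his own agent, which pins the outcome to one of the two off-main-diagonal pure profiles and yields him $B_1$ or $C_1$. A short observation valid for every opposing-interests game, independent of the equalities --- namely $u_1^{NE}=q^{NE}A_1+(1-q^{NE})B_1>B_1$ and $u_1^{NE}=q^{NE}C_1+(1-q^{NE})D_1>C_1$ (and symmetrically for the column) --- shows these dominant-strategy deviations are strictly worse than truth-telling. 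Hence truth-telling is an equilibrium, and combining with the previous paragraph gives the ``iff''.

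I do not expect any step to be genuinely hard: Theorem~\ref{thm:opposing-interests-NE} and Observations~\ref{mixed_strategy_payoffs}--\ref{thm:mixed_NE_profile} do the substantive work, and the rest is rearranging fractions. The one point that needs a little care is the ``if'' direction's handling of dominant-strategy declarations: in the proof of Theorem~\ref{thm:opposing-interests-NE} these were ruled out only because they cannot occur \emph{in} a meta-game equilibrium, whereas here one must argue separately that they are not profitable \emph{deviations} from truth-telling --- which is exactly where the elementary ``the mixed-equilibrium payoff strictly beats the two bad pure outcomes'' fact is used.
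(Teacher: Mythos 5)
Your proof is correct and follows essentially the same route as the paper's: both reduce manipulation-freeness to the first-order conditions $\partial u_1/\partial q = \partial u_2/\partial p = 0$ evaluated at the true Nash profile $(p^{NE},q^{NE})$, which is exactly the paper's requirement that the truthful profile $(p^*,q^*)$ coincide with the essentially unique meta-game profile $(\tilde p,\tilde q)$ from Theorem 4. The one point where you go beyond the paper is the explicit verification that a unilateral dominant-strategy declaration is an unprofitable deviation from truth-telling (the paper leans on the Theorem 4 argument, which only excludes such declarations from equilibrium profiles, not as deviations); your observation that the mixed-equilibrium payoff strictly exceeds $B_1$ and $C_1$ closes that gap correctly.
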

Notice that the above condition exactly specifies that the Nash equilibrium $(p,q)$ of the game (see Observation \ref{thm:mixed_NE_profile}) is symmetric to permutations of player indices, as stated in the theorem. 
\begin{proof} 
A game is manipulation-free if the mixed strategy profile of the agents in a Nash equilibrium of the meta-game is identical to the Nash equilibrium profile of the game with the true parameters, where we used the fact shown in the proof of Theorem \ref{thm:opposing-interests-NE} above, that any opposing-interests $2$$\times$$2$ game with a natural parameter space has a unique mixed strategy profile $(p,q)\in(0,1)^2$ that is induced by every Nash equilibrium of the meta-game, and the fact that opposing-interests games have a unique mixed Nash equilibrium.\\   

Using Observation \ref{thm:mixed_NE_profile}, the (true-parameters) Nash equilibrium profile is 
{\small
$p^* = \frac{D_2 - C_2}{A_2 - B_2 + D_2 - C_2}$ and $q^*= \frac{D_1 - B_1}{A_1 - C_1 + D_1 - B_1}$
}. 
Next, using Observation \ref{mixed_strategy_payoffs}, we can write the utilities of the players as functions of $p,q$, as follows. 
{\small
\begin{equation*}
u_1 = pq(A_1 - C_1 + D_1 - B_1) + p (B_1 - D_1) + q (C_1 - D_1) + D_1, 
\end{equation*}
\begin{equation*} 
u_2 = pq(A_2 - B_2 + D_2 - C_2) + p (B_2 - D_2) + q (C_2 - D_2) + D_2. 
\end{equation*}
}

As in the proof of Theorem \ref{thm:opposing-interests-NE} above, since $p$ is a function of only the column player's (player 2's) declared parameters and the true game parameters, and $q$ is a function of only the row player's (player 1's) parameters and the true game parameters, and in a natural parameter space the row player can choose parameters to induce any $q \in (0,1)$, and the column player can similarly choose parameters to induce any  $p\in(0,1)$, we can think of $q$ of the Nash equilibrium of the agents' game as the strategy of the row player, and similarly, of $p$ as the strategy of the column player, and  thus the condition for a Nash equilibrium in the meta-game is {\small $\frac{\partial u_1}{\partial q} = \frac{\partial u_2}{\partial p}$ = 0}.

Hence, the strategy profile $(p,q)$ in a Nash equilibrium of the meta-game is
{\small
$\tilde{p} = \frac{D_1 - C_1}{A_1 - B_1 + D_1 - C_1}$, $\tilde{q} = \frac{D_2 - B_2}{A_2 - B_2 + D_2 - C_2}$.
}
Requiring $p^*=\tilde{p}$ and $q^*=\tilde{q}$, we obtain the condition stated in the theorem. 
\end{proof}

\subsection{Opposing-interests game example}\label{appendix:example}

Here we provide further details on the example presented in Section \ref{c2d3}. 
 
\vspace{5pt}
\noindent
\textbf{Nash equilibrium of the (true) game:} 
Using Observation \ref{thm:mixed_NE_profile} and the parameters of the game shown in Figure \ref{fig:fully_mixed_manipulation_game_matrix} (left) in the main text, we obtain that the Nash equilibrium of the game is the mixed strategy profile: $p=2/3$ and $q=2/5$. Using Observation \ref{mixed_strategy_payoffs}, we obtain that the utilities in this mixed strategy profile are $u_1=1/5$ for the row player and $u_2=1/3$ for the column player.

\vspace{5pt}
\noindent
\textbf{Nash equilibrium of the manipulated game:} For any declarations $c,d>-1$ of the players, the ``manipulated game'', i.e., the game with the declared parameters that is played by the agents, has a unique mixed Nash equilibrium, with mixed strategies (using Observation \ref{thm:mixed_NE_profile}) $p = \frac{d+1}{d+3}$ and $q = \frac{2}{c+3}$. The true expected utilities of the users can be calculated with these values of $p$ and $q$ and the true payoffs of the game 
by using Observation \ref{mixed_strategy_payoffs}, yielding 
\vspace{-5pt} 
\small
\begin{equation*}
u_1 = 5pq -2p -2q + 1 = \frac{c (1-d) + 3d + 1}{(c+3)(d+3)}, \ \ \ u_2 = 4q + 2p -6pq -1 = \frac{c (d-1) - d + 9}{(c+3)(d+3)}.
\end{equation*}
\normalsize
In this example, when the row player declares $c=1$, and the column player declares the truth, $d=3$, the utilities to the two players according to the above expressions are $u_1=1/3$ and $u_2=1/3$. Notice that in this example, this unilateral manipulation increased the utility to the player that manipulated his agent, while the utility to the other player remained the same as in the truthful declarations case. Next, the example describes a unilateral manipulation by the column player, in which the declarations are $c=2$ (i.e., the truthful declaration for the row player) and $d=1$. In this case, the utilities are $u_1=1/5$ and $u_2=2/5$. Here again, a unilateral manipulation by one player increased this player's utility without changing the other player's utility compared with the truthful declarations utilities. Yet, these two declaration profiles described above are not equilibria in the meta-game. 

\vspace{5pt}
\noindent
\textbf{Nash equilibrium of the meta-game:} The equilibrium condition for the meta-game is {\small$\frac{\partial u_1}{\partial c} = \frac{2-6d}{(c+3)^2(d+3)} = 0$}, and {\small$\frac{\partial u_2}{\partial d} = \frac{4(c-3)}{(c+3)(d+3)^2} = 0$}. Hence, the Nash equilibrium of the meta-game is the declaration profile $c=3$, $d=1/3$. As can be seen using Observation \ref{mixed_strategy_payoffs}, the utilities of the players in this distribution of play 
are the same as their utilities in the Nash equilibrium of the (true) game. As discussed in the main text, this result is generalized in 
Theorem \ref{thm:opposing-interests-NE}.

\begin{figure}[!t]
\vspace{6pt}
\centering
	\begin{subfigure}{.41\linewidth} 
	\center
		\includegraphics[width=0.94\linewidth]{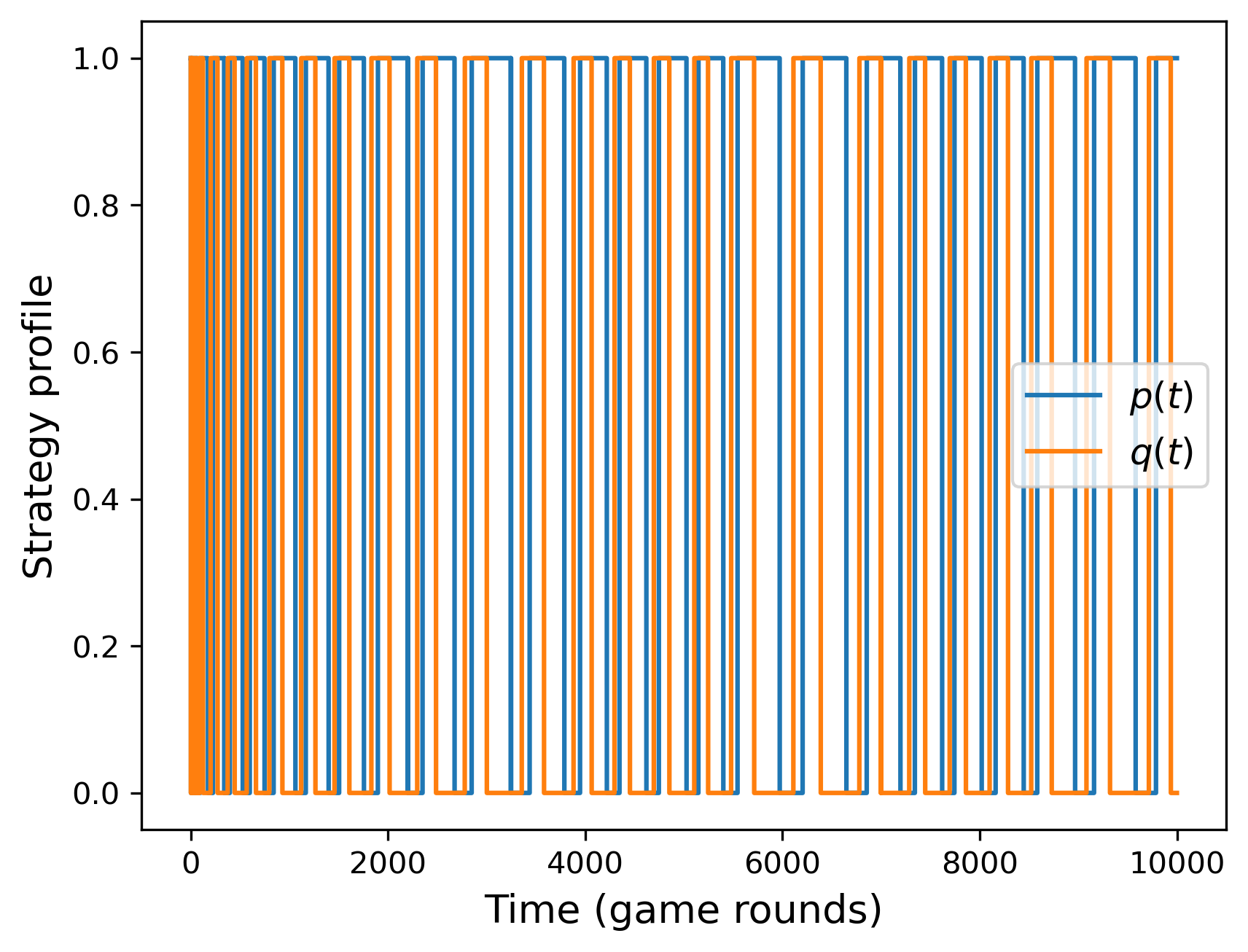}
		\caption{Bid dynamics of FTPL agents}  
		\label{fig:fully_mixed_p_q_dynamics_FTPL}
	\end{subfigure}
	\begin{subfigure}{.54\linewidth}

		\includegraphics[width=1.08\linewidth]{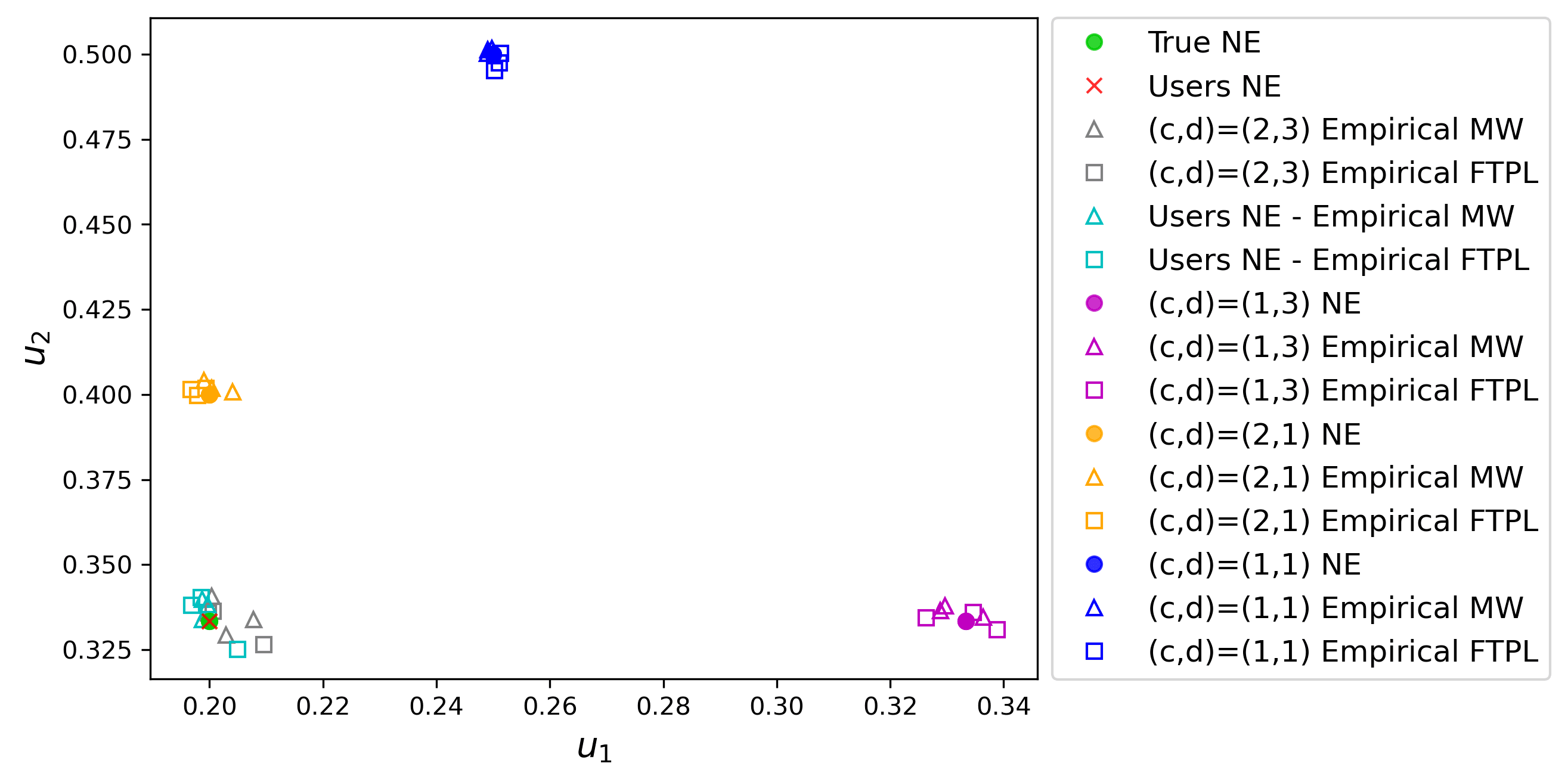}
		\caption{Average user utilities for MW and FTPL agents}  
		\label{fig:fully_mixed_payoffs_A}
	\end{subfigure}
		\hspace{5pt}
\caption{Dynamics and user utilities in the opposing-interests game example from Section \ref{c2d3}, for multiplicative-weights (MW) agents and follow the perturbed leader (FTPL) agents. (a) the dynamics of the mixed strategies $p$ and $q$ of two FTPL agents in $10$,$000$ game rounds. (b) average user utilities in $100$,$000$ game rounds for different manipulation profiles $(c,d)$, where the true values of the utilities of the users are $c=2, d=3$ (as presented in Figure 
\ref{fig:fully_mixed_manipulation_game_matrix} in the main text), compared with the theoretical Nash Equilibrium (NE) utilities. The triangular markers show results of MW agents in three simulation runs and the square markers show results of FTPL agents in three simulation runs. }
\label{fig:FTPL_and_MW}
\end{figure}

\vspace{5pt}
\noindent
\textbf{Additional simulations:} 
Figure \ref{fig:FTPL_and_MW} shows a comparison of simulations of ``follow the perturbed leader'' (FTPL) \cite{kalai2005efficient} with the multiplicative-weights (MW) algorithm \cite{arora2012multiplicative}. Figure \ref{fig:fully_mixed_p_q_dynamics_FTPL} shows an example of the dynamics of two FTPL agents playing the game example presented in Section \ref{c2d3} in the main text, and Figure \ref{fig:fully_mixed_payoffs_A} shows the utilities of the users for several manipulation profiles, including all those presented in the example. The declaration profiles are shown in the legend. It can be seen that the average utilities for the users obtained from the dynamics of the two algorithm types are similar and close to the theoretical Nash equilibrium utilities of each manipulation profile. The manipulation profile $c=d=1$ (marked in blue in the figure) demonstrates an example of a manipulation profile which leads to increased payoffs to both users compared with the truthful declarations (however, it is not an equilibrium). Additional simulations are shown in the following figures. 

Figure \ref{fig:fully_mixed_p_q_dynamics_examples} shows additional examples of the dynamics of multiplicative-weights agents in the same opposing-interests game. Figure \ref{fig:fully_mixed_p_q_dynamics_parametric} shows the learning dynamics in a parametric plot, showing the evolution of joint strategy profiles. Every point is the empirical mixed strategy profile of the agents, presented on the $p,q$ plane, where consecutive points in time are connected by a line. The left panel depicts the dynamics with the parameters used in Figure \ref{fig:fully_mixed_dynamics_and_payoffs_MW} in the main text: $\eta=0.01$ and $T=50$,$000$ game repetitions, and the right panel shows the dynamics with $\eta=0.001$ and $T=1$,$000$,$000$. 

Figure \ref{fig:fully_mixed_game_errors} shows estimates of the deviations of the time average of multiplicative-weights agents dynamics from the Nash equilibrium distribution in the opposing-interests game example. In the left panel, it can be seen that, for a fixed update step size ($\eta=0.01$), indeed these inaccuracies in the convergence of the agents to the Nash distribution (in the time average sense) decrease as $O(1/{\small\sqrt{T}})$, as theoretically expected.  The right panel shows the distribution of the mixed strategies $p,q$ of the two agents for the case of $T=50$,$000$ across $N=1$,$000$ simulation repetitions, which are narrowly centered  near the Nash equilibrium profile.

\clearpage

\begin{figure}[!t]
\centering
	\begin{subfigure}{.245\linewidth} 
	\center
		\includegraphics[width=1.02\linewidth]{fully_mixed_p_q_dynamics_10k}
		\caption{$T=10$,$000$}  
		\label{fig:fully_mixed_p_q_dynamics_10k}
	\end{subfigure}
	%
	\begin{subfigure}{.245\linewidth} 
	\center
		\includegraphics[width=1.02\linewidth]{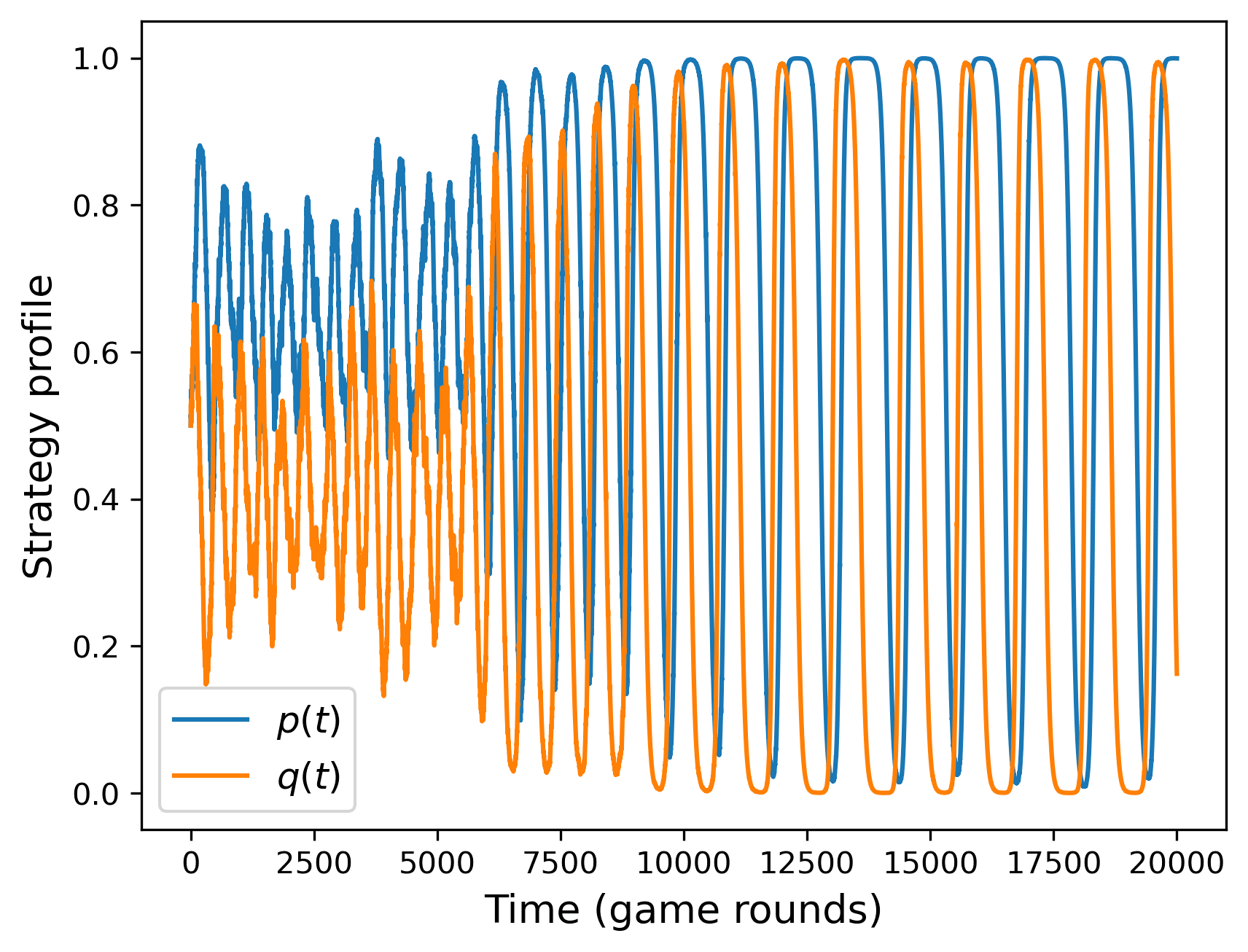}
		\caption{$T=20$,$000$}  
		\label{fig:fully_mixed_p_q_dynamics_20k}
	\end{subfigure}
	%
	%
	\begin{subfigure}{.245\linewidth} 
	\center
		\includegraphics[width=1.02\linewidth]{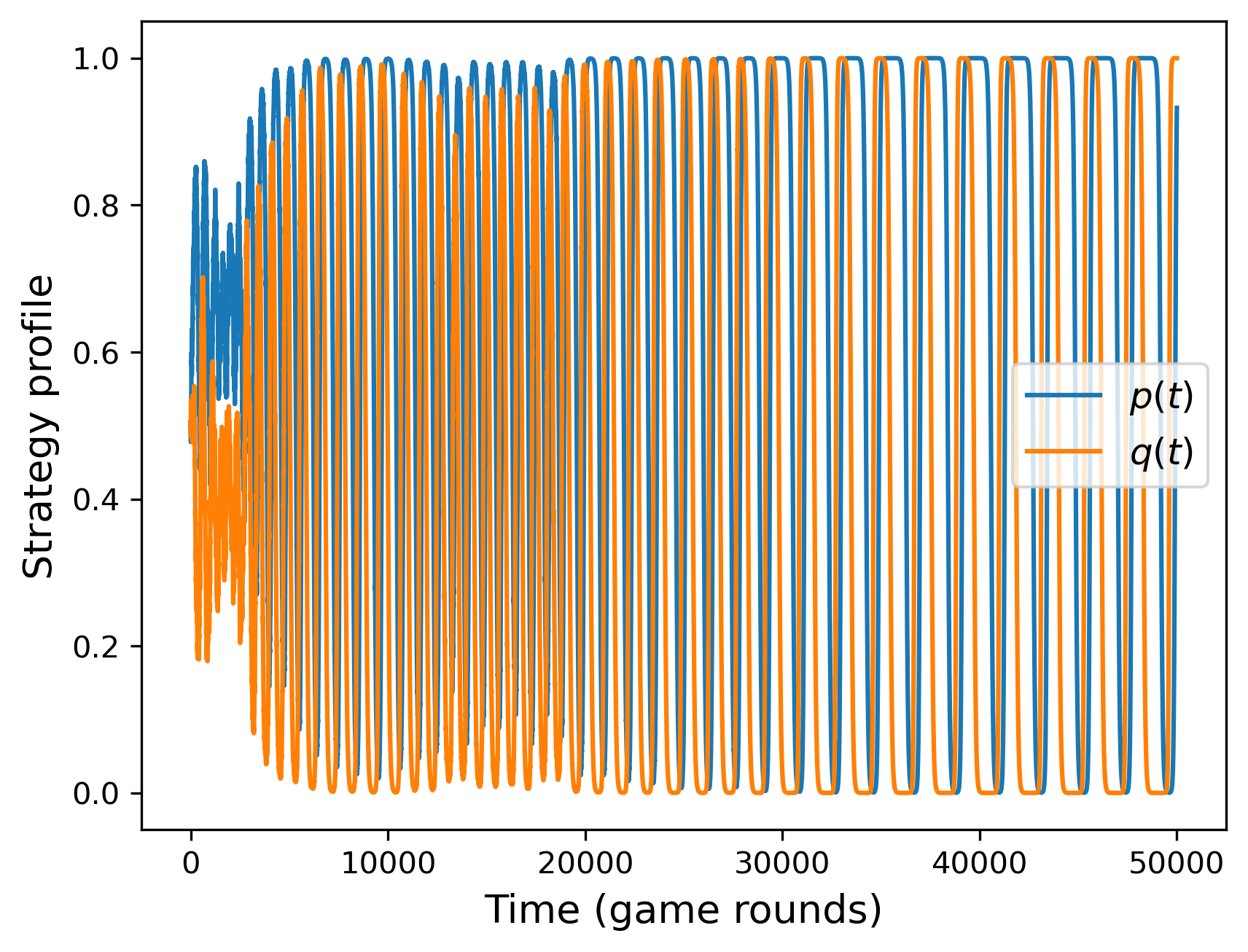}
		\caption{$T=50$,$000$}  
		\label{fig:fully_mixed_p_q_dynamics_50k}
	\end{subfigure}
	%
	%
	\begin{subfigure}{.245\linewidth} 
	\center
		\includegraphics[width=1.02\linewidth]{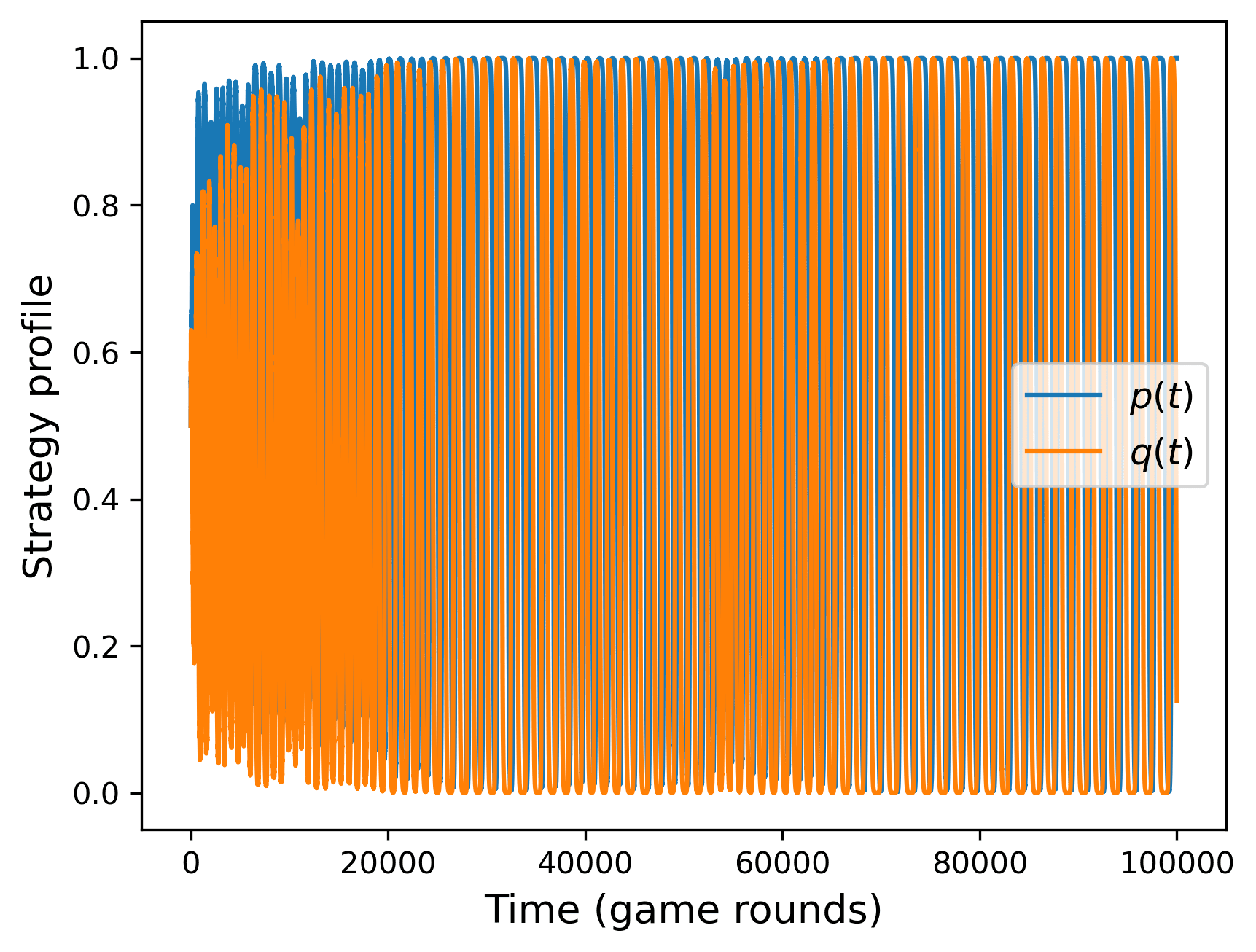}
		\caption{$T=100$,$000$}  
		\label{fig:fully_mixed_p_q_dynamics_100k}
	\end{subfigure}
	\vspace{-2pt}
\caption{
Simulations of multiplicative-weights agents in the opposing-interests game example from Section \ref{c2d3}. The figures show the dynamics of the mixed strategies across game repetitions for different simulation lengths $T$.  
}
\label{fig:fully_mixed_p_q_dynamics_examples}
\end{figure}

\begin{figure}[!ht]
\centering
	\begin{subfigure}{.35\linewidth} 
	\center
		\includegraphics[width=1.0\linewidth]{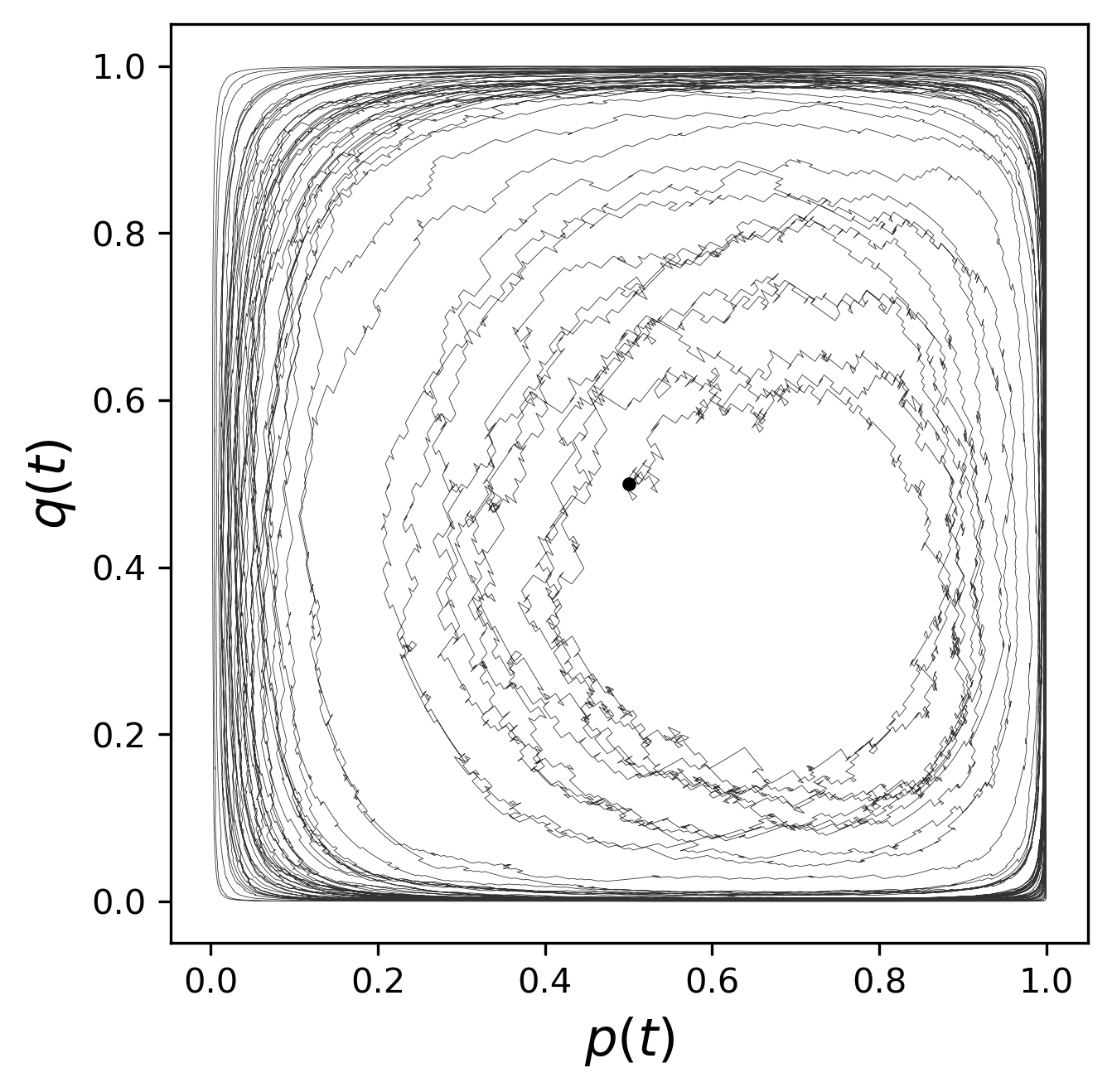}
		\caption{$T=50$,$000$, $\eta=0.01$.}  
		\label{fig:fully_mixed_p_q_parametric_eta=001_T=50K}
	\end{subfigure}
	\hspace{48pt}
	\begin{subfigure}{.35\linewidth} 
	\center
		\includegraphics[width=0.99\linewidth]{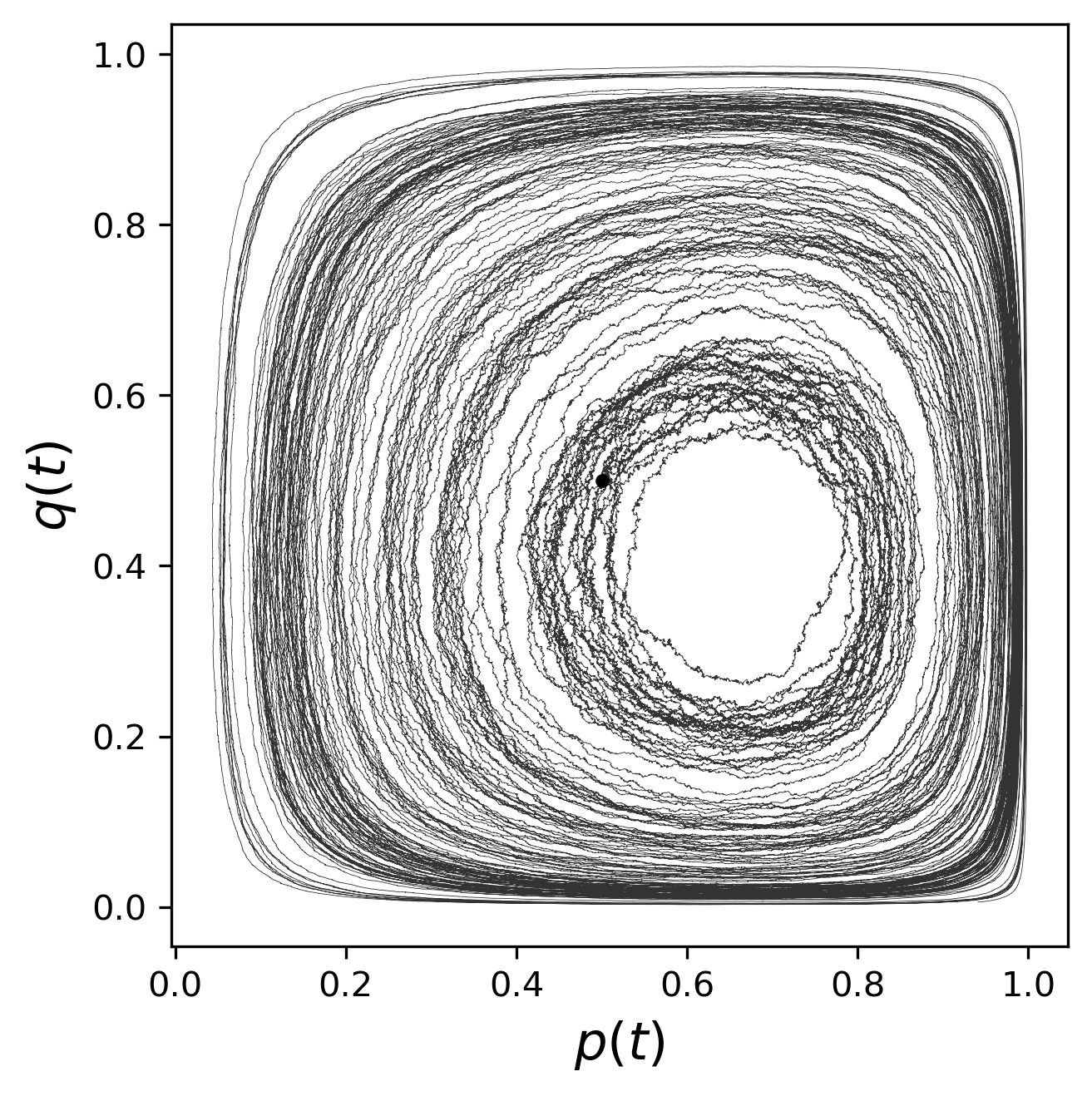}
		\caption{$T=1$,$000$,$000$, $\eta=0.001$.}  
		\label{fig:fully_mixed_p_q_parametric_eta=0001_T=1000K}
	\end{subfigure}
		\hspace{16pt}
\vspace{-5pt}
\caption{
Dynamics of multiplicative-wights agents in the opposing-interests game example from Section \ref{c2d3}. The figures show parametric plots of the mixed strategies of the two agents in time. (a) the dynamics with update step size parameter $\eta=0.01$ and $T=50$,$000$ game repetitions. (b) the dynamics with update step size $\eta=0.001$ and $T=1$,$000$,$000$ game repetitions. 
}
\label{fig:fully_mixed_p_q_dynamics_parametric}
\end{figure}

\begin{figure}[!ht]
\centering
	\begin{subfigure}{.49\linewidth} 
	\center
		\includegraphics[width=0.83\linewidth]{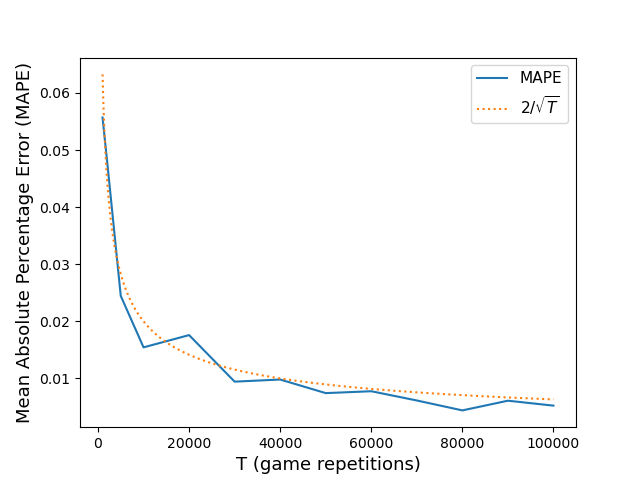}
		\caption{Mean Absolute Percentage Error (MAPE) between the empirical distribution and the NE distribution, obtained in simulations of multiplicative-weights agents playing $T$ game rounds, as a function of $T$. The full line shows the average MAPE and the dotted line shows a comparison to the function $2/{\small\sqrt{T}}$, showing that the empirical error indeed decreases as $O(1/{\small \sqrt{T}})$.}  
		\label{fig:fully_mixed_game_MAPE_vs_T}
	\end{subfigure}
	 \hspace{1pt}
	\begin{subfigure}{.49\linewidth} 
	\center
		\includegraphics[width=0.82\linewidth]{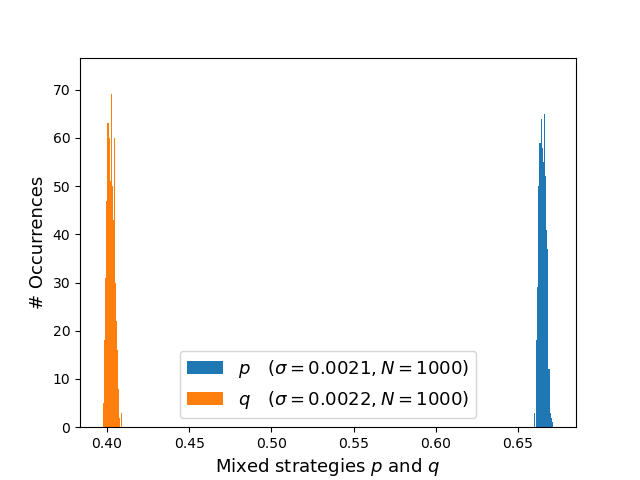}
		\vspace{-2pt}
		\caption{Histograms of the empirical average values of the action probabilities of multiplicative-weights agents in $T=50$,$000$ game rounds. The histogram of $p$ (the probability that the row agent plays the top row) is shown in blue, and of $q$ (the probability that the column agent plays the left column) is shown in orange. The legend shows the sample sizes and standard deviations. }  
		\label{fig:fully_mixed_game_p_q_histogram}
	\end{subfigure}
\caption{Estimates of the deviation of multiplicative-weights agents' time average frequency of play from convergence to the Nash equilibrium distribution in the opposing-interests game example from Section \ref{c2d3}.  
}
\label{fig:fully_mixed_game_errors}
\end{figure}


\end{document}